\def\Z{\mathbb Z}
\def\N{\mathbb N}
\def\C{\mathbb C}
\def\R{\mathbb R}
\def\Q{\mathbb Q}
\def\F{\mathbb F}
\def\pfz{\begin{proof}}
\def\pfk{\end{proof}}
\renewcommand{\S}{\Sigma}
\newcommand{\Om}{\Omega}
\def\L{\mathcal L}
\newcommand{\al}{\sim_{\mathtt{alg}}}
\renewcommand{\L }{\mathcal{L}}
\newcommand{\K }{\mathcal{K}}
\renewcommand{\bf}[1]{\mathbf{#1}}
\newcommand{\bs}[1]{\boldsymbol{#1}}
\renewcommand{\Re}{\mathtt{Re}}
\renewcommand{\Im}{\mathtt{Im}}
\newcommand{\m}[1]{\mu\, _{\Q, #1}}
\newtheorem{lem}{Lemma}[section]
\newtheorem{thm}[lem]{Theorem}
\newtheorem{prop}[lem]{Proposition}
\newtheorem{coro}[lem]{Corollary}
\newtheorem{definition}[lem]{Definition}
\newtheorem{example}[lem]{Example}
\newtheorem{pozn}[lem]{Remark}
\newtheorem{claim}[lem]{Claim}
\newtheorem{que}{Question}
\begin{document}

\begin{frontmatter}

\title{On generalized self-similarities of cut-and-project sets}


\author[fjfi]{Zuzana Mas\'akov\'a}
\ead{zuzana.masakova@fjfi.cvut.cz}

\author[fjfi]{Jan Maz\'a\v c\corref{cor1}}
\ead{hanis.mazac@gmail.com}
\cortext[cor1]{Corresponding author}

\author[fjfi]{Edita Pelantov\'a}
\ead{edita.pelantova@fjfi.cvut.cz}

\address[fjfi]{FNSPE, Czech Technical University in Prague, Trojanova 13, 120 00 Praha 2, Czech~Republic}

\begin{abstract}
Cut-and-project sets $\Sigma\subset\R^n$ represent one of the types of uniformly discrete relatively dense sets. They arise by projection of a section of a higher-dimensional lattice to a suitably oriented subspace. Cut-and-project sets find application in solid state physics as mathematical models of atomic positions in quasicrystals, the description of their symmetries is therefore of high importance. We focus on the question when a linear map $A$ on $\R^n$ is a self-similarity of a cut-and-project set $\Sigma$, i.e.\ satisfies $A\Sigma\subset\Sigma$. We characterize such mappings $A$ and provide a construction of a suitable cut-and-project set $\Sigma$. We determine minimal dimension of a lattice which permits construction of such a set $\Sigma$.
\end{abstract}

\begin{keyword}
cut-and-project scheme; self-similarity
\end{keyword}

\end{frontmatter}

\section{Introduction}

Although the principle of cut and projection was used in some sense already in the twenties in the theory of quasiperiodic functions~\cite{Bohr}, the main attention the cut-and-project method gained after the discovery of non-crystallographic materials with long range order -- the so-called quasicrystals. For an overview of the contemporary knowledge on mathematical modelling of quasicrystals, see~\cite{Aperiodic1}. Cut-and-project sets nowadays appear also in connection to non-standard numeration systems~\cite{BuFrGaKr} or symbolic dynamical systems and combinatorics on words, see e.g~\cite{LunnonPleasants}. Cut-and-project sets are recognized as suitable mathematical models of quasicrystals mainly because of two important properties: they belong to the family of Delone sets of finite type and have no translational symmetry. Recall that models of crystals are based on the notion of a lattice.
The cut-and-project method consists in considering a lattice $\L \subset \R^s$ and projections $\pi_\parallel$, $\pi_\perp$ to two orthogonal subspaces of $\R^s$, called physical and internal spaces, respectively, say of dimensions $n$, $s-n$. Taking only the $\pi_\parallel$-image of those lattice points whose $\pi_\perp$-image fits into a~chosen bounded set $\Om$, the so-called window, one gets the cut-and-project set $\Sigma(\Om)\subset \R^n$. Under certain assumptions on the projections and the window $\Omega$, the set $\Sigma(\Omega)$ displays the required properties.

The first experimentally discovered quasicrystal was a manganese-alu\-minium alloy whose dif\-fraction pattern revealed 10-fold symmetry~\cite{Shechtman}. The well known crystallographic restriction~\cite{Aperiodic1} however implies that a discrete periodic structure in dimension 2 or 3 with such symmetry cannot exist. It was recognized in~\cite{KramerNeri} that for a model of such a material one can take an aperiodic Delone set obtained by projection of a 4-dimensional lattice. For, dimension 4 is the smallest containing a lattice invariant under an isometry of order 10. A similar construction was then provided by Niizeki~\cite{Niizeki} in the algebraic context of cyclotomic fields, Moody and Patera~\cite{icosians} with the help of quaternions, and Barache et al.~\cite{Barache} using root lattices of finite Coxeter groups. For a cut-and-project set with a symmetry of order $r$, one needs to start with a lattice $\L$ having such a symmetry. It was derived in~\cite{Hiller} that minimal dimension $s$ in which such a lattice exists is given by $\phi_a(r)$ where $\phi_a$ is the additive version of the Euler totient function $\phi$.
A planar cut-and-project set revealing symmetry of order $r$ must have been constructed by projection of a lattice $\L$ in dimension at least $\phi(r)$, see~\cite{BJS}.

Construction of a cut-and-project set with predefined symmetries (isometries $R$ such that the discrete set $\Sigma$ satisfies $R\Sigma=\Sigma$) is also the subject of~\cite{Pleas}. For a given finite isometry group $G$, Pleasants shows that there is a cut-and-project set $\Sigma(\Omega)$ such that every $g\in G$ is a symmetry of $\Sigma(\Omega)$. Similar objective is in focus of Cotfas~\cite{Cotfas-clusters} who constructs a cut-and-project set as a quasi-periodic packing of interpenetrating copies of a $G$-cluster, i.e.\ a union of orbits of the finite symmetry group $G$.

Besides symmetries, quasicrystal models usually reveal all kinds of self-similarities, i.e.\ affine mappings $A$, under which the model is closed, $A\Sigma\subset\Sigma$. Such self-similarities have been studied, in particular, in the case when the affine mapping is just a scaling~\cite{BermanMoody,inflcenters,Cotfas-Gmodelsetselfsimilarities}. Lagarias~\cite{Lagarias} gave conditions on scaling factors of Delone sets of finite type which include cut-and-project sets. Pleasants~\cite{Pleas} studies a slightly different concept, the so-called flation property.

For us, a self-similarity of a cut-and-project set is any linear  map $A$ of $\Sigma(\Omega)$ into itself.
We mainly focus on the following question:

\begin{que}\label{q1}
For a given linear mapping $A$ decide if there exists a suitable cut-and-project set $\Sigma(\Omega)$ such that $A\Sigma(\Omega) \subset \Sigma(\Omega)$ and if so, provide a construction.
\end{que}

Answering this question naturally implies finding suitable lattice $\L\subset \R^s$ and projections $\pi_\parallel$ and $\pi_\perp$ such that $A\pi_\parallel(\L) \subset \pi_\parallel(\L)$
and there is a linear action also in the internal space, cf.\ Question~\ref{q2}. We say that such $A$ is a self-similarity of the cut-and-project scheme. We prove that for a linear mapping $A$ diagonalisable over $\C$ there exists a~generic cut-and-project scheme $\Lambda=(\L\subset\R^s,\R^n)$ with self-similarity
$A$ if and only if the spectrum of $A$ is composed of algebraic integers (Theorem~\ref{t:mindimensionnotmin}). We provide a construction yielding minimal dimension $s$ possible (Theorem~\ref{t:mindimension}).
By this, we extend the crystallographic restriction determining minimal dimension for obtaining $r$-fold symmetry in a planar quasicrystal.

Consequently, we formulate an exhaustive answer to Question~\ref{q1} for non-singular mappings $A$ diagonalizable over $\C$, see Theorem \ref{thm:npp_for_cps}. We also determine the minimal dimension of a lattice $\L$ allowing construction of a cut-and-project set with self-similarity $A$. This generalizes the known results by Lagarias~\cite{Lagarias}, see Section~\ref{sec:comments}.

\section{Preliminaries}

This section aims to recall necessary basic notions from the algebraic number theory and linear algebra. Readers being familiar with these parts of mathematics may freely skip this section.

A complex number $\eta$ is an algebraic number if there exists a monic polynomial $f \in \Q[X]$ such that $f(\eta) = 0$.
If $f$ is of minimal degree, it is called the minimal polynomial of $\eta$ and its degree is the degree of $\eta$. The other roots of the minimal polynomial $f$ are algebraic conjugates of $\eta$.
If $\nu, \eta$ are algebraic conjugates, we write $\nu\al\eta$.
If $\eta$ is a root of a monic polynomial with integer coefficients, then it is an algebraic integer. 

%
%

For a set of complex numbers $t_1,\dots,t_m$ we denote by $\Q(t_1,\dots,t_m)$ the minimal subfield of $\C$ containing $t_1,\dots,t_m$. If $t_j$ are algebraic for every $j$, it is known that there exists an algebraic number $\alpha$ such that $\Q(t_1,\dots,t_m)=\Q(\alpha)$. If $\alpha$ is such an algebraic number of degree $d$, $\Q(\alpha)$ is said to be an algebraic number field or algebraic field extension of $\Q$ by $\alpha$ of degree $d$. The degree $d$ of $\Q(\alpha)$ is the dimension of $\Q(\alpha)$ as a vector space over $\Q$, thus
$$
\Q(\alpha)=\{c_0 + c_1 \alpha + \dots + c_{d-1}\alpha ^{d-1} : c_0,\dots,c_{d-1}\in\Q\}.
$$
If $\beta\al\alpha$, then the mapping $\psi : \Q(\alpha) \to \Q(\beta) $ defined by
	\[
\psi(c_0 + c_1 \alpha + \dots + c_{d-1}\alpha ^{d-1}) = c_0 + c_1 \beta + \dots + c_{d-1}\beta ^{d-1}
   \]	
is a field isomorphism.

Let $f\in \Q[X]$ be a polynomial. The field extension $\F$ of $\Q$ is the splitting field of the polynomial $f$ if $\F$ contains all roots of the polynomial $f$.
Given a polynomial $f(X) = X^d - \sum _{i=0}^{d-1} a_i X^i \in \Q[X]$ we often make use of its companion matrix $C_f \in \Q^{d\times d}$, namely
$$
C_f=\begin{pmatrix}
0&1&\cdots&0\\[-1mm]
\vdots&\vdots&\ddots&\vdots\\
0&0&\cdots&1\\
a_0&a_1&\cdots&a_{d\!-\!1}
\end{pmatrix}\,.
$$

In this paper we heavily use the matrix formalism for our study. Recall the Kronecker (or tensor) product $A\otimes B$ of matrices $A\in \C^{n \times m}$ and $B \in \C^{p\times q}$ defined as a matrix of dimension $np \times mq$
\[
A\otimes B = \begin{pmatrix}
a_{11} B & \cdots & a_{1m}B \\
\vdots & \ddots & \vdots \\
a_{n1} B & \cdots & a_{nm}B
\end{pmatrix}.
\]

Recall (see e.g.~\cite{Fiedler}) that every complex matrix is similar to a matrix in Jordan normal form. If the matrix has real entries, we can modify it into real Jordan form. In particular, for every $T\in\R^{n\times n}$, there exists non-singular $W\in\R^{n\times n}$ such that $W^{-1}TW= {\bigoplus_{k}J_k}$ where $J_k$ is a real Jordan block. A real Jordan block corresponding to the eigenvalue $\lambda$ of $T$ is of the form
$$
J(\lambda)=\begin{pmatrix}
R(\lambda)&I&& \\
&R(\lambda)&\ddots& \\
&&\ddots&I \\
&&&R(\lambda)
\end{pmatrix}\ \text{with}\
R(\lambda)=\begin{cases}
\ (\lambda) & \mbox{if } \lambda\in\R, \\[2mm]
\left(\begin{smallmatrix}
\Re\lambda & \Im\lambda \\[1mm]
-\Im\lambda & \Re\lambda
\end{smallmatrix}\right) & \mbox{if } \lambda\in\C\setminus\R.
\end{cases}
$$
Note that $I$ is a unit matrix of order 1 or 2, according to the order of $R(\lambda)$.
In cases that the real matrix $T$ is diagonalizable over $\C$, the real Jordan form  of $T$ reduces to the classical quasidiagonal form.

For rational matrices $C$ several different rational forms are used. We use a rational Jordan form  which reflects decomposition into the maximal number of cyclic subspaces (spanned by some vector and its repeated images under $C$) over $\Q$ of the matrix $C$. In particular, for every $C\in\Q^{s\times s}$ there exists a non-singular matrix $W\in\Q^{s\times s}$ such that
$W^{-1}CW= {\bigoplus_{k}J_k}$ where the rational Jordan blocks correspond to polynomials $f$ irreducible over $\Q$ which are factors of the characteristic polynomial of $C$. The blocks are of the form
$$
J=\begin{pmatrix}
C_f&I&& \\
&C_f&\ddots& \\
&&\ddots&I \\
&&&C_f
\end{pmatrix}
$$
where the block $C_f$ denotes the companion matrix of $f$.
The unit matrix $I$ in the Jordan block $J$ is of order $d$.

Given a matrix $T\in\C^{d\times d}$, we denote its spectrum by $\sigma(T)$ and its spectral radius by $\varrho(T)$. For any complex number $\lambda$ we denote by $m_T(\lambda)$ its algebraic multiplicity as an eigenvalue of $T$. Note that $m_T(\lambda)=0$ if $\lambda\notin\sigma(T)$.

For a matrix $T\in\C^{d\times d}$ we define its minimal polynomial $\mu_T$ as the monic polynomial in $\C[X]$ of the smallest degree such that $\mu_T(T)$ is the zero matrix. According to the Hamilton-Cayley theorem, the minimal polynomial
of $T$ is of degree at most $d$ and its roots are the eigenvalues of $T$. The polynomial $\mu_T$ can be calculated using the Jordan canonical form of $T$, see~\cite{Fiedler}.
It is easily seen that every polynomial which is annihilated by $T$ is divisible by $\mu_T$.

If the spectrum of the matrix $T$ is composed of algebraic numbers we also define the minimal polynomial $\m{T}$ of $T$ over $\Q$ as the monic polynomial in $\Q[X]$ of the smallest degree such that $\m{T}(T)=O$. The degree of $\m{T}$ may be greater than $d$. From the construction of the minimal polynomial using the Smith normal form it follows that for a matrix $C\in\Q^{d\times d}$, we have $\mu_C=\m{C}$. The following properties can be easily shown:
\begin{itemize}
\item every eigenvalue of $T$ is a root of $\m{T}$,
\item every root of $\m{T}$ is an algebraic conjugate of an eigenvalue of $T$, and
\item if $T$ is diagonalizable over $\C$, then $\m{T}$ is a product of distinct monic polynomials in $\Q[X]$ irreducible over $\Q$.
\end{itemize}

\section{Cut-and-project schemes and sets}\label{sec:prelim}

Let $\L \subset \R^{s}$ be an $s$-dimensional lattice, i.e. $\L=\{\sum_{j=1}^sc_j\bs{l}_j:c_j\in\Z\}=\mathrm{span}_{\Z}\{\bs{l}_1,\dots,\bs{l}_s\}$ for a basis $\bs{l}_1,\dots,\bs{l}_s$ of $\R^s$.
Consider two subspaces $V_\parallel\subset\R^s$ (physical space), $V_\perp\subset\R^s$ (internal space), with $V_\parallel\oplus V_\perp=\R^s$, $\dim V_\parallel=n\geq 1$, $\dim V_\perp=s-n\geq 1$, and projections $\pi_\parallel$, $\pi_\perp$ to these subspaces. In order to simplify the formalism, we fix a basis of $\R^s$ so that writing $\bs{x}=(x_1,\dots,x_s)^\top$ we have
$$
\pi_\parallel (\bs{x}) = (x_1,\dots,x_n)^\top,\qquad \pi_\perp (\bs{x}) = (x_{n+1},\dots,x_s)^\top.
$$
In such a way, the subspaces $V_\parallel$, $V_\perp$ are just $\R^n$, $\R^{s-n}$, respectively. A cut-and-project scheme in $\R^s$ is given by the lattice $\L \subset \R^{s}$ together with the projections $\pi_\parallel,\pi_\perp$ defined above. The pair $\Lambda = (\L\subset\R^{s}, \R^n)$ is called a cut-and-project scheme.


\begin{definition}\label{de:capscheme}
	We say that a cut-and-project scheme $\Lambda=(\L\subset\R^{s}, \R^n)$ is
	\begin{itemize}
		\item[(i)] non-degenerated if $\pi_\parallel$ restricted to $\L$ is injective,
		\item[(ii)] aperiodic if $\pi_\perp$ restricted to $\L$ is injective,
		\item[(iii)] irreducible if $\pi_\perp(\L)$ is dense in $\R^{s-n}$.
	\end{itemize}
	A  cut-and-project scheme is called generic if all (i)--(iii) are satisfied.
\end{definition}

The images of the lattice $\L$ under the projections $\pi_\parallel$, $\pi_\perp$ are $\Z$-modules in $\R^n$, $\R^{s-n}$ respectively. Non-degeneracy and aperiodicity of the scheme ensure that there is a module isomorphism $\ast$ between $\pi_\parallel(\L)$ and $\pi_\perp(\L)$, namely $\ast=\pi_\perp\circ\pi_\parallel^{-1}$, which is usually called the star map.
A cut-and-project set is constructed from a generic scheme as a suitable subset of the $\Z$-module $\pi_\parallel(\L)$. The choice of the subset is directed by a bounded window
in the internal space. In order to guarantee the cut-and-project set to have reasonable properties, it is usual to take for the window a bounded set $\Omega$ whose closure $\overline{\Omega}$ is equal to the closure of its non-empty interior $\Omega^\circ$, see~\cite{Pleas}.

\begin{definition}\label{de:capset}
	Let $(\L\subset\R^s , \R^n)$ be a generic cut-and-project scheme. Given a bounded set $\Omega\subset\R^{n-s}$ such that $\overline{\Omega^\circ}=\overline{\Omega}\neq\emptyset$, ee define the cut-and-project set $\Sigma(\Omega)$ with acceptance window $\Omega$ by
	\begin{equation}
	\label{eq:cut-and-project-definice}
	\S(\Om):= \left\{ \pi_\parallel(\bs{l}) : \bs{l} \in \L \text{ and } \pi_\perp(\bs{l}) \in \Om\right\} = \{\bs{x}\in\pi_\parallel(\L): \bs{x}^\ast\in\Omega\}.
	\end{equation}
\end{definition}
The set $\Sigma(\Omega)=\Sigma_{\L}(\Omega)$ depends also on the lattice $\L$, although we usually omit the index, as the lattice is clear from the context.
With the notation of star map, one can  write
		\begin{equation}
		\label{eq:dk1}
		\left(\S_\L(\Om)\right)^\ast = \pi_\perp(\L) \cap \Om,
		\end{equation}
		and consequently
		\begin{equation}
		\label{eq:dk2}
		\overline{\left(\S_\L(\Om)\right)^\ast} = \overline{\Om},
		\end{equation}
where we use that $\pi_\perp(\L)$ is dense in $\R^{s-n}$ and the window $\Omega$ satisfies $\overline{\Om^\circ} = \overline{\Om}$.

The assumptions in Definition~\ref{de:capset} imply that $\Sigma(\Omega)$ satisfies certain properties. Non-degeneracy and irreducibility (cd.\ Definition~\ref{de:capscheme}) together with the requirements on the acceptance window $\Omega$ imply that $\Sigma(\Omega)$ is a Delone set of finite local complexity~\cite[Proposition~7.5]{Aperiodic1}.
Let us recall that a point set $\Sigma\subset\R^n$ is Delone, if there exist two radii $0<r,R<+\infty$ such that
(i) every ball of radius $R$ in $\R^n$ contains at least one element of $\Sigma$ (relative density), and
(ii) every ball of radius $r$ in $\R^n$ contains at most one element of $\Sigma$ (uniform discreteness).
A Delone set $\Sigma\subset\R^n$ is of finite local complexity if for each $\rho>0$ the set $(\Sigma-\Sigma)\cap B(\bs{z},\rho)$ is finite.

Imposing aperiodicity on the cut-and-project scheme we obtain $\Sigma(\Omega)$ which has no translational symmetry, i.e. $\bs{t}+\Sigma(\Omega)\subset \Sigma(\Omega)$  implies $\bs{t}=\bs{0}$. For, if a vector $\bs{l}\in\L$ satisfies $\pi_\perp(\bs{l})=\bs{0}$, then $\pi_\parallel(\bs{l})+\Sigma(\Omega)=\Sigma(\Omega)$. Since $\{\bs{l}\in\L : \pi_\perp(\bs{l})=\bs{0}\}$ is a sublattice of $\L$, we derive that if a set $\Sigma(\Omega)$ of~\eqref{eq:cut-and-project-definice} is constructed using a scheme which is not aperiodic (i.e.\ which does not satisfy (ii) in Definition~\ref{de:capscheme}). In our considerations, we avoid such a situation.

It can also be derived~\cite{Lagarias} that $\Sigma(\Omega)$ is a finitely generated set, which means that its $\Z$-span $\mathrm{span}_\Z\{\Sigma(\Omega)\}$ is a finitely generated $\Z$-module. Moreover, $\mathrm{span}_\Z\{\Sigma(\Omega)\}=\pi_\parallel(\L)$, thus the $\Z$-module is of rank $s$.

Throughout the paper, it turns suitable to use the following matrix formalism for the above definitions.
For the $j$-th column of a matrix $T$ we write $T_{\bullet j}$, the $j$-th row is denoted by $T_{j\bullet}$.
We write the columns  of vector generators $\bs{l}_j$ of the lattice $\L$ into a (non-singular) matrix $L\in\R^{s\times s}$, i.e. $L_{\bullet j}=\bs{l}_j$ for $j=1,\dots,s$. The lattice $\L$ thus can be written as $\L=\{L\bs{r}: \bs{r}\in\Z^s\}=\mathrm{span}_\Z\{L_{\bullet 1},\dots,L_{\bullet s}\}$. We say that $L$ is a matrix associated to the lattice $\L$. Note that the associated matrix $L$ is not unique. For, the choice of the lattice base is given only up to a transformation by an integer matrix of determinant $\pm1$.

The action of projections $\pi_\parallel$, $\pi_\perp$ can also be written in a matrix form. Given two matrices $T_1,T_2$ with the same number of rows, we write $(T_1,T_2)$ for the matrix arising by putting the columns of $T_2$ after the columns of $T_1$. Similarly, we write $\binom{T_1}{T_2}$ for matrices $T_1,T_2$ with the same number of columns.

With this notation, the action of $\pi_\parallel$, $\pi_\perp$ is written as application of the $(n\times s)$-matrix $(I_n,O)$, and $\left((s-n)\times s\right)$-matrix $(O, I_{s-n})$, respectively, i.e. we have
$$
\begin{aligned}
\pi_\parallel(\L)&=(I_n,O)\L=\{(I_n,O)L\bs{r}:\bs{r}\in\Z^s\},\\
\pi_\perp(\L)&=(O,I_{s-n})\L=\{(O,I_{s-n})L\bs{r}:\bs{r}\in\Z^s\},
\end{aligned}
$$
where $I_j$ stands for the identity matrix of order $j$ and $O$ is the zero matrix of suitable dimension.

It follows from results and criterions (the so-called V-condition, W-condition) derived by Pleasants \cite{Pleas} that one can rewrite terms as irreducibility, aperiodicity and non-degeneracy using the associated matrix $L$ as follows:

\begin{prop}\label{p:pleasants}
	Let $\Lambda = (\L\subset \R^s, \R^n)$ be a cut-and-project scheme. Let $L \in \R^{s\times s}$ be the matrix associated to $\L$. Then $\Lambda$ is
	\begin{itemize}
		\item[(i)] non-degenerated if and only if for all $\bs{x}\in \R^{s-n}$ it holds that \[\begin{pmatrix}
		\bs{0} \\ \bs{x}
		\end{pmatrix} \in \L \Rightarrow \bs{x} = \bs{0}, \]
		\item[(ii)] aperiodic if and only if for all $\bs{x}\in \R^{n}$ it holds that \[\begin{pmatrix}
		\bs{x} \\ \bs{0}
		\end{pmatrix} \in \L \Rightarrow \bs{x} = \bs{0}. \]
        \item[(iii)] irreducible if and only if for all  $S \in \Z^{s\times (s-1)}$ there exists $\bs{x}\in \R^n$ such that
		\[\begin{pmatrix}
		\bs{x} \\ \bs{0} \end{pmatrix} \notin \left\{ LS\bs{u}: \bs{u}\in \R^{s-1} \right\},
		\]
	\end{itemize}
\end{prop}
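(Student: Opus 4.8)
The plan is to handle parts (i) and (ii) by a one-line kernel computation, and to reduce the density statement (iii) to a Kronecker-type duality criterion for finitely generated subgroups of $\R^{s-n}$.

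For (i) and (ii) I would use that the restriction $\pi_\parallel|_\L$ is a homomorphism of abelian groups, hence injective if and only if $\ker\pi_\parallel\cap\L=\{\bs 0\}$. From the coordinate description $\pi_\parallel(\bs x)=(x_1,\dots,x_n)^\top$ one reads off $\ker\pi_\parallel=\bigl\{\binom{\bs 0}{\bs x}:\bs x\in\R^{s-n}\bigr\}$, so the condition $\ker\pi_\parallel\cap\L=\{\bs 0\}$ is literally the implication displayed in (i). The symmetric computation with $\ker\pi_\perp=\bigl\{\binom{\bs x}{\bs 0}:\bs x\in\R^{n}\bigr\}$ gives (ii). These two parts are essentially bookkeeping.

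For (iii) I would first write $M=(O,I_{s-n})L$, so that $\pi_\perp(\L)=M\Z^s$ is a finitely generated subgroup of $\R^{s-n}$, and invoke the standard criterion: $M\Z^s$ is dense in $\R^{s-n}$ if and only if the only $\bs c\in\R^{s-n}$ with $M^\top\bs c\in\Z^s$ is $\bs c=\bs 0$. The easy direction is that a nonzero such $\bs c$ forces $M\Z^s$ into the proper closed subgroup $\{\bs y:\bs c^\top\bs y\in\Z\}$; the converse uses that a non-dense closure is a proper closed subgroup of $\R^{s-n}$, hence sits inside the kernel of a nontrivial continuous character $\R^{s-n}\to\R/\Z$, which furnishes the witnessing $\bs c$. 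Since $M^\top\bs c=L^\top\binom{\bs 0}{\bs c}$, irreducibility \emph{fails} exactly when there is a nonzero $\bs c$ with $L^\top\binom{\bs 0}{\bs c}\in\Z^s$. On the other side, the right-hand condition of (iii) fails exactly when there is $S\in\Z^{s\times(s-1)}$ with $\ker\pi_\perp\subseteq L\cdot\{S\bs u:\bs u\in\R^{s-1}\}$, i.e.\ $L^{-1}\ker\pi_\perp\subseteq\mathrm{im}_\R(S)$. The subspaces $\mathrm{im}_\R(S)$ with $S$ integral of $s-1$ columns are precisely the proper rational subspaces of $\R^s$, and a proper rational subspace lies in a rational hyperplane $\{\bs y:\bs a^\top\bs y=0\}$ with $\bs a\in\Z^s\setminus\{\bs 0\}$; such a hyperplane contains $L^{-1}\ker\pi_\perp$ iff $\bs a^\top L^{-1}=(\bs 0^\top,\bs d^\top)$, i.e.\ $\bs a=L^\top\binom{\bs 0}{\bs d}$. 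Matching the two negations (take $\bs c=\bs d$, noting that invertibility of $L$ makes $L^\top\binom{\bs 0}{\bs d}$ nonzero exactly when $\bs d\neq\bs 0$) yields (iii).

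The routine parts are (i), (ii), and the linear-algebra translation in the last step. The delicate point I expect to be the main obstacle is the density criterion for $M\Z^s$: one must invoke the structure of closed subgroups of $\R^{s-n}$ (equivalently, annihilator duality) to turn ``not dense'' into the existence of a single integral witness $\bs c$, and then transport the integrality condition $M^\top\bs c\in\Z^s$ faithfully through $L^\top$. Some care is also needed to justify that containment in \emph{some} proper rational subspace is equivalent to containment in a rational \emph{hyperplane} and that its normal can be chosen in $\Z^s$ rather than merely in $\R^s$.
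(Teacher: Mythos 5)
Your proof is correct, but for the substantive part (iii) it takes a genuinely different route from the paper. Parts (i) and (ii) coincide: the paper also dismisses them as immediate kernel computations. For (iii), however, the paper does no analysis at all --- it simply cites Pleasants' Corollary 2.12, which asserts that irreducibility (density of $\pi_\perp(\L)$ in $\R^{s-n}$) is equivalent to the physical space not lying in any hyperplane generated by lattice vectors, and then remarks that this is what Item (iii) expresses. You instead prove the density criterion from scratch: writing $\pi_\perp(\L)=M\Z^s$ with $M=(O,I_{s-n})L$, you invoke the annihilator/character duality for closed subgroups of $\R^{s-n}$ to show density fails exactly when some nonzero $\bs{c}$ satisfies $M^\top\bs{c}=L^\top\binom{\bs{0}}{\bs{c}}\in\Z^s$, and then carry out the linear-algebra translation (proper rational subspace $\subseteq$ rational hyperplane with integer normal, and invertibility of $L^\top$ to match the two negations). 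This translation step is exactly the content the paper glosses over with ``this fact is expressed in Item (iii)'', so your write-up makes explicit both halves of what the paper outsources or leaves implicit. What each approach buys: the paper's citation is shorter and keeps the proposition as a reformulation of known results; your argument is self-contained, exposes the Kronecker-type duality actually underlying Pleasants' criterion, and would survive in a setting where one does not wish to import his V/W-condition machinery. All the steps you flag as delicate (structure of closed subgroups, choosing an integral normal vector) are standard and your sketches of them are sound, so I see no gap.
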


\begin{proof}
Items (i) and (ii) are obvious from the definition of non-degeneracy and aperiodicity. In order to prove irreducibility,
we use the statement of Pleasants~\cite[Corollary 2.12]{Pleas}, who shows that irreducibility of the scheme is equivalent to the fact that the physical space is not contained in
any hyperplane generated by lattice vectors. This fact is expressed in Item (iii).
\end{proof}

\section{Self-similarities of a cut-and-project scheme}\label{sec:4}

The purpose of this article is to study generalized self-similarities of cut-and-project sets. By a~self-similarity of a set $\Sigma\subset\R^n$ we understand a linear mapping $A:\R^n\to\R^n$ such that $A\Sigma\subset\Sigma$. Let us recall the main Question~\ref{q1} which we plan to solve.
Given a linear mapping $A:\R^n\to\R^n$, our aim is to decide whether there exists a cut-and-project set $\Sigma(\Omega)\subset\R^n$ with $A$ as its self-similarity.
If yes, then to determine the minimal dimension $s$ of the corresponding cut-and-project scheme and describe the construction.

As the first step, realize that self-similarity of the cut-and-project set $\Sigma(\Omega)\subset\R^n$ implies self-similarity of the corresponding $\Z$-module $\pi_\parallel(\L)$. We have the following statement.

\begin{prop}\label{p:q1->q2}
	Let $\Sigma(\Omega)\subset\R^n$ be a cut-and-project set constructed from a generic cut-and-project scheme $(\L\subset\R^s,\R^n)$. Let $L\in\R^{s\times s}$ be a matrix associated to the lattice $\L$. If $A:\R^n\to\R^n$ is a linear mapping such that $A\Sigma(\Omega)\subset\Sigma(\Omega)$, then $A\pi_\parallel(\L)\subset\pi_\parallel(\L)$ and, moreover, there exist uniquely defined matrices $C\in \Z^{s \times s}$ and $B \in \R^{ (s-n) \times (s-n)}$ such that
	\begin{equation}\label{eq:ABC}
	\begin{pmatrix}
	A & O \\
	O & B
	\end{pmatrix} L = LC.
	\end{equation}
\end{prop}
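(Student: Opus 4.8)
The plan is to read the single matrix identity~\eqref{eq:ABC} one row-block at a time. Write $L=\binom{L_\parallel}{L_\perp}$ with $L_\parallel=(I_n,O)L\in\R^{n\times s}$ and $L_\perp=(O,I_{s-n})L\in\R^{(s-n)\times s}$, so that the columns of $L_\parallel$ generate the $\Z$-module $\pi_\parallel(\L)$ and the columns of $L_\perp$ generate $\pi_\perp(\L)$. In this notation~\eqref{eq:ABC} is equivalent to the two \emph{independent} conditions $AL_\parallel=L_\parallel C$ (top block) and $BL_\perp=L_\perp C$ (bottom block), so the task reduces to producing $C\in\Z^{s\times s}$ from the first and $B\in\R^{(s-n)\times(s-n)}$ from the second.

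First I would pass from the set to the module and construct $C$. Since $\mathrm{span}_\Z\{\Sigma(\Omega)\}=\pi_\parallel(\L)$, every element of $\pi_\parallel(\L)$ is an integer combination of points of $\Sigma(\Omega)$; applying $A$ and using $A\Sigma(\Omega)\subset\Sigma(\Omega)\subset\pi_\parallel(\L)$ together with the fact that $\pi_\parallel(\L)$ is a $\Z$-module yields $A\pi_\parallel(\L)\subset\pi_\parallel(\L)$. In particular each column $A(L_\parallel)_{\bullet j}$ lies in $\pi_\parallel(\L)=L_\parallel\Z^s$, so there is an integer vector $\bs{c}_j$ with $A(L_\parallel)_{\bullet j}=L_\parallel\bs{c}_j$; collecting the $\bs{c}_j$ as columns gives $C\in\Z^{s\times s}$ satisfying $AL_\parallel=L_\parallel C$, which is precisely the top block of~\eqref{eq:ABC}.

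The heart of the argument, and the step I expect to be the main obstacle, is producing $B$. The natural candidate is the conjugate of $A$ by the star map: since the scheme is aperiodic the star map is a bijection, so one may define an additive map $B_0$ on $\pi_\perp(\L)$ by $B_0(\bs{x}^{\ast})=(A\bs{x})^{\ast}$, i.e.\ $B_0(L_\perp\bs{r})=L_\perp C\bs{r}$ for $\bs{r}\in\Z^s$. The difficulty is that $\pi_\perp(\L)$ has rank $s>s-n$, so its generators are $\R$-linearly dependent and a merely additive $B_0$ need not be the restriction of any $\R$-linear map — the star map is in general discontinuous. This is where the \emph{boundedness of the window} is decisive: for $\bs{y}=\bs{x}^{\ast}\in\Omega\cap\pi_\perp(\L)$ the hypothesis $A\bs{x}\in\Sigma(\Omega)$ forces $B_0(\bs{y})=(A\bs{x})^{\ast}\in\Omega$, so $B_0$ maps $\Omega\cap\pi_\perp(\L)$ into the bounded set $\Omega$. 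Because $\Omega$ has non-empty interior and $\pi_\perp(\L)$ is dense (irreducibility), writing a small module vector $\bs{z}\in\pi_\perp(\L)$ as a difference of two points of $\Omega\cap\pi_\perp(\L)$ shows $\|B_0\bs{z}\|\le\mathrm{diam}(\Omega)=:\kappa$ whenever $\|\bs{z}\|<\delta$. Additivity then upgrades this to continuity at $0$: if $\|\bs{z}\|<\delta/N$ with $\bs{z}\in\pi_\perp(\L)$, then $N\bs{z}$ is a module point of norm $<\delta$, so $\|B_0\bs{z}\|=\tfrac1N\|B_0(N\bs{z})\|\le\kappa/N$. A continuous additive map extends uniquely to an $\R$-linear map $B$ on $\R^{s-n}$, and evaluating $BL_\perp\bs{r}=L_\perp C\bs{r}$ at $\bs{r}=\bs{e}_1,\dots,\bs{e}_s$ gives $BL_\perp=L_\perp C$, the bottom block of~\eqref{eq:ABC}.

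Finally I would settle uniqueness. If $C,C'\in\Z^{s\times s}$ both solve~\eqref{eq:ABC} (with $B,B'$), subtracting the top blocks gives $L_\parallel(C-C')=O$, so the integer columns of $C-C'$ lie in $\ker L_\parallel\cap\Z^s$; by non-degeneracy (Proposition~\ref{p:pleasants}(i)) this intersection is trivial, whence $C=C'$. Then $BL_\perp=L_\perp C=B'L_\perp$ with $L_\perp$ of full row rank (hence surjective onto $\R^{s-n}$) forces $B=B'$. The only genuinely analytic ingredient is the continuity extension in the third paragraph, which crucially consumes both irreducibility and the bounded window; the remaining steps are bookkeeping with the block decomposition of $L$.
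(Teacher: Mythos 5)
Your proposal is correct, and its key step differs genuinely from the paper's. The construction of $C$ from $\mathrm{span}_\Z\{\Sigma(\Omega)\}=\pi_\parallel(\L)$ and the uniqueness argument coincide with the paper's, but the treatment of $B$ does not. The paper never builds $B$ abstractly: it simply \emph{defines} $(F,B):=(O,I_{s-n})LCL^{-1}$, so that a relation $\bigl(\begin{smallmatrix} A & O \\ F & B \end{smallmatrix}\bigr)L=LC$ with a possibly nonzero lower-left block $F$ holds automatically, and then kills $F$ by a growth argument in the \emph{physical} space: relative density of $\Sigma(\Omega)$ supplies points $\bs{x}_m=m\bs{x}_0+\bs{z}_m\in\Sigma(\Omega)$ with $\bs{f}^\top\bs{x}_0\neq 0$ and $\|\bs{z}_m\|<R$, their images $A\bs{x}_m$ stay in $\Sigma(\Omega)$, so the star images $(F,B)L\bs{r}_m$ stay in the bounded window, which is incompatible with the term $m\,\bs{f}^\top\bs{x}_0$ growing linearly unless $F=O$. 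You instead work entirely in the \emph{internal} space: you exhibit $B$ as the continuous $\R$-linear extension of the star-conjugate $\bs{x}^\ast\mapsto(A\bs{x})^\ast$, using the bounded window with nonempty interior plus irreducibility to get the bound $\|B_0\bs{z}\|\leq\kappa/N$ for module points of norm $<\delta/N$, then density and the Cauchy functional equation. The two arguments are dual uses of the same hypotheses (boundedness of $\Omega$, its interior, density of the lattice projection): the paper's is pure finite-dimensional bookkeeping plus one Delone estimate, needing no extension theorem, while yours isolates the analytic content, makes explicit exactly which scheme properties enter where, and identifies $B$ conceptually as the star conjugate of $A$ rather than as an a posteriori block of $LCL^{-1}$. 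One small point of care in your write-up: the claim that a continuous additive map extends to an $\R$-linear one should be stated for what you actually have, namely an additive map on the dense subgroup $\pi_\perp(\L)$ that is uniformly continuous (which your $\kappa/N$ estimate does give), after which the extension to $\R^{s-n}$ and its $\R$-linearity follow in the standard way.
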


\begin{proof}
It can be shown that $\mathrm{span}_\Z\{\Sigma(\Omega)\}=\pi_\parallel(\L)$, see~\cite{Moody}. Consequently, one can easily derive that $A\Sigma(\Omega)\subset\Sigma(\Omega)$ implies
$A\pi_\parallel(\L)\subset\pi_\parallel(\L)$.

Let $\bs{l}_1,\dots,\bs{l}_s$ be vectors generating the lattice $\L$ and denote $\bs{x}_i=\pi_\parallel(\bs{l}_i)$. Since $A\bs{x}_i\in\pi_\parallel(\L)$, there exists a lattice vector $\tilde{\bs{l}}_i$ such that $A\bs{x}_i=\pi_\parallel(\tilde{\bs{l}}_i)$. As any lattice vector is an integer combination of $\bs{l}_1,\dots,\bs{l}_s$, there exists $\bs{c}_i\in\Z^s$ such that
$\tilde{\bs{l}}_i=L\bs{c}_i$. The assignment $\bs{l}_i\mapsto \tilde{\bs{l}}_i$ is a linear map whose matrix in the basis $\bs{l}_1,\dots,\bs{l}_s$ is $C\in\Z^{s\times s}$ composed from the columns $\bs{c}_1,\dots,\bs{c}_s$.
Therefore, each linear map $A$ of a non-degenerate cut-and-project scheme $(\L\subset \R^s, \R^{n})$ into itself is associated with some integer matrix $C$ through the relation $A(I_n, O)L = (I_n,O)LC$. The matrix $C$ is unique due to the injectivity of $\pi_\parallel$.

Denote by $F,B$ the matrices $F\in\R^{n\times(s-n)}$, $B\in\R^{(s-n)\times(s-n)}$ such that $(O,I_{s-n})LC=(F,B)L$. It suffices to set $(F,B)=(O,I_{s-n})LCL^{-1}$.
We then have
\begin{equation}\label{eq:ABCF}
\begin{pmatrix}
	A & O \\
	F & B
\end{pmatrix} L = LC
\end{equation}

In order to complete the proof, it remains to show that the matrix $F$ is a zero matrix. For the contradiction, suppose that $F$ has a non-zero row, say $\bs{f}^\top=(F_{i1},\dots,F_{in})\neq \bs{0}$. Then there exists $\bs{x}_0\in\Sigma(\Omega)$ such that $\bs{f}^\top\bs{x}_0\neq 0$. Indeed, otherwise $\Sigma(\Omega)$ belongs to a hyperplane in $\R^n$ which contradicts the fact that $\Sigma(\Omega)$ is relatively dense in $\R^n$. For the element $\bs{x}_0\in\Sigma(\Omega)$ find $\bs{r}_0\in\Z^{s}$ such that $\bs{x}_0=\pi_\parallel(L\bs{r}_0)$. We have $\bs{f}^\top(I_n,O)L\bs{r}_0\neq 0$.

Relative density of $\Sigma(\Omega)$ further implies existence of a radius $R$ such that every ball $B(\bs{z},R)$, $\bs{z}\in\R^n$ has a non-empty intersection with $\Sigma(\Omega)$.
Therefore, for every positive integer $m$ the ball $B(m\bs{x}_0,R)$ contains at least one point $\bs{x}_m$ of $\Sigma(\Omega)$. In other words, for every $m\in\N$ there exists $\bs{z}_m\in\R^n$, $\|\bs{z}_m\|<R$, such that $\bs{x}_m=m\bs{x}_0+\bs{z}_m\in\Sigma(\Omega)$. We thus have a sequence of integer vectors $\bs{r}_m\in\Z^s$ such that
$$
\begin{aligned}
\bs{x}_m&=\pi_\parallel(L\bs{r}_m) = (I_{n},O)L\bs{r}_m\in\Sigma(\Omega)\\
 \text{ and thus }\ \bs{x}_m^\ast &=\pi_\perp(L\bs{r}_m) =(O, I_{s-n})L\bs{r}_m\in \Omega.
\end{aligned}
$$
Then $A(\bs{x}_m)=A\pi_\parallel(L\bs{r}_m) = (I_{n},O)LC\bs{r}_m\in\Sigma(\Omega)$ and therefore by~\eqref{eq:ABCF}
$$
A(\bs{x}_m^\ast)=(O, I_{s-n})LC\bs{r}_m = (F,B)L\bs{r}_m\in \Omega \quad \text{ for every $m\in\N$.}
$$
Since $\Omega$ is a bounded set in $\R^{s-n}$, the $i$-th row of the vector $(F,B)L\bs{r}_m=F(I_n,O)L\bs{r}_m+B(0,I_{s-n})L\bs{r}_m)$ is also bounded, say, by a constant $\gamma$. We write
$$
\big|\bs{f}^\top (I_n,O)L\bs{r}_m + \bs{b}^\top (O,I_{s-n})L\bs{r}_m\big|<\gamma\,,
$$
where we have denoted $\bs{b}^\top=(B_{i1},\dots,B_{i,s-n})$ the $i$-th row of the matrix $B\in\R^{(s-n)\times(s-n)}$.
Using the above, we have
$$
\big|m\bs{f}^\top\bs{x}_0+\bs{f}^\top\bs{z}_m + \bs{b}^\top (O,I_{s-n})L\bs{r}_m\big|<\gamma
$$
and consequently
$$
m|\bs{f}^\top\bs{x}_0|< \gamma + |\bs{f}^\top\bs{z}_m| + |\bs{b}^\top (O,I_{s-n})L\bs{r}_m|.
$$
The right hand side of the latter is bounded, as $|\bs{f}^\top\bs{z}_m|\leq \|\bs{f}\|\|\bs{z_m}\|<R\|\bs{f}\|$ and $(O,I_{s-n})L\bs{r}_m\in\Omega$.
On the other hand, since $\bs{f}^\top\bs{x}_0\neq 0$, the left hand side tends to infinity with $m\to\infty$. This is a contradiction. Necessarily,
the $i$-th row of the matrix $F$ vanishes for every $i=1,\dots,s-n$ and thus $F$ is a zero matrix, as we intended to show.
\end{proof}

The above proposition suggests that the first step in recognizing cut-and-project sets with self-similarity $A$ is to find self-similar cut-and-project schemes.
The notion of a self-similarity of a cut-and-project scheme includes not only the fact that the $\Z$-module $\pi_\parallel(\L)$ is closed under the action of $A$, but also that
$A$ induces a linear action of the internal space, i.e.\ a linear map $B$ on $\R^{s-n}$ such that $B\pi_\perp(\L)\subset\pi_\perp(\L)$. Formally, it is described by the following definition.

\begin{definition}\label{de:self}
Let $\Lambda=(\L\subset\R^s,\R^n)$ be a generic cut-and-project scheme and let $L\in\R^{s\times s}$ be a matrix associated to the lattice $\L$.
We say that a linear mapping $A:\R^n\to\R^n$ is a self-similarity of the scheme  $\Lambda$ if
\begin{itemize}
  \item[(1)] $A\pi_\parallel(\L)\subset\pi_\parallel(\L)$,
  \item[(2)] there exist matrices $C\in \Z^{s \times s}$ and $B \in \R^{ (s-n) \times (s-n)}$ such that~\eqref{eq:ABC} is valid.
\end{itemize}
\end{definition}

With this we formulate:

\begin{que}\label{q2}
	To a given linear mapping $A:\R^n\to\R^n$, decide whether there exists a generic cut-and-project scheme $\Lambda=(\L\subset\R^s,\R^n)$ with self-similarity $A$. If yes, determine the minimal dimension $s$ of such a cut-and-project scheme and describe the construction.
\end{que}

From~\eqref{eq:ABC}, it is obvious that the matrix $C$ is similar to a block diagonal matrix with blocks $A$, $B$ on the diagonal. Thus for the spectra of these matrices we have $\sigma(C)=\sigma(A)\cup\sigma(B)$. Since $C$ is an integer matrix, its characteristic polynomial is monic with integer coefficients. We have thus derived the following corollary.

\begin{coro}
	\label{c:duseldek1}
	Let $A\in\R^{n\times n}$ be a self-similarity of a generic cut-and-project scheme $(\Lambda$. Then
	the eigenvalues of the matrix $A$ are algebraic integers and their minimal polynomials over $\Q$ divide the characteristic polynomial of the matrix $C$ from~\eqref{eq:ABC}.
\end{coro}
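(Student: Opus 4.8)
The plan is to build directly on the block structure already extracted from \eqref{eq:ABC}. Since $L$ is non-singular, that identity is equivalent to
\[
\begin{pmatrix} A & O \\ O & B \end{pmatrix} = L\,C\,L^{-1},
\]
so $C$ is similar to the displayed block-diagonal matrix and the two share the same characteristic polynomial. First I would record the elementary fact that the characteristic polynomial of a block-diagonal matrix factors as the product of the characteristic polynomials of its diagonal blocks, whence, writing $\chi_M$ for the characteristic polynomial of a matrix $M$,
\[
\chi_C = \chi_A\,\chi_B .
\]
Because $C\in\Z^{s\times s}$, the polynomial $\chi_C$ is monic with integer coefficients.

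From here the two assertions follow in turn. For the first, let $\lambda$ be any eigenvalue of $A$; then $\lambda$ is a root of $\chi_A$, hence, by the factorization above, a root of $\chi_C$. As $\chi_C$ is monic with integer coefficients, $\lambda$ meets the definition of an algebraic integer. For the second, I would invoke the standard fact that the minimal polynomial over $\Q$ of an algebraic number divides every polynomial in $\Q[X]$ that annihilates it; this is the one-line consequence of Euclidean division together with minimality of the degree. Applying it to the annihilating polynomial $\chi_C\in\Z[X]\subset\Q[X]$ yields that the minimal polynomial of $\lambda$ over $\Q$ divides $\chi_C$, as claimed.

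I do not expect a genuine obstacle; the corollary is an immediate reading of the similarity forced by \eqref{eq:ABC}. The only point that warrants care is that $A$ and $B$ are merely real matrices, so neither $\chi_A$ nor $\chi_B$ need have rational coefficients and an eigenvalue of $A$ need not be rational. Both the integrality and the divisibility must therefore be routed through $\chi_C$, the sole factor in the identity guaranteed to lie in $\Z[X]$; once this is recognised the argument closes exactly as above.
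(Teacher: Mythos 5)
Your proposal is correct and follows essentially the same route as the paper: the paper likewise observes that \eqref{eq:ABC} forces $C$ to be similar to $\left(\begin{smallmatrix} A & O \\ O & B \end{smallmatrix}\right)$, so that $\sigma(C)=\sigma(A)\cup\sigma(B)$ and the monic integer polynomial $\chi_C$ annihilates every eigenvalue of $A$, giving both integrality and the divisibility of the minimal polynomials by $\chi_C$. Your explicit factorization $\chi_C=\chi_A\chi_B$ and the remark that the argument must be routed through $\chi_C$ (since $\chi_A$ need not be rational) are just slightly more detailed renderings of the same argument.
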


For the construction of self-similar cut-and-project schemes we will use the following proposition.

\begin{prop}\label{p:ABCzpet}
	Let $A\in\R^{n\times n}$, $B\in\R^{(s-n)\times (s-n)}$, $C\in\Z^{s\times s}$ and let $L\in\R^{s\times s}$ be non-singular such that~\eqref{eq:ABC}. Then the $\Z$-module $\pi_\parallel(\L)$, where $\L=\mathrm{span}_\Z\{L_{\bullet 1},\dots,L_{\bullet s}\}$, and $\pi_\parallel(\L)=(I_n,O)\L$ satisfies $A\pi_\parallel(\L)\subset \pi_\parallel(\L)$. If, moreover, the scheme $(\L\subset\R^s,\R^n)$ is generic, then $A$ is its self-similarity.
\end{prop}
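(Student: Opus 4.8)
The plan is to verify directly that multiplication by $A$ sends the generators of $\pi_\parallel(\L)$ back into $\pi_\parallel(\L)$, exploiting the block structure of~\eqref{eq:ABC}. The whole argument rests on a single elementary matrix identity: since
$$(I_n,O)\begin{pmatrix} A & O \\ O & B\end{pmatrix}=(A,O)=A(I_n,O),$$
left-multiplying~\eqref{eq:ABC} by the projection matrix $(I_n,O)$ gives $A(I_n,O)L=(I_n,O)LC$. This identity is the crux of the whole statement.

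First I would establish the inclusion claimed in the first part. An arbitrary element of $\pi_\parallel(\L)$ has the form $\bs{x}=(I_n,O)L\bs{r}$ with $\bs{r}\in\Z^s$. Applying the identity above yields $A\bs{x}=A(I_n,O)L\bs{r}=(I_n,O)LC\bs{r}=(I_n,O)L(C\bs{r})$. Because $C$ has integer entries and $\bs{r}\in\Z^s$, the vector $C\bs{r}$ again lies in $\Z^s$, so $A\bs{x}\in(I_n,O)\L=\pi_\parallel(\L)$. As $\bs{x}$ was arbitrary, we conclude $A\pi_\parallel(\L)\subset\pi_\parallel(\L)$.

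For the second part, suppose in addition that the scheme $(\L\subset\R^s,\R^n)$ is generic. Then it satisfies the standing precondition of Definition~\ref{de:self}, and it remains only to check the two defining conditions. Condition~(1) is precisely the inclusion $A\pi_\parallel(\L)\subset\pi_\parallel(\L)$ just proved, while condition~(2)—the existence of matrices $C\in\Z^{s\times s}$ and $B\in\R^{(s-n)\times(s-n)}$ satisfying~\eqref{eq:ABC}—holds by hypothesis, witnessed by the very matrices $B$ and $C$ that are given. Hence $A$ is a self-similarity of $\Lambda$.

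I do not expect any genuine obstacle here; the statement is essentially a bookkeeping converse to Proposition~\ref{p:q1->q2}. The only point worth making explicit is that genericity plays no role whatsoever in the computation of the first part: the inclusion $A\pi_\parallel(\L)\subset\pi_\parallel(\L)$ holds for \emph{any} non-singular $L$ satisfying~\eqref{eq:ABC}, and genericity enters only as the precondition required to invoke Definition~\ref{de:self} in the second part.
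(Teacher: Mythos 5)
Your proof is correct, and it is precisely the direct verification the paper has in mind: the paper states Proposition~\ref{p:ABCzpet} without proof, treating it as the immediate bookkeeping converse of Proposition~\ref{p:q1->q2}, and your computation (left-multiplying~\eqref{eq:ABC} by $(I_n,O)$ to get $A(I_n,O)L=(I_n,O)LC$, then using $C\bs{r}\in\Z^s$) together with the appeal to Definition~\ref{de:self} for the generic case is exactly that argument. Your closing remark that genericity is needed only to invoke Definition~\ref{de:self}, not for the inclusion itself, is also consistent with how the paper uses this proposition later.
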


In order to prove non-degeneracy, aperiodicity and irreducibility  of the cut-and-project scheme $(\L\subset\R^s,\R^n)$  implicitly defined in Proposition~\ref{p:ABCzpet}, we will use tools given in Proposition~\ref{p:generic}.

\section{Minimal polynomials over $\Q$}
\label{sec:minimal_polynomial_estimation}

This section shows some necessary conditions on the minimal polynomials over $\Q$ of matrices $A$, $B$ and $C$ defining a self-similarity of a cut-and-project scheme by \eqref{eq:ABC}. This provides a useful information on the relation of spectra of these matrices.

\begin{lem}
	\label{l:nulpol}
Let $\Lambda=(\L\subset\R^s,\R^n)$ be a generic cut-and-project scheme with self-similarity $A\in\R^{n\times n}$ and let $B\in\R^{(s-n)\times(s-n)}$ be the matrix from~\eqref{eq:ABC}.
Then for every polynomial $p\in \Z[X]$ it holds that  $p(A) = O$ if and only if $p(B) = O$.
\end{lem}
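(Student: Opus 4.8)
The plan is to transport the problem to the internal space by means of the star map $\ast=\pi_\perp\circ\pi_\parallel^{-1}\colon\pi_\parallel(\L)\to\pi_\perp(\L)$, which is a well-defined $\Z$-module isomorphism because non-degeneracy and aperiodicity (Definition~\ref{de:capscheme}) make both $\pi_\parallel$ and $\pi_\perp$ injective on $\L$. First I would record that $\ast$ intertwines $A$ and $B$: writing a generic element of $\pi_\parallel(\L)$ as $\bs{x}=\pi_\parallel(L\bs{r})$ with $\bs{r}\in\Z^s$, relation~\eqref{eq:ABC} gives on the one hand $A\bs{x}=\pi_\parallel(LC\bs{r})$, so $(A\bs{x})^\ast=\pi_\perp(LC\bs{r})$, and on the other hand $B\bs{x}^\ast=B\pi_\perp(L\bs{r})=\pi_\perp(LC\bs{r})$. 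Hence
\[
(A\bs{x})^\ast=B\,\bs{x}^\ast\qquad\text{for every }\bs{x}\in\pi_\parallel(\L).
\]

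The decisive point is that $\ast$ is only additive (a $\Z$-module map), so it commutes with a polynomial map precisely when the polynomial has integer coefficients. Given $p\in\Z[X]$, each power $A^k$ preserves $\pi_\parallel(\L)$, and iterating the intertwining relation together with $\Z$-linearity of $\ast$ yields
\[
\bigl(p(A)\bs{x}\bigr)^\ast=p(B)\,\bs{x}^\ast\qquad\text{for every }\bs{x}\in\pi_\parallel(\L).
\]
This is exactly where the hypothesis $p\in\Z[X]$, rather than $p\in\R[X]$, is indispensable: a non-integer coefficient would push $p(A)\bs{x}$ out of the module $\pi_\parallel(\L)$ and could not be pulled through the merely $\Z$-linear map $\ast$.

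With this identity both implications become immediate. If $p(A)=O$, then $p(B)\bs{x}^\ast=\bigl(p(A)\bs{x}\bigr)^\ast=\bs{0}$ for every $\bs{x}^\ast\in\pi_\perp(\L)$; since $\pi_\perp(\L)=(O,I_{s-n})L\,\Z^s$ and $L$ is non-singular, the columns of $(O,I_{s-n})L$ span $\R^{s-n}$, so the linear map $p(B)$ vanishes on a spanning set and hence $p(B)=O$ (one may equally invoke density of $\pi_\perp(\L)$ from irreducibility). For the converse I would argue symmetrically with the inverse star map $\ast^{-1}\colon\pi_\perp(\L)\to\pi_\parallel(\L)$, which is again a $\Z$-module isomorphism and satisfies $A(\bs{y}^{\ast^{-1}})=(B\bs{y})^{\ast^{-1}}$; from $p(B)=O$ one obtains $p(A)\bs{y}^{\ast^{-1}}=\bs{0}$ for all $\bs{y}\in\pi_\perp(\L)$, so $p(A)$ vanishes on $\pi_\parallel(\L)$, whose real span is $\R^n$ because $(I_n,O)L$ has full row rank, giving $p(A)=O$.

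The only genuinely delicate issue is the restriction to $p\in\Z[X]$, and it is essential rather than cosmetic: the statement is equivalent to the equality $\m{A}=\m{B}$ of the minimal polynomials over $\Q$, whereas the purely algebraic consequence of~\eqref{eq:ABC}, namely $\m{C}=\operatorname{lcm}(\m{A},\m{B})$, permits $\m{A}\neq\m{B}$ when the scheme fails to be generic. Thus the work is done by the interplay of two facts: the star map is a \emph{bijection} (forcing a two-sided correspondence between the actions of $A$ and $B$), guaranteed by non-degeneracy and aperiodicity, and it commutes with integer polynomials. Everything else---the passage from vanishing on $\pi_\parallel(\L)$ or $\pi_\perp(\L)$ to vanishing identically---is routine and uses only the non-singularity of $L$.
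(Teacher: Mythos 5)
Your proof is correct, but it follows a genuinely different route from the paper's. The paper argues at the matrix level: conjugating~\eqref{eq:ABC} gives $Lp(C)L^{-1}=\bigl(\begin{smallmatrix} p(A) & O \\ O & p(B)\end{smallmatrix}\bigr)$, and then, assuming $p(A)=O$ but $p(B)\neq O$, any non-zero column $\bs{r}$ of the \emph{integer} matrix $p(C)$ yields a non-zero lattice vector $L\bs{r}$ with $\pi_\parallel(L\bs{r})=\bs{0}$, contradicting non-degeneracy; aperiodicity handles the converse in the same way. There the hypothesis $p\in\Z[X]$ enters through $p(C)\in\Z^{s\times s}$, whose columns are coordinate vectors of lattice points. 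You instead work element-wise through the star map: integrality enters through the $\Z$-linearity of $\ast$, producing the intertwining identity $(p(A)\bs{x})^\ast=p(B)\bs{x}^\ast$, and the conclusion is reached directly (no contradiction) because $\pi_\perp(\L)$ spans $\R^{s-n}$ over $\R$ and $\pi_\parallel(\L)$ spans $\R^n$, with injectivity of $\ast$ (i.e.\ aperiodicity) giving the reverse implication. At bottom the two arguments perform the same computation --- the $j$-th column of the paper's identity $Lp(C)=\bigl(\begin{smallmatrix} O & O\\ O & p(B)\end{smallmatrix}\bigr)L$ is exactly your identity evaluated at $\bs{x}=\pi_\parallel(L_{\bullet j})$ --- but the packaging differs: the paper's version is shorter and purely matrix-algebraic, whereas yours is direct, makes the role of $\Z[X]$ conspicuous, and isolates the reusable fact that the star map intertwines $A$ with $B$, which is the conceptual content of a self-similarity of a scheme. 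Your closing observation that genericity is indispensable (in general one only has $\m{C}=\mathrm{lcm}(\m{A},\m{B})$, so $\m{A}\neq\m{B}$ is possible for non-generic schemes) is also correct and worth keeping.
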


\begin{proof}
	According to \eqref{eq:ABC}, for any polynomial $p$ it holds that
	\[L p(C) L^{-1} = p(LCL^{-1}) = \begin{pmatrix}
	p(A) & O \\ O & p(B)
	\end{pmatrix}. \]
	
	For a contradiction suppose that $p(A) = O$ and $p(B) \neq O$ (so consequently $p(C) \neq O$). Then
	\[L p(C) = \begin{pmatrix}
	O & O \\ O & p(B)
	\end{pmatrix}L. \]
	Since $p(C) \in \Z^{s\times s}$ is a non-zero matrix there exists a non-zero column in $p(C)$, denote it $\bf{r}$. Then the vector $\bs{\ell}:= L\bf{r}\in \L$ is a non-zero lattice vector whose first projection vanishes, i.e.\ $\pi_\parallel(\bs{\ell}) = \bs{0}$. Thus the restriction of $\pi_\parallel$ to $\L$ is not injective and this is a contradiction with non-degeneracy of the scheme $\Lambda$. The opposite implication is proved in the same way using aperiodicity of $\Lambda$.
\end{proof}

Combining the above lemma together with properties of minimal polynomials, we obtain the
necessary conditions for creating generic cut-and-project schemes.

\begin{prop}
	\label{p:minpol}
	Let $\Lambda=(\L\subset \R^s,\R^n)$ be a generic cut-and-project scheme with self-similarity $A$. Let $B,C$ be the matrices from Definition~\ref{de:self}. Then $\m{A} = \m{B} = \m{C}= \mu_C$. In particular, either $A,B,C$ are all non-singular, or all singular matrices. Moreover, $A,B,C$ are all diagonalizable over $\C$ or all non-diagonalizable.
\end{prop}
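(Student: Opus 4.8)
The plan is to collapse the whole statement to a single polynomial identity and then read off the two ``in particular'' claims from standard properties of minimal polynomials. First I would check that all three minimal polynomials over $\Q$ are even defined. By Corollary~\ref{c:duseldek1} the eigenvalues of $A$ are algebraic integers; since $\sigma(C)=\sigma(A)\cup\sigma(B)$ and $C$ is an integer matrix, the spectra of $B$ and $C$ also consist of algebraic integers, so $\m{A}$, $\m{B}$, $\m{C}$ are all well defined.

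The heart of the argument is to prove $\m{A}=\m{B}$. Lemma~\ref{l:nulpol} is phrased for $p\in\Z[X]$, but it extends verbatim to $p\in\Q[X]$: given $p\in\Q[X]$ choose a nonzero integer $N$ with $Np\in\Z[X]$, and then $p(A)=O\iff (Np)(A)=O\iff (Np)(B)=O\iff p(B)=O$. Applying this to $p=\m{A}$ yields $\m{A}(B)=O$, hence $\m{B}\mid\m{A}$; by symmetry $\m{A}\mid\m{B}$, and since both are monic, $\m{A}=\m{B}$.

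Next I would bring in $\m{C}$ and $\mu_C$ via the block structure of~\eqref{eq:ABC}. Because $LCL^{-1}=\left(\begin{smallmatrix} A & O\\ O & B\end{smallmatrix}\right)$, for every $p\in\Q[X]$ one has $p(C)=O$ iff $p(A)=O$ and $p(B)=O$; thus $\m{C}$ is the monic generator of the intersection of the annihilator ideals of $A$ and $B$, i.e.\ $\m{C}=\mathrm{lcm}(\m{A},\m{B})=\m{A}$. Since $C\in\Z^{s\times s}$, the preliminaries give $\mu_C=\m{C}$, so altogether $\m{A}=\m{B}=\m{C}=\mu_C$, as claimed. Write $P$ for this common polynomial.

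Finally, the two remaining assertions follow by reading off two features of $P$. A matrix $T\in\{A,B,C\}$ is singular iff $0\in\sigma(T)$, and $0$ is a root of $\m{T}=P$ exactly when it is an eigenvalue (its only algebraic conjugate is itself), so the three matrices are simultaneously singular or nonsingular according to whether $X\mid P$. For diagonalizability, the cleanest route is to show that each of $A,B,C$ is diagonalizable over $\C$ iff $\m{T}=P$ is squarefree: one direction is the stated fact that a matrix diagonalizable over $\C$ has minimal polynomial over $\Q$ equal to a product of distinct irreducibles, and for the converse, squarefreeness of $P$ in $\Q[X]$ forces distinct complex roots (characteristic zero), whence $\mu_T\mid P$ is squarefree and $T$ is diagonalizable. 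As $P$ is common to all three, they are simultaneously diagonalizable or not. I expect the only genuinely delicate point to be this last equivalence, which ties diagonalizability (a property of $\mu_T$ over $\C$) to squarefreeness of the rational minimal polynomial $P$; the extension of Lemma~\ref{l:nulpol} to $\Q[X]$ and the $\mathrm{lcm}$ computation are routine.
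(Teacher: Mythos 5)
Your proposal is correct and takes essentially the same route as the paper: Lemma~\ref{l:nulpol} yields $\m{A}=\m{B}=\m{C}$, rationality of $C$ gives $\mu_C=\m{C}$, and the singularity and diagonalizability claims are read off from the common minimal polynomial. The paper's proof is simply terser, leaving implicit the details you spell out (extending Lemma~\ref{l:nulpol} from $\Z[X]$ to $\Q[X]$, the $\mathrm{lcm}$ computation for $\m{C}$, and the equivalence between diagonalizability over $\C$ and squarefreeness of the rational minimal polynomial), all of which are filled in correctly.
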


\begin{proof}
According to Lemma~\ref{l:nulpol}, it holds that $\m{A}=\m{B}=\m{C}$.
Since $C$ is a rational matrix, we moreover have $\mu_C=\m{C}$. The fact that all the three matrices $A,B,C$ have the same minimal polynomial over $\Q$ implies that non-singularity and diagonalizability over $\C$ is valid for all of them or none of them.
\end{proof}

\begin{example}
Requiring a cut-and-project scheme with $k$-fold rotation  symmetry $A$, then necessary $A^k = I$. This implies that $A$ is annihilated by the polynomial $X^k - 1$. The polynomial $X^k -1$ is divisible by the cyclotomic polynomial $\Phi_k(X) \in \Z[X]$. $\Phi_k(X)$ is irreducible and minimal one over $\Q$ that annihilates $A$. Using the condition derived in Proposition~\ref{p:minpol} one gets
\[\m{A} = \m{C} = \mu_C. \]
Taking $C$ such that the minimal polynomial $\mu_C$ is the characteristic polynomial $\chi_C$ (for example the companion matrix to $\Phi_k$) we have an estimation on the minimal dimension of $C$ given by $\phi(k)=\deg(\Phi_k)$, which is in agreement with the statement in~\cite{Aperiodic1}.
\end{example}

\begin{coro}\label{c:spektrumAB}
Let $\Lambda$ be a generic cut-and-project scheme with self-similarity $A$ and let $B$ be the matrix from Definition~\ref{de:self}. Then each eigenvalue of $A$
 is an algebraic conjugate of an eigenvalue of $B$ and vice versa.
\end{coro}

\section{Transformations of cut-and-project schemes}

The following statement allows us to reduce our attention in~\eqref{eq:ABC} and~\ref{p:ABCzpet} to matrices in a special form.

\begin{lem}\label{l:BUNO}
	Let $\Lambda=(\L\subset\R^s,\R^n)$ be a generic cut-and-project scheme and let $\L\subset\R^s$ be a lattice with an associated matrix $L\in\R^{s\times s}$.
	Let further $W_A\in\R^{n\times n}$, $W_B\in\R^{(s-n)\times (s-n)}$, $Q\in\Q^{s\times s}$ be non-singular matrices. Define
  \[
    \widetilde{L}=\begin{pmatrix}
    W_A & O \\
	O & W_B
	\end{pmatrix}LQ^{-1} \mbox{ and } \widetilde{\L}=\mathrm{span}_\Z\{\widetilde{L}_{\bullet 1},\dots,\widetilde{L}_{\bullet s}\}.
  \]
Then $\widetilde{\Lambda}=(\widetilde{\L}\subset\R^s,\R^n)$ is also a generic cut-and-project scheme.
\end{lem}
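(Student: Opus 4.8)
The plan is to verify the three conditions of Definition~\ref{de:capscheme} for $\widetilde\Lambda$ by pulling each back to the corresponding (assumed) property of $\Lambda$. Write $M=\begin{pmatrix}W_A&O\\O&W_B\end{pmatrix}$, so that $\widetilde L=MLQ^{-1}$ and $\widetilde{\L}=\{MLQ^{-1}\bs r:\bs r\in\Z^s\}$. First I would record two structural facts. (a) The matrix $M$ is non-singular and block-diagonal with respect to the splitting $\R^s=V_\parallel\oplus V_\perp$, hence $MV_\parallel=V_\parallel$, $MV_\perp=V_\perp$, and $\pi_\parallel(M\bs x)=W_A\pi_\parallel(\bs x)$, $\pi_\perp(M\bs x)=W_B\pi_\perp(\bs x)$. (b) Since $Q\in\Q^{s\times s}$ is non-singular, $Q^{-1}\Z^s$ is a full-rank lattice commensurable with $\Z^s$ (clearing denominators in $Q$ and $Q^{-1}$ gives a positive integer $N$ with $N\Z^s\subseteq Q^{-1}\Z^s$ and $N(Q^{-1}\Z^s)\subseteq\Z^s$), and every element of $\widetilde{\L}$ lies in $ML\Q^s$ because $Q^{-1}\bs r\in\Q^s$.

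For non-degeneracy and aperiodicity I would reformulate them, via Proposition~\ref{p:pleasants}(i)--(ii), as $V_\perp\cap\L=\{\bs0\}$ and $V_\parallel\cap\L=\{\bs0\}$. Because $\L=\{L\bs r:\bs r\in\Z^s\}$, these upgrade to the equivalent statements $V_\perp\cap L\Q^s=\{\bs0\}$ and $V_\parallel\cap L\Q^s=\{\bs0\}$: any nonzero point of such an intersection has the form $L\bs q$ with $\bs q\in\Q^s$, and scaling by a common denominator produces a nonzero point of the corresponding lattice intersection, a contradiction. Now take $\binom{\bs0}{\bs x}\in\widetilde{\L}$; by (b) it lies in $ML\Q^s$, so applying $M^{-1}=\mathrm{diag}(W_A^{-1},W_B^{-1})$ and using $M^{-1}V_\perp=V_\perp$ places $\binom{\bs0}{W_B^{-1}\bs x}$ in $V_\perp\cap L\Q^s=\{\bs0\}$, whence $\bs x=\bs0$. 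This is non-degeneracy of $\widetilde\Lambda$; aperiodicity follows by the symmetric argument applied to $V_\parallel$ and $W_A$.

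For irreducibility I would argue that density of the internal projection is a commensurability invariant. By (a) we have $\pi_\perp(\widetilde{\L})=W_B\,\pi_\perp(LQ^{-1}\Z^s)$. Since $L^{-1}$ carries commensurable lattices to commensurable lattices, fact (b) makes $LQ^{-1}\Z^s$ commensurable with $\L=L\Z^s$; their common finite-index sublattice yields inclusions $k\,(L\Z^s)\subseteq LQ^{-1}\Z^s$ and $k'\,(LQ^{-1}\Z^s)\subseteq L\Z^s$ for positive integers $k,k'$. As scaling by a nonzero integer preserves density and passing to a superset preserves density, $\pi_\perp(LQ^{-1}\Z^s)$ is dense in $\R^{s-n}$ if and only if $\pi_\perp(\L)$ is, which holds by irreducibility of $\Lambda$. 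Finally $W_B$ is a linear automorphism of $\R^{s-n}$ and hence maps a dense set onto a dense set, so $\pi_\perp(\widetilde{\L})$ is dense and $\widetilde\Lambda$ is irreducible.

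The routine part is the block-matrix bookkeeping in (a); the step needing genuine care is irreducibility, where the rational factor $Q^{-1}$ changes $\L$ itself but only within its commensurability class, so I must make explicit that density of $\pi_\perp(\L)$ depends only on that class. Everything else reduces, as above, to the principle that genericity is determined by the splitting $V_\parallel\oplus V_\perp$ together with the $\Q$-span (respectively commensurability class) of the lattice, and both are manifestly preserved under multiplication by the block-diagonal $M$ and by the rational $Q^{-1}$.
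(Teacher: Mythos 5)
Your proof is correct and follows essentially the same route as the paper's: non-degeneracy and aperiodicity via Proposition~\ref{p:pleasants}(i)--(ii) together with the block-diagonal action of $\mathrm{diag}(W_A,W_B)$, and irreducibility by clearing the denominators of $Q$ to obtain a scaled inclusion of projected lattices, then using that scaling and the invertible map $W_B$ preserve density. If anything, your version is slightly more careful than the paper's, which writes $L\bs{r}\in\L$ for the merely rational vector $\bs{r}=Q^{-1}\widetilde{\bs{r}}$, whereas you make the passage from $L\Q^s$ back to $\L$ explicit by clearing denominators.
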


\begin{proof}
	In proving non-degeneracy of the scheme $\widetilde{\Lambda}$ we will proceed with the help of Proposition~\ref{p:pleasants}.
Consider a vector of the lattice $\widetilde{\L}$, i.e. of the form $\widetilde{L}\widetilde{\bs{r}}$ for some integer vector $\widetilde{\bs{r}}\in\Z^s$.
Denoting $\bs{r}:=Q^{-1}\widetilde{\bs{r}}$, we have
$$
\widetilde{L}\widetilde{r} = \begin{pmatrix}
    W_A & O \\
	O & W_B
	\end{pmatrix}LQ^{-1} \widetilde{\bs{r}} = \begin{pmatrix}
    W_A & O \\
	O & W_B
	\end{pmatrix} L\bs{r}.
$$
If $\widetilde{L}\widetilde{\bs{r}}=\binom{\bs{0}}{\widetilde{\bs{x}}}\in\widetilde{\L}$, then
$$
L\bs{r} = \begin{pmatrix}
    W_A^{-1} & O \\
	O & W_B^{-1}
	\end{pmatrix} \binom{\bs{0}}{\widetilde{\bs{x}}} = \binom{\bs{0}}{W_B^{-1}\widetilde{\bs{x}}}\in\L.
$$
Since $W_B$ is a non-singular matrix, $W_B^{-1}\widetilde{\bs{x}}=\bs{0}$ is equivalent to $\widetilde{\bs{x}}=\bs{0}$. This completes the proof of non-degeneracy. Similarly we proceed to show aperiodicity of the scheme $(\widetilde{\L}\subset\R^s,\R^n)$.
	
	In order to prove irreducibility of the cut-and-project scheme $(\widetilde{\L}\subset\R^s,\R^n)$, we need to show that $\pi_\perp(\widetilde{\L})$ is dense in $\R^{s-n}$.
	Note that $\pi_\perp(\L)=W_B^{-1}\pi_\perp(\widetilde{\L}Q)$. Denoting $q\in\N$ such that $qQ\in\Z^{s\times s}$, we have $\widetilde{\L}Q=\frac1q \widetilde{\L}(qQ)\subset \frac1q \widetilde{\L}$, and consequently
	$q W_B \pi_\perp(\L)=\pi_\perp(\widetilde{\L}Q) \subset \pi_\perp(\widetilde{\L})$. Therefore
	density of $\pi_\perp(\L)$ in $\R^{s-n}$ implies density of $\pi_\perp(\widetilde{\L})$ in $\R^{s-n}$.
	%
\end{proof}

\begin{definition}
	We will say that cut-and-project schemes $\Lambda$ and $\widetilde{\Lambda}$ from Lemma~\ref{l:BUNO} are equivalent.
\end{definition}

Now we need to introduce the following notation.
Let $\mathbb{F}$ be a field. If $M, \widetilde{M}$ are square matrices of the same order, say $k$, we say that $M$ is similar to $\widetilde{M}$ over $\mathbb{F}$, denoted by $M \sim_{\mathbb{F}} \widetilde{M}$, if there exists a non-singular matrix $F\in \mathbb{F}^{k\times k}$ such that $MF = F\widetilde{M}$.

\begin{lem}
	\label{l:fn}
	Let $A \in \R^{n \times n}$ be a self-similarity of a generic cut-and-project scheme $\Lambda = \left(\L\subset \R^s, \R^n \right)$ with an associated matrix $L \in \R^{s\times s}$, i.e. $A \pi_\parallel (\L) \subset \pi_\parallel (\L)$. Let $B\in \R^{(s-n)\times (s-n)}$ and $C \in \Z^{s\times s}$ be as in Definition~\ref{de:self}, i.e. equation \eqref{eq:ABC} holds. Take $\widetilde{A},\ \widetilde{B}, \ \widetilde{C}$ arbitrary matrices such that $A\sim_\R \widetilde{A}$, $ B \sim_\R \widetilde{B}$, $ C \sim_\Q \widetilde{C}$.
	Then there exists a generic cut-and-project scheme $\widetilde{\Lambda} = \left(\widetilde{\L}\subset \R^{s},\R^n \right)$  such that $\widetilde{A} \pi_\parallel(\widetilde{\L}) \subset \pi_\parallel(\widetilde{\L})$ and
	\begin{equation}
	\label{eq:ABC-tilda}
	\begin{pmatrix}
	\widetilde{A} & O \\ O & \widetilde{B}
	\end{pmatrix} \widetilde{L} = \widetilde{L} \widetilde{C},
	\end{equation}
where $\widetilde{L}\in \R^{s\times s}$ is a matrix associated to $\widetilde{\L}$.
\end{lem}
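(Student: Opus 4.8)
The plan is to build $\widetilde{L}$ explicitly from the three conjugating matrices and then let the earlier machinery do the work. First I would record what the similarity hypotheses provide: non-singular matrices $W_A\in\R^{n\times n}$, $W_B\in\R^{(s-n)\times(s-n)}$ and $Q\in\Q^{s\times s}$ with $AW_A=W_A\widetilde{A}$, $BW_B=W_B\widetilde{B}$ and $CQ=Q\widetilde{C}$. Writing $W=\begin{pmatrix} W_A & O \\ O & W_B\end{pmatrix}$, the block-diagonal structure together with the first two relations gives at once $\begin{pmatrix}\widetilde{A}&O\\O&\widetilde{B}\end{pmatrix}=W^{-1}\begin{pmatrix}A&O\\O&B\end{pmatrix}W$. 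I would then propose the candidate $\widetilde{L}:=W^{-1}LQ$ and set $\widetilde{\L}=\mathrm{span}_\Z\{\widetilde{L}_{\bullet 1},\dots,\widetilde{L}_{\bullet s}\}$.

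Genericity is the only delicate point, and it is precisely what Lemma~\ref{l:BUNO} supplies. Indeed $\widetilde{L}=\begin{pmatrix}W_A^{-1}&O\\O&W_B^{-1}\end{pmatrix}LQ$ has exactly the shape occurring in that lemma, with $W_A^{-1},W_B^{-1}$ playing the role of its two left blocks and $Q$ playing the role of the rational matrix there called $Q^{-1}$; this is legitimate because $Q\in\Q^{s\times s}$ is non-singular, so the matrix appearing there (namely $Q$ itself, our $Q^{-1}$) is again rational. Hence $\widetilde{\Lambda}=(\widetilde{\L}\subset\R^s,\R^n)$ is a generic cut-and-project scheme and $\widetilde{L}$ is a matrix associated to $\widetilde{\L}$.

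Verifying \eqref{eq:ABC-tilda} is then a one-line computation, using the block conjugation relation for $W$, then \eqref{eq:ABC}, then $CQ=Q\widetilde{C}$:
\[
\begin{pmatrix}\widetilde{A}&O\\O&\widetilde{B}\end{pmatrix}\widetilde{L}
= W^{-1}\begin{pmatrix}A&O\\O&B\end{pmatrix}W\,W^{-1}LQ
= W^{-1}\begin{pmatrix}A&O\\O&B\end{pmatrix}LQ
= W^{-1}LCQ
= W^{-1}LQ\widetilde{C}
= \widetilde{L}\widetilde{C}.
\]
The inclusion $\widetilde{A}\pi_\parallel(\widetilde{\L})\subset\pi_\parallel(\widetilde{\L})$ then follows from \eqref{eq:ABC-tilda} and the genericity of $\widetilde{\Lambda}$ by Proposition~\ref{p:ABCzpet}, which in fact yields that $\widetilde{A}$ is a self-similarity of $\widetilde{\Lambda}$ in the sense of Definition~\ref{de:self}.

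The computation itself is routine; the substance lies in choosing the conjugation to be \emph{block-diagonal on the left}, so that the splitting $\R^s=\R^n\oplus\R^{s-n}$ underlying the projections is respected, while the arbitrary rational similarity $Q$ acts \emph{on the right}, where it only alters the lattice basis up to commensurability and hence cannot destroy genericity. The point I must not overlook is that Proposition~\ref{p:ABCzpet} requires $\widetilde{C}\in\Z^{s\times s}$: this is where I would check that the $\widetilde{C}$ at hand is an integer matrix, as is the case in our applications, where $\widetilde{C}$ is the integer rational canonical form of the integer matrix $C$ (its irreducible factors over $\Q$ being monic integer polynomials by Gauss's lemma). Confirming this integrality, together with the rationality of $Q$ needed to apply Lemma~\ref{l:BUNO}, is what makes both auxiliary results applicable and thereby carries the entire argument.
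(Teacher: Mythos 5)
Your proof is correct and follows essentially the same route as the paper's: the same candidate $\widetilde{L}$ (a block-diagonal real conjugation of $L$ composed with the rational similarity $Q$ acting on the right), the same one-line verification of \eqref{eq:ABC-tilda}, genericity via Lemma~\ref{l:BUNO}, and the inclusion $\widetilde{A}\pi_\parallel(\widetilde{\L})\subset\pi_\parallel(\widetilde{\L})$ via Proposition~\ref{p:ABCzpet}. Your closing remark is in fact a useful extra precision: the paper invokes Proposition~\ref{p:ABCzpet} without noting that it requires $\widetilde{C}\in\Z^{s\times s}$, a hypothesis that fails for an arbitrary rational conjugate of $C$ but holds for the intended application (the rational Jordan form of an integer matrix is again integer, by Gauss's lemma).
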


\begin{proof}
	Let $W_A\in \R^{n\times n}$, $W_B \in \R^{(s-n)\times (s-n)}$ and $Q \in \Q^{s \times s}$ be matrices such that  $\widetilde{A}=W_AAW_A^{-1}$, $\widetilde{B}=W_BBW_B^{-1}$, $\widetilde{C}=QCQ^{-1}$. Set $\widetilde{L}=\left(\begin{smallmatrix}                                                                                                                                               W_A & O \\
	O & W_B
	\end{smallmatrix}\right)LQ^{-1}$. Multiplying \eqref{eq:ABC} from the left side by
$\left(\begin{smallmatrix}                                                                                                                                               W_A & O \\
	O & W_B
	\end{smallmatrix}\right)$ and from the right side by $Q^{-1}$, we get
\begin{multline*}
\begin{pmatrix}
	W_A & O \\ O & W_B
\end{pmatrix}\begin{pmatrix}
	A & O \\ O& B
\end{pmatrix} \begin{pmatrix}
	W_A^{-1} & O \\ O & W_B^{-1}
	\end{pmatrix}
\begin{pmatrix}
	W_A & O \\ O & W_B
	\end{pmatrix} LQ^{-1} =\\
= \begin{pmatrix}
	W_A & O \\ O & W_B
\end{pmatrix} L Q^{-1} QCQ^{-1},
\end{multline*}
which is relation \eqref{eq:ABC-tilda}. Lemma \ref{l:BUNO} gives genericity of $\widetilde{\Lambda}$.
By Proposition~\ref{p:ABCzpet}, we have that $\widetilde{A} \pi_\parallel(\widetilde{\L}) \subset \pi_\parallel(\widetilde{\L})$.
\end{proof}

As a consequence of the above considerations we realize that when solving Question~\ref{q2}, we can, without loss of generality, consider~\eqref{eq:ABC} with matrices $A\in\R^{n\times n}$, $B\in\R^{(s-n)\times(s-n)}$ taken in the real Jordan form and matrix $C\in\Z^{s\times s}$ in the rational Jordan form.

\begin{coro}\label{c:BUNO}
Let $A\in\R^{n \times n}$ and let $\widetilde{A}$ be its real Jordan form. Then the answer to Question~\ref{q2} for $A$ is yes if and only if the answer to Question~\ref{q2} with $\widetilde{A}$ is yes. If it is the case, then the minimal dimensions of generic cut-and-project schemes for $A$ and for $\widetilde{A}$ coincide.
\end{coro}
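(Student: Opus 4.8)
The plan is to reduce Question~\ref{q2} for a general matrix $A$ to the same question for its real Jordan form $\widetilde{A}$, exploiting the transformation machinery already established in Lemma~\ref{l:fn}. The key observation is that $A$ and $\widetilde{A}$ are similar over $\R$, say $\widetilde{A}=W_A A W_A^{-1}$ for some non-singular $W_A\in\R^{n\times n}$, so the two questions should have identical answers and the conjugating matrix $W_A$ should transport a cut-and-project scheme for one into a scheme for the other without changing the dimension $s$.

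First I would prove the implication from $A$ to $\widetilde{A}$. Assuming the answer to Question~\ref{q2} for $A$ is yes, there is a generic cut-and-project scheme $\Lambda=(\L\subset\R^s,\R^n)$ with self-similarity $A$; by Definition~\ref{de:self} we have matrices $B\in\R^{(s-n)\times(s-n)}$ and $C\in\Z^{s\times s}$ satisfying~\eqref{eq:ABC}. Now I apply Lemma~\ref{l:fn} with the choices $\widetilde{A}=W_A A W_A^{-1}$, $\widetilde{B}=B$ (i.e.\ $W_B=I_{s-n}$), and $\widetilde{C}=C$ (i.e.\ $Q=I_s$); these are legitimate since $A\sim_\R\widetilde{A}$, $B\sim_\R B$, and $C\sim_\Q C$ trivially. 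Lemma~\ref{l:fn} then produces a generic cut-and-project scheme $\widetilde{\Lambda}=(\widetilde{\L}\subset\R^s,\R^n)$ of the \emph{same} dimension $s$ with $\widetilde{A}\pi_\parallel(\widetilde{\L})\subset\pi_\parallel(\widetilde{\L})$ and relation~\eqref{eq:ABC-tilda} holding, so $\widetilde{A}$ is a self-similarity of $\widetilde{\Lambda}$; hence the answer for $\widetilde{A}$ is yes. The reverse implication is symmetric: since $\widetilde{A}=W_A A W_A^{-1}$ is equivalent to $A=W_A^{-1}\widetilde{A}W_A$, the same argument with $W_A^{-1}$ in place of $W_A$ transports a scheme for $\widetilde{A}$ back to one for $A$, again preserving $s$.

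The dimension claim then follows almost immediately from the construction being dimension-preserving in both directions. If $s_0$ is the minimal dimension of a generic scheme with self-similarity $A$, the forward construction yields a scheme for $\widetilde{A}$ of dimension $s_0$, so the minimal dimension for $\widetilde{A}$ is at most $s_0$; the reverse construction gives the opposite inequality, so the two minimal dimensions coincide. I would phrase this as: let $s_0$ and $\widetilde{s}_0$ denote the respective minimal dimensions; each construction shows one is bounded by the other, whence $s_0=\widetilde{s}_0$.

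The step requiring the most care is verifying that the trivial choices $W_B=I$, $Q=I$ are genuinely admissible inputs to Lemma~\ref{l:fn} — that is, checking the relations $B\sim_\R B$ and $C\sim_\Q C$ hold via the identity matrix, which is routine but must be stated so that Lemma~\ref{l:fn} applies verbatim. Beyond that there is essentially no obstacle, since all the substantive work (genericity preservation under the block-diagonal conjugation, validity of~\eqref{eq:ABC-tilda}) has already been discharged inside Lemma~\ref{l:fn} and Lemma~\ref{l:BUNO}; the present corollary is a direct specialization. The only genuine subtlety worth flagging is that one must invoke Lemma~\ref{l:fn} in \emph{both} directions to obtain an equivalence rather than a single implication, and that the dimension $s$ is manifestly unchanged because the conjugating matrices $W_A$, $W_A^{-1}$ act within $\R^{n\times n}$ and leave $s$ fixed.
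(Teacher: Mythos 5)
Your proof is correct and follows essentially the same route as the paper: Corollary~\ref{c:BUNO} is stated there as a direct consequence of Lemma~\ref{l:fn}, applied with $W_B=I_{s-n}$, $Q=I_s$ in both directions, which is exactly your argument, including the observation that the construction preserves the lattice dimension $s$ and hence the minimal dimensions coincide.
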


\section{Composition of cut-and-project schemes}

When constructing a cut-and-project scheme with a given self-similarity, we will proceed by analyzing the spectrum of $A$. We will compose the schemes found for the minimal polynomials of the individual eigenvalues. Such elementary cases will be described in Section~\ref{sec:element}. First, let us explain more precisely what we mean by composition of schemes.

\begin{definition}\label{de:directsum}
  Let $\hat{\Lambda}=(\hat{\L}\subset\R^{\hat{s}},\R^{\hat{n}})$, $\check{\Lambda}=(\check{\L}\subset\R^{\check{s}},\R^{\check{n}})$ be cut-and-project schemes and let $\hat{L}$, $\check{L}$ be matrices in $\R^{\hat{s}\times\hat{s}}$, $\R^{\check{s}\times\check{s}}$ associated to the lattices $\hat{\L}$, $\check{\L}$, respectively.
  Set
  $$
  L=P\begin{pmatrix}
  \hat{L} & O \\
  O & \check{L}
  \end{pmatrix}
  \quad\text{ where }\quad
  P=\begin{pmatrix}
      I_{\hat{n}} & O & O & O \\
      O & O & I_{\hat{s}-\hat{n}} & O \\
      O & I_{\check{n}} & O & O \\
      O & O & O & I_{\check{s}-\check{n}}
    \end{pmatrix}\,.
  $$
  Denote $s=\hat{s}+\check{s}$, $n=\hat{n}+\check{n}$. Define the lattice $\L\in\R^s$ by $\L:=\mathrm{span}_\Z\{L_{\bullet 1},\dots, L_{\bullet s}\}$
  and projections $\pi_\parallel:\R^s\to\R^n$, $\pi_\perp:\R^s\to\R^{s-n}$ by $\pi_\parallel(\L)=(I_n,O)\L$, $\pi_\perp(\L)=(O,I_{s-n})\L$. Then the scheme $\Lambda=(\L\subset\R^s,\R^n)$ is called the direct sum
  of schemes $\hat{\Lambda}$, $\check{\Lambda}$. 
  We denote it by $\Lambda=\hat{\Lambda}\oplus\check{\Lambda}$. 
\end{definition}

Inductively, we define direct sum of more than two cut-and-project schemes.

\begin{lem}\label{l:directsumgeneric}
  Let $\Lambda=\hat{\Lambda}\oplus\check{\Lambda}$. Then
  \begin{itemize}
    \item $\Lambda$ is non-degenerate if and only if $\hat{\Lambda}$ and $\check{\Lambda}$ are both non-degenerate;
    \item   $\Lambda$ is aperiodic if and only if $\hat{\Lambda}$ and $\check{\Lambda}$ are both aperiodic;
    \item $\Lambda$ is irreducible if and only if $\hat{\Lambda}$ and $\check{\Lambda}$ are both irreducible.
  \end{itemize}
\end{lem}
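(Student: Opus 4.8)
The plan is to prove each of the three equivalences separately, exploiting the block structure of the associated matrix $L=P\left(\begin{smallmatrix}\hat{L}&O\\O&\check{L}\end{smallmatrix}\right)$ and the characterizations of non-degeneracy, aperiodicity and irreducibility from Proposition~\ref{p:pleasants}. The key observation is that the permutation matrix $P$ regroups the $s$ coordinates so that the first $n=\hat n+\check n$ rows collect the two physical parts and the remaining $s-n$ rows collect the two internal parts. Thus a vector $\binom{\bs{0}}{\bs{x}}\in\L$ in the coordinates of the direct sum corresponds, after undoing $P$, to a pair of vectors, one in $\hat{\L}$ with vanishing physical part and one in $\check{\L}$ with vanishing physical part. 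The whole argument is really a bookkeeping translation between the interleaved coordinates of $\Lambda$ and the separate coordinates of the summands.

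For non-degeneracy, I would use item~(i) of Proposition~\ref{p:pleasants}. Suppose $\binom{\bs{0}}{\bs{x}}\in\L$ with $\bs{x}\in\R^{s-n}$. Write the preimage under $P$ as $\binom{\hat{\bs{v}}}{\check{\bs{v}}}$ with $\hat{\bs{v}}\in\hat{\L}$, $\check{\bs{v}}\in\check{\L}$. The vanishing of the physical coordinates of $\binom{\bs{0}}{\bs{x}}$ forces, via the explicit form of $P$, that the physical parts $\pi_\parallel(\hat{\bs{v}})$ and $\pi_\parallel(\check{\bs{v}})$ both vanish. Non-degeneracy of $\hat{\Lambda}$ and $\check{\Lambda}$ then gives $\hat{\bs{v}}=\check{\bs{v}}=\bs{0}$, whence $\bs{x}=\bs{0}$; this proves the ``if'' direction. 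Conversely, if say $\hat{\Lambda}$ is degenerate, a nonzero $\hat{\bs{v}}\in\hat{\L}$ with vanishing physical part lifts to a nonzero element of $\L$ with vanishing physical part, violating non-degeneracy of $\Lambda$. The proof of the aperiodicity equivalence is entirely symmetric, using item~(ii) and swapping the roles of physical and internal parts.

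For irreducibility I would argue directly that $\pi_\perp(\L)$ is dense in $\R^{s-n}=\R^{\hat s-\hat n}\times\R^{\check s-\check n}$. By construction the internal projection decomposes as $\pi_\perp(\L)=\pi_\perp(\hat{\L})\times\pi_\perp(\check{\L})$ as a subset of the product space, because the block-diagonal form of $\left(\begin{smallmatrix}\hat{L}&O\\O&\check{L}\end{smallmatrix}\right)$ lets the integer coefficients for the two summands be chosen independently. A product of dense subsets is dense in the product, so density of each factor yields density of the whole, giving the ``if'' direction; the converse follows since each factor is a projection of $\pi_\perp(\L)$ onto a coordinate subspace, and a dense set projects to a dense set. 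I expect the main obstacle to be purely notational: verifying carefully, through the permutation $P$, that the ``vanishing physical part'' condition for $\Lambda$ really does decouple into the two separate vanishing conditions for $\hat{\Lambda}$ and $\check{\Lambda}$, and likewise that the internal projection genuinely splits as a Cartesian product rather than some skew combination. Once the coordinate bookkeeping induced by $P$ is pinned down, all three equivalences reduce to elementary facts about block-diagonal lattices and products of dense sets.
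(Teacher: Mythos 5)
Your proposal is correct and takes essentially the same route as the paper's own proof: like the paper, you decouple non-degeneracy and aperiodicity through the block structure of $L$ and the characterizations in Proposition~\ref{p:pleasants}, and you settle irreducibility via the identification $\pi_\perp(\L)=\pi_\perp(\hat{\L})\times\pi_\perp(\check{\L})$ together with the fact that a Cartesian product is dense if and only if both factors are. The coordinate bookkeeping you flag as the main obstacle is exactly what the paper's displayed equivalence for $\bs{l}=\binom{\bs{0}}{\bs{x}}$ carries out, so there is no gap.
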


\begin{proof}
  Let $\Lambda=(\L\subset\R^s,\R^n)$, $\hat{\Lambda}=(\hat{\L}\subset\R^{\hat{s}},\R^{\hat{n}})$, $\check{\Lambda}=(\check{\L}\subset\R^{\check{s}},\R^{\check{n}})$, and let
  $L, \hat{L}, \check{L}$ be the matrices associated to lattices $\L, \hat{\L}, \check{\L}$, respectively.
  Write $L=\binom{L_1}{L_2}$ where $L_1\in\R^{n\times s}$, $L_2\in\R^{(s-n)\times s}$.
  Similarly, we introduce notation for $\hat{L}=\binom{\hat{L}_1}{\hat{L}_2}$ and $\check{L}=\binom{\check{L}_1}{\check{L}_2}$.
  In the proof, we use Proposition~\ref{p:pleasants}. Any vector $\bs{l}\in\L$ is written as $\bs{l}=L\bs{r}=L\binom{\hat{\bs{r}}}{\check{\bs{r}}}$ for some $\hat{\bs{r}}\in\Z^{\hat{s}}$, $\check{\bs{r}}\in\Z^{\check{s}}$, and we have the corresponding lattice vectors $\hat{\bs{l}}=\hat{L}\hat{\bs{r}}\in\hat{L}$, $\check{\bs{l}}=\check{L}\check{\bs{r}}\in\check{L}$.
  We have
  $$
   \bs{l}=
   \begin{pmatrix}
      \hat{L}_{1} & O \\
      O & \check{L}_{1} \\
      \hat{L}_{2} & O \\
      O & \check{L}_{2}
    \end{pmatrix}\binom{\hat{\bs{r}}}{\check{\bs{r}}} =
    \begin{pmatrix}
      \hat{L}_{1}\hat{\bs{r}} \\
      \check{L}_{1}\check{\bs{r}} \\
      \hat{L}_{2} \hat{\bs{r}}\\
      \check{L}_{2} \check{\bs{r}}
    \end{pmatrix} = \binom{\bs{0}}{\bs{x}} \quad\iff\quad
      \hat{\bs{l}}=\binom{\bs{0}}{\hat{\bs{x}}},\ \check{\bs{l}}=\binom{\bs{0}}{\check{\bs{x}}}
  $$
where $\bs{x}=\binom{\hat{\bs{x}}}{\check{\bs{x}}}$. Therefore by Item (i) of Proposition~\ref{p:pleasants}, the scheme $\Lambda$ is non-degenerate if and only if $\hat{\Lambda}$ and $\check{\Lambda}$ are both non-degenerate.
Similarly we proceed for proving aperiodicity.

  For irreducibility, we use the fact that cartesian product of sets $\pi_\perp(\hat{\L})$, $\pi_\perp(\check{\L})$ that are dense in $\R^{\hat{s}-\hat{n}}$, $\R^{\check{s}-\check{n}}$ resp., is dense in $\R^{s-n}$ and vice versa.
\end{proof}

\begin{lem}
	\label{l:sklad}
  Let $\hat{A}\in\R^{\hat{n}\times \hat{n}}$, $\check{A}\in\R^{\check{n}\times\check{n}}$ be self-similarities of generic cut-and-project schemes  $\hat{\Lambda}$, $\check{\Lambda}$, respectively. Then $A=\left(\begin{smallmatrix}
      \hat{A} & O \\
      O & \check{A}
    \end{smallmatrix}\right)$,
is a self-similarity of the generic cut-and-project scheme $\Lambda=\hat{\Lambda}\oplus\check{\Lambda}$.
\end{lem}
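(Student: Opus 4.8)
The plan is to exhibit explicit witnesses $C$ and $B$ showing that $A$ satisfies the defining relation~\eqref{eq:ABC} for the direct-sum scheme, and then to invoke the facts already established for $\oplus$. First I would record what the hypotheses give. Since $\hat{A}$ is a self-similarity of $\hat{\Lambda}$, Definition~\ref{de:self} provides $\hat{C}\in\Z^{\hat{s}\times\hat{s}}$ and $\hat{B}\in\R^{(\hat{s}-\hat{n})\times(\hat{s}-\hat{n})}$ with $\left(\begin{smallmatrix}\hat{A}&O\\O&\hat{B}\end{smallmatrix}\right)\hat{L}=\hat{L}\hat{C}$, and analogously matrices $\check{C},\check{B}$ for $\check{A}$. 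The natural candidates are the block-diagonal matrices $B=\left(\begin{smallmatrix}\hat{B}&O\\O&\check{B}\end{smallmatrix}\right)$ and $C=\left(\begin{smallmatrix}\hat{C}&O\\O&\check{C}\end{smallmatrix}\right)$; note that $C\in\Z^{s\times s}$ precisely because $\hat{C}$ and $\check{C}$ are integer matrices, as required.

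Next I would unfold the associated matrix $L$ of $\Lambda=\hat{\Lambda}\oplus\check{\Lambda}$ from Definition~\ref{de:directsum}. Writing $\hat{L}=\binom{\hat{L}_1}{\hat{L}_2}$ and $\check{L}=\binom{\check{L}_1}{\check{L}_2}$ with the physical/internal row splitting used in the proof of Lemma~\ref{l:directsumgeneric}, the permutation $P$ moves the two physical blocks to the top, so that
$$
L=\begin{pmatrix}\hat{L}_1&O\\O&\check{L}_1\\\hat{L}_2&O\\O&\check{L}_2\end{pmatrix}.
$$
From the two individual relations I would extract the row-block identities $\hat{A}\hat{L}_1=\hat{L}_1\hat{C}$, $\hat{B}\hat{L}_2=\hat{L}_2\hat{C}$ and their check-analogues. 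Multiplying $L$ on the left by $\left(\begin{smallmatrix}A&O\\O&B\end{smallmatrix}\right)$ acts block-diagonally as $\hat{A},\check{A},\hat{B},\check{B}$ on the four row blocks; substituting the identities turns each block into the corresponding $\hat{L}_i\hat{C}$ or $\check{L}_i\check{C}$, and collecting the common right factor $C$ gives exactly $\left(\begin{smallmatrix}A&O\\O&B\end{smallmatrix}\right)L=LC$. This establishes condition~(2) of Definition~\ref{de:self}.

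It then remains to verify genericity and condition~(1). Genericity of $\Lambda$ is immediate from Lemma~\ref{l:directsumgeneric}, since $\hat{\Lambda}$ and $\check{\Lambda}$ are generic by hypothesis. Finally, having the relation~\eqref{eq:ABC} in hand with $C$ integer and $L$ non-singular, Proposition~\ref{p:ABCzpet} yields $A\pi_\parallel(\L)\subset\pi_\parallel(\L)$ and, because $\Lambda$ is generic, that $A$ is a self-similarity of $\Lambda$, which completes the argument.

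There is no genuinely hard step here: the only thing requiring care is the bookkeeping of the reordering performed by $P$, so that the four row blocks of $L$ line up correctly with the block-diagonal actions of $A$ and $B$. Since the explicit shape of $L$ was already computed in the proof of Lemma~\ref{l:directsumgeneric}, I can reuse it, and the verification reduces to routine block-matrix multiplication.
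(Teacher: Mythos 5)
Your proposal is correct and follows essentially the same route as the paper's own proof: the same block-diagonal choices $B=\left(\begin{smallmatrix}\hat{B}&O\\O&\check{B}\end{smallmatrix}\right)$, $C=\left(\begin{smallmatrix}\hat{C}&O\\O&\check{C}\end{smallmatrix}\right)$, the same block-matrix verification of~\eqref{eq:ABC} for the permuted matrix $L$, and the same appeals to Lemma~\ref{l:directsumgeneric} for genericity and Proposition~\ref{p:ABCzpet} to conclude. Nothing is missing.
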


\begin{proof}
  By Definition~\ref{de:self}, there exist matrices $\hat{B}\in\R^{(\hat{s}-\hat{n})\times (\hat{s}-\hat{n})}$, $\hat{C}\in\Z^{\hat{s}\times\hat{s}}$, $\check{B}\in\R^{(\check{s}-\check{n})\times(\check{s}-\check{n})}$, $\check{C}\in\Z^{\check{s}\times\check{s}}$ such that
  $$
  \begin{pmatrix}
      \hat{A} & O \\
      O & \hat{B}
   \end{pmatrix}\hat{L}=\hat{L}\hat{C}\,,\qquad
  \begin{pmatrix}
      \check{A} & O \\
      O & \check{B}
   \end{pmatrix}\check{L}=\check{L}\check{C}\,,
  $$
where $\hat{L}\in\R^{\hat{s}\times\hat{s}}$, $\check{L}\in\R^{\check{s}\times\check{s}}$ are matrices associated to the lattices $\hat{\L}\subset\R^{\hat{s}}$, $\check{\L}\subset\R^{\check{s}}$. Denoting as above
$\hat{L}=\binom{\hat{L}_1}{\hat{L}_2}$, $\check{L}=\binom{\check{L}_1}{\check{L}_2}$, we derive
$$
\begin{pmatrix}
  \hat{A} & O & O & O \\
  O & \check{A} & O & O \\
  O & O & \hat{B} & O \\
  O & O & O & \check{B}
\end{pmatrix}
\begin{pmatrix}
      \hat{L}_{1} & O \\
      O & \check{L}_{1} \\
      \hat{L}_{2} & O \\
      O & \check{L}_{2}
    \end{pmatrix}=\begin{pmatrix}
      \hat{L}_{1} & O \\
      O & \check{L}_{1} \\
      \hat{L}_{2} & O \\
      O & \check{L}_{2}
    \end{pmatrix}\begin{pmatrix}
      \hat{C} & O \\
      O & \check{C}
    \end{pmatrix}\,.
$$
Setting $B=\left(\begin{smallmatrix}
      \hat{B} & O \\
      O & \check{B}
    \end{smallmatrix}\right)$,
    $C=\left(\begin{smallmatrix}
      \hat{C} & O \\
      O & \check{C}
    \end{smallmatrix}\right)$,
     we have $\Big(\begin{smallmatrix}
      {A} & O \\
      O & {B}
   \end{smallmatrix}\Big){L}={L}{C}$, for the matrix
   $L=P\left(\begin{smallmatrix}
  \hat{L} & O \\
  O & \check{L}
  \end{smallmatrix}\right)$ corresponding to the direct sum of schemes in Definition~\ref{de:directsum} which is generic by Lemma~\ref{l:directsumgeneric}. The statement follows from Proposision~\ref{p:ABCzpet}.
\end{proof}

We would like to show that if $\Lambda$ is a cut-and-project scheme with self-similarity $A$ and $A$ can be decomposed into a block-diagonal form $A=\left(\begin{smallmatrix} \hat{A} & O \\ O & \check{A}\end{smallmatrix}\right)$, then $\Lambda$ is a direct sum of smaller schemes with self-similarities $\hat{A}$, $\check{A}$, respectively.
This is however not true in general. To see it, it suffices to realize that any cut-and-project scheme has $A= I_{\hat{n}+\check{n}}$ as a self-similarity, even though such a scheme may not be decomposable as a direct sum. Thus, we have to impose a restriction on the spectra of the self-similarities.

\begin{prop}\label{p:Anekonjug}
	Let $A\in\R^{n\times n}$ be a matrix such that all its eigenvalues are algebraic integers.
	Let $\hat{A}\in\R^{\hat{n}\times\hat{n}}$, $\check{A}\in\R^{\check{n}\times\check{n}}$ be such that $A=\left(\begin{smallmatrix} \hat{A} & O \\ O & \check{A} \end{smallmatrix}\right)$.
	Moreover, suppose that no eigenvalue of $\hat{A}$ is an algebraic conjugate of any eigenvalue of $\check{A}$.
	Let $A$ be a self-similarity of a~generic cut-and-project scheme $\Lambda=(\L\subset\R^s,\R^n)$. Then there exist generic cut-and-project schemes
	$\hat{\Lambda}=(\hat{\L}\subset\R^{\hat{s}},\R^{\hat{n}})$, $\check{\Lambda}=(\check{\L}\subset\R^{\check{s}},\R^{\check{n}})$ such that
	$A$ is a self-similarity of the direct sum $\hat{\Lambda}\oplus\check{\Lambda}$.
\end{prop}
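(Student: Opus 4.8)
The plan is to exploit the coprimality of the minimal polynomials of $\hat A$ and $\check A$ to force the intertwining matrix $L$ into block-diagonal form after suitable $\R$- and $\Q$-changes of basis, and then to recognize the resulting scheme as the required direct sum.

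First I would set $g:=\m{\hat A}$ and $h:=\m{\check A}$ and observe that $g,h$ are coprime in $\Q[X]$: an irreducible factor of $g$ is the minimal polynomial over $\Q$ of an eigenvalue of $\hat A$, and likewise for $h$ and $\check A$; a common irreducible factor would force an eigenvalue of $\hat A$ and one of $\check A$ to share a minimal polynomial, i.e.\ to be algebraic conjugates, contradicting the hypothesis. Since $A=\hat A\oplus\check A$, its minimal polynomial over $\Q$ is $\mathrm{lcm}(g,h)=gh$, and Proposition~\ref{p:minpol} yields $\mu_C=\m C=\m A=gh$. I would then prepare three compatible changes of basis. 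Over $\Q$, I put $C$ into its rational Jordan form $\widetilde C=QCQ^{-1}$ — an \emph{integer} matrix, since the irreducible factors of the monic integer polynomial $\chi_C$ are monic integer by Gauss's lemma — and group its blocks: those whose irreducible polynomial divides $g$ form $\hat C$, the rest form $\check C$. Coprimality guarantees every block lands in exactly one group, so $\widetilde C=\hat C\oplus\check C$ with $\hat C,\check C$ integer matrices, $g(\hat C)=O$, $h(\check C)=O$, and $g(\check C),h(\hat C)$ invertible. Over $\R$, I use $g(B)h(B)=\m B(B)=O$ and coprimality to split $\R^{s-n}=\ker g(B)\oplus\ker h(B)$, producing $W_B$ with $\widetilde B=W_BBW_B^{-1}=\hat B\oplus\check B$, $g(\hat B)=O$, $h(\check B)=O$, $g(\check B)$ invertible. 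For $A$ I take $W_A=I_n$, keeping $\widetilde A=\hat A\oplus\check A$.

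Next I would feed these into Lemma~\ref{l:fn}, obtaining an equivalent generic scheme $\widetilde\Lambda=(\widetilde\L\subset\R^s,\R^n)$ with associated matrix $\widetilde L$ satisfying $(\widetilde A\oplus\widetilde B)\widetilde L=\widetilde L\widetilde C$; since $\widetilde C$ is integer, $\widetilde A$ is a self-similarity of $\widetilde\Lambda$. The decisive step is to read off the block structure of $\widetilde L$. Writing $M:=\widetilde A\oplus\widetilde B=\hat A\oplus\check A\oplus\hat B\oplus\check B$, the relation gives $p(M)\widetilde L=\widetilde L\,p(\widetilde C)$ for every polynomial $p$. Taking $p=g$ and using $g(\widetilde C)=O\oplus g(\check C)$ shows that the columns of $\widetilde L$ belonging to $\hat C$ are annihilated by $g(M)=O\oplus g(\check A)\oplus O\oplus g(\check B)$; as $g(\check A)$ and $g(\check B)$ are invertible, the $\check A$- and $\check B$-row blocks of those columns vanish. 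The symmetric argument with $p=h$ kills the $\hat A$- and $\hat B$-row blocks of the $\check C$-columns. Hence, after the row reordering furnished by the permutation $P$ of Definition~\ref{de:directsum}, $\widetilde L=P\left(\begin{smallmatrix}\hat L & O\\ O & \check L\end{smallmatrix}\right)$.

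I would then finish by bookkeeping. Non-singularity of $\widetilde L$ (equivalently of $\left(\begin{smallmatrix}\hat L & O\\ O & \check L\end{smallmatrix}\right)$) forces each block to be square and invertible, which pins down $\hat s:=\hat n+(\text{size of }\hat B)$ and $\check s:=\check n+(\text{size of }\check B)$ and shows $\hat L,\check L$ to be genuine associated matrices of lattices $\hat\L\subset\R^{\hat s}$, $\check\L\subset\R^{\check s}$. Restricting $M\widetilde L=\widetilde L\widetilde C$ to the two column blocks yields $(\hat A\oplus\hat B)\hat L=\hat L\hat C$ and $(\check A\oplus\check B)\check L=\check L\check C$ with $\hat C,\check C$ integer; by construction $\widetilde\Lambda=\hat\Lambda\oplus\check\Lambda$, so Lemma~\ref{l:directsumgeneric} applied to the generic $\widetilde\Lambda$ gives genericity of both summands, whence Proposition~\ref{p:ABCzpet} makes $\hat A,\check A$ self-similarities of $\hat\Lambda,\check\Lambda$. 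Finally Lemma~\ref{l:sklad} confirms that $A=\hat A\oplus\check A$ is a self-similarity of $\hat\Lambda\oplus\check\Lambda$, as required. The main obstacle I anticipate is keeping the split \emph{integral}: an arbitrary $\Q$-similarity would only decompose $C$ into rational blocks, whereas Definition~\ref{de:self} demands $\hat C,\check C$ integer. Routing the argument through the rational Jordan form (an integer matrix whose blocks group by irreducible factor) is exactly what secures integrality, while the coprimality of $g$ and $h$ is what makes the off-diagonal blocks of $\widetilde L$ vanish; the remaining dimension-matching is then forced by non-singularity.
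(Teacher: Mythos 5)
Your proof is correct and follows the same skeleton as the paper's: pass to an equivalent scheme in which $B$ and $C$ are block-diagonal compatibly with the conjugacy split, show that the mixed blocks of the lattice matrix vanish, and conclude via Lemma~\ref{l:directsumgeneric}, Proposition~\ref{p:ABCzpet} and Lemma~\ref{l:sklad}. The differences are in execution of the two key steps. Where the paper kills the off-diagonal blocks of $L$ by quoting the Sylvester-equation fact from Cullen ($MX=XN$ with $\sigma(M)\cap\sigma(N)=\emptyset$ forces $X=O$), you re-derive exactly that fact through $p(M)\widetilde{L}=\widetilde{L}\,p(\widetilde{C})$ for $p\in\{g,h\}$ together with coprimality; this is the standard proof of the quoted fact, so the mathematical content is identical, but your version is self-contained. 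You are also more careful than the paper on a point it compresses into ``without loss of generality (by Lemma~\ref{l:BUNO})'', namely that $C$ may be replaced by a block-diagonal matrix with \emph{integer} blocks $\hat{C},\check{C}$ separating the two conjugacy classes; your route through the rational Jordan form and Gauss's lemma supplies that justification, which matters because Definition~\ref{de:self} demands integrality of the lattice-level matrix. One small addition is needed in your bookkeeping: you must check that $\ker g(B)$ and $\ker h(B)$ are both non-trivial, i.e.\ that $\hat{B}$ and $\check{B}$ are non-empty, since otherwise $\hat{\Lambda}$ or $\check{\Lambda}$ would have internal dimension $0$ and would not be a cut-and-project scheme at all. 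This follows in one line from facts you already invoke: if $\ker g(B)=\{\bs{0}\}$, then $g(B)$ is invertible and $g(B)h(B)=\m{B}(B)=O$ forces $h(B)=O$, contradicting $\m{B}=gh$. (The paper secures this same point by observing, via Corollary~\ref{c:spektrumAB}, that both parts of $\sigma(B)$ are non-empty.)
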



\begin{proof}
	Let $\Lambda=(\L\subset\R^s,\R^n)$ and let $L\in\R^{s\times s}$ be a non-singular matrix associated to $\L$.
	Let $A=\left(\begin{smallmatrix}
	\hat{A} & O \\
	O & \check{A}
	\end{smallmatrix}\right)$ for $\hat{A}\in\R^{\hat{n}\times\hat{n}}$, $\check{A}\in\R^{\check{n}\times\check{n}}$. If $A$ is a self-similarity of $\pi_\parallel(\L)$, then by Definition~\ref{de:self}, there exist $C\in\Z^{s\times s}$, $B\in\R^{(s-n)\times (s-n)}$ such that $\left(\begin{smallmatrix}
	A & O \\
	O & B
	\end{smallmatrix}\right)L=LC$. According to Corollary~\ref{c:spektrumAB}, the spectrum of $B$ can be decomposed into a union of a set of numbers which are algebraic conjugates of eigenvalues in the spectrum of $\hat{A}$ and a set of numbers conjugated to eigenvalues of $\check{B}$. These sets are non-empty and disjoint. Therefore without loss of generality (by Lemma~\ref{l:BUNO})
the matrices $B$ and $C$ are of the form $B=\left(\begin{smallmatrix}
	\hat{B} & O \\
	O & \check{B}
	\end{smallmatrix}\right)$, $\hat{B}\in\R^{(\hat{s}-\hat{n})\times(\hat{s}-\hat{n})}$, $\check{B}\in\R^{(\check{s}-\check{n})\times(\check{s}-\check{n})}$
	and $C=\left(\begin{smallmatrix}
	\hat{C} & O \\
	O & \check{C}
	\end{smallmatrix}\right)$, $\hat{C}\in\Z^{\hat{s}\times\hat{s}}$, $\check{C}\in\Z^{\check{s}\times\check{s}}$.
	Moreover, $\sigma(\hat{C})=\sigma(\hat{A})\cup\sigma(\hat{B})$, $\sigma(\check{C})=\sigma(\check{A})\cup\sigma(\check{B})$, and by assumption,
	no eigenvalue of $\hat{C}$ is an algebraic conjugate of an eigenvalue of $\check{C}$.
	
Write the matrix $L$ as a block matrix in the form
\begin{equation}\label{eq:velkaL}
L=
\begin{pmatrix}
\hat{L}_1 &\hat{J}_1\\
\check{J}_1 &\check{L}_1\\
\hat{L}_2 &\hat{J}_2\\
\check{J}_2 &\check{L}_2
\end{pmatrix}
\end{equation}
where the blocks are of dimensions suitable so that the following multiplication can be performed block-wise,
\begin{equation}\label{eq:cartesian}
  \begin{pmatrix}
	\hat{A} & O & O & O \\
	O & \check{A} & O & O \\
	O & O & \hat{B} & O \\
	O & O & O & \check{B}
	\end{pmatrix} \begin{pmatrix}
\hat{L}_1 &\hat{J}_1\\
\check{J}_1 &\check{L}_1\\
\hat{L}_2 &\hat{J}_2\\
\check{J}_2 &\check{L}_2
\end{pmatrix} =
	\begin{pmatrix}
\hat{L}_1 &\hat{J}_1\\
\check{J}_1 &\check{L}_1\\
\hat{L}_2 &\hat{J}_2\\
\check{J}_2 &\check{L}_2
\end{pmatrix}
	\begin{pmatrix}
	\hat{C} & O \\
	O & \check{C}
	\end{pmatrix}.
\end{equation}

We will now use a simple statement on matrices whose proof can be found for example in~\cite[Chapter 5]{Cullen}.

{\it Let $M,N$ be square matrices (not necessarily of the same order) such that their spectra are disjoint. Let further $X$ be a matrix such that $MX=XN$.
Then $X$ is a zero matrix.}

We apply the statement to~\eqref{eq:cartesian} wherefrom we derive that
$$
\begin{aligned}
\hat{A}\hat{J}_1 &= \hat{J}_1 \check{C},\\
\hat{B}\hat{J}_2 &= \hat{J}_2 \check{C},
\end{aligned}
\qquad
\begin{aligned}
\check{A}\check{J}_1 &= \check{J}_1 \hat{C},\\
\check{B}\check{J}_2 &= \check{J}_2 \hat{C},
\end{aligned}
$$
which implies that the matrices $\hat{J}_1$, $\hat{J}_2$, $\check{J}_1$, $\check{J}_2$ all vanish. We conclude that the cut-and-project scheme $\Lambda$ with the lattice $\L$ determined by the associated matrix $L$ in the form~\eqref{eq:velkaL} is by Definition~\ref{de:directsum} a direct sum of cut-and-project schemes with matrices $\hat{L}=\binom{\hat{L}_1}{\hat{L}_2}$, $\check{L}=\binom{\check{L}_1}{\check{L}_2}$. By Lemma~\ref{l:directsumgeneric}, the resulting scheme is generic.
\end{proof}

\section{Reformulation using inverse matrices}\label{sec:reformulationY}

Let us have a different look on equation~\eqref{eq:ABC} in Definition~\ref{de:self}. If $A\in\R^{n\times n}$ is a self-similarity of a given cut-and-project scheme $(\L\subset\R^s,\R^n)$ with lattice $\L$ determined by an associated matrix $L$ and if $B\in\R^{(s-n)\times(s-n)}$ and $C\in\Z^{s\times s}$ are matrices satisfying~\eqref{eq:ABC}, we obtain for the inverse $Y=L^{-1}$ of the matrix $L$ that
\begin{equation}\label{eq:ABC-Y}
  Y\begin{pmatrix}
     A & O \\
     O & B
   \end{pmatrix}=CY.
\end{equation}
Dividing the matrix $Y$ into two rectangular blocks $Y=(Y_1,Y_2)$, $Y_1\in\R^{s\times n}$, $Y_2\in\R^{s\times (s-n)}$, we derive from~\eqref{eq:ABC-Y} that $Y_1A=CY_1$ and $Y_2B=CY_2$. In other words, for $i=1,2$, the $\R$-span of columns of the matrix $Y_i$ is a real invariant subspace of the matrix $C$.
This is the reason why sometimes it is more convenient to work with the matrix $Y$ instead of the matrix $L$. Let us reformulate the necessary and sufficient conditions on the properties of a cut-and-project scheme.

\begin{prop}\label{p:generic}
	Let $\Lambda = (\L\subset \R^s, \R^n)$ be a cut-and-project scheme. Let $L \in \R^{s\times s}$ be a matrix associated to $\L$ and denote $Y = L^{-1}$. Then $\Lambda$ is
	\begin{itemize}
		\item[(i)] non-degenerate if and only if $\mathrm{span}_\C\{Y_{\bullet n+1},\dots,Y_{\bullet s}\}\cap\Q^s=\{\bs{0}\}$,
		\item[(ii)] aperiodic if and only if $\mathrm{span}_\C\{Y_{\bullet 1},\dots,Y_{\bullet n}\}\cap\Q^s=\{\bs{0}\}$,
		\item[(iii)] irreducible if and only if there exists $\bs{y} \in \mathrm{span}_\C\{Y_{\bullet 1},\dots,Y_{\bullet n}\}$ such that for all $\bs{q} \in \Z^s$, $\bs{q}\neq \bs{0}$ it holds that $\bs{q}^\top \bs{y} \neq 0$.
	\end{itemize}
\end{prop}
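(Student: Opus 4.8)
The plan is to establish each of the three equivalences by translating the geometric conditions from Proposition~\ref{p:pleasants} (or directly from Definition~\ref{de:capscheme}) into statements about the rows of $L$, and then to re-express those in terms of the columns of $Y=L^{-1}$ via the basic duality $LY = Y L = I_s$. The central observation I would exploit throughout is that since $\L=\{L\bs{r}:\bs{r}\in\Z^s\}$, a vector $\bs{v}\in\R^s$ lies in $\L$ precisely when $Y\bs{v}=\bs{r}\in\Z^s$. Thus membership in the lattice, which is awkward to phrase through $L$, becomes the integrality of $Y\bs{v}$, and the projections $\pi_\parallel,\pi_\perp$ single out the first $n$ versus the last $s-n$ coordinates.

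For item (ii), aperiodicity, I would argue as follows. By Proposition~\ref{p:pleasants}(ii), the scheme fails to be aperiodic exactly when there is a nonzero $\bs{x}\in\R^n$ with $\binom{\bs{x}}{\bs{0}}\in\L$, i.e.\ with $Y\binom{\bs{x}}{\bs{0}}=\bs{q}\in\Z^s\setminus\{\bs{0}\}$. Writing $Y=(Y_1,Y_2)$ as in the text, we have $Y\binom{\bs{x}}{\bs{0}}=Y_1\bs{x}$, so a periodicity vector corresponds to a nonzero integer vector $\bs{q}$ of the form $\bs{q}=Y_1\bs{x}$ for some $\bs{x}\in\R^n$. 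Because $Y_1$ has full column rank $n$, such an $\bs{x}$ is real iff the real vector $\bs{q}$ lies in $\mathrm{span}_\R\{Y_{\bullet 1},\dots,Y_{\bullet n}\}$; one then upgrades $\mathrm{span}_\R$ to $\mathrm{span}_\C$ by noting that a rational (hence real) vector lying in the complex span of the $Y_{\bullet j}$ already lies in their real span, since the real and imaginary parts can be separated. Hence aperiodicity is equivalent to $\mathrm{span}_\C\{Y_{\bullet 1},\dots,Y_{\bullet n}\}\cap\Q^s=\{\bs{0}\}$, which is exactly~(ii). Item~(i) is proved by the identical argument with the roles of the first $n$ and the last $s-n$ coordinates exchanged, using Proposition~\ref{p:pleasants}(i) and the block $Y_2$.

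For item (iii), irreducibility, I would start from the criterion of Proposition~\ref{p:pleasants}(iii): the physical space $\R^n\times\{\bs{0}\}$ is not contained in any hyperplane spanned by lattice vectors. A hyperplane through the origin is $\{\bs{v}:\bs{q}^\top\bs{v}=0\}$ for some normal $\bs{q}$; the point I must track is that a hyperplane is \emph{rational with respect to the lattice} (spanned by $s-1$ lattice vectors) precisely when its normal can be taken in $\Z^s$ after applying $Y^\top$, that is, when the functional $\bs{v}\mapsto\bs{q}^\top Y\bs{v}$ has an integer coefficient vector. Translating back through $Y$, the condition that $\R^n\times\{\bs{0}\}$ avoids every such rational hyperplane becomes the statement that there is a vector $\bs{y}$ in the physical-space image $\mathrm{span}_\C\{Y_{\bullet 1},\dots,Y_{\bullet n}\}$ on which no nonzero integer functional vanishes, i.e.\ $\bs{q}^\top\bs{y}\neq 0$ for all $\bs{q}\in\Z^s\setminus\{\bs{0}\}$. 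The main obstacle, and the step deserving the most care, is precisely this last bookkeeping: correctly matching ``hyperplane generated by $s-1$ lattice vectors'' with ``integer normal after transport by $Y$,'' and verifying that passing to $\mathrm{span}_\C$ rather than $\mathrm{span}_\R$ introduces no spurious vectors. Once that dictionary between the $L$-picture of Proposition~\ref{p:pleasants} and the $Y$-picture is pinned down, items~(i)--(iii) all follow by the same column-versus-row correspondence.
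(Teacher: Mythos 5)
Your treatment of items (i) and (ii) is correct and essentially the same as the paper's: both rest on the observation that $\bs{v}\in\L$ if and only if $Y\bs{v}\in\Z^s$, so a nonzero lattice vector killed by $\pi_\parallel$ (resp.\ $\pi_\perp$) is the same thing as a nonzero integer vector in the span of the last $s-n$ (resp.\ first $n$) columns of $Y$; the passage from $\mathrm{span}_\R$ to $\mathrm{span}_\C$ is indeed harmless, since a rational vector in the complex span of real vectors already lies in their real span.

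The gap is in item (iii). Unwinding Proposition~\ref{p:pleasants}(iii) through $Y$, as you do, irreducibility says: \emph{for every} nonzero $\bs{q}\in\Z^s$ the subspace $\mathrm{span}_\C\{Y_{\bullet 1},\dots,Y_{\bullet n}\}$ is not contained in the hyperplane $\{\bs{v}:\bs{q}^\top\bs{v}=0\}$, i.e.\ for each $\bs{q}$ there is a witness $\bs{y}_{\bs{q}}$ in the span with $\bs{q}^\top\bs{y}_{\bs{q}}\neq 0$. The statement to be proved, however, asserts a \emph{single} $\bs{y}$ working for all nonzero $\bs{q}$ simultaneously. Your sentence ``Translating back through $Y$, the condition \dots becomes the statement that there is a vector $\bs{y}$ \dots'' silently performs this exchange of quantifiers ($\forall\bs{q}\,\exists\bs{y}$ versus $\exists\bs{y}\,\forall\bs{q}$), and that exchange is the real content of the proposition --- not the hyperplane--normal bookkeeping you flag as the main obstacle, which is routine (the left kernel of an integer $s\times(s-1)$ matrix contains a nonzero integer vector, and conversely $\{\bs{r}\in\Z^s:\bs{q}^\top\bs{r}=0\}$ has rank $s-1$). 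The paper devotes the bulk of its proof to exactly this point: in the implication (b)$\Rightarrow$(a) it takes a $\Q$-basis $\gamma_1,\dots,\gamma_d$ of the rational vector space spanned by the entries of $Y_1$, chooses reals $t_1,\dots,t_n$ so that the products $t_i\gamma_j$ are linearly independent over $\Q$, and sets $\bs{y}=\sum_i t_iY_{\bullet i}$; any single integer annihilator of this particular $\bs{y}$ is then forced to annihilate every column $Y_{\bullet 1},\dots,Y_{\bullet n}$, which produces the common normal needed to build the matrix $S$. (Alternatively, one could argue that the span, being a vector space over an uncountable field, cannot be covered by the countably many proper subspaces $\{\bs{y}:\bs{q}^\top\bs{y}=0\}$, $\bs{q}\in\Z^s\setminus\{\bs{0}\}$.) Without some argument of this kind, your proof of (iii) is incomplete.
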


\begin{proof}
For the sake of the proof, write $Y=(Y_1,Y_2)$, where $Y_1\in\R^{s\times n}$, $Y_2\in\R^{s\times (s-n)}$.
First we show the statement in Item (i). Let $\bs{r}$ be a rational vector in $\mathrm{span}_\C\{Y_{\bullet n+1},\dots,Y_{\bullet s}\}$. Without loss of generality, consider $\bs{r}\in\Z^s$.  Then there exists $\bs{x}\in\C^{s-n}$
$$
\bs{r}=Y_2\bs{x}=Y\binom{O}{I_{s-n}}\bs{x}.
$$
Moreover, we have
	$$
	\pi_\parallel (L\bs{r})=(I_n,O) L\bs{r} = (I_n,O) LY\binom{O}{I_{s-n}}\bs{x} = (I_n,O) I_s \binom{O}{I_{s-n}}\bs{x} = \bs{0}\in\C^n\,.
	$$
Since by assumption of non-degeneracy $\pi_\parallel$ restricted to $\L$ is injective, and $L\bs{r}\in\L$, we obtain that $L\bs{r}=\bs{0}$. Consequently, $\bs{r}=\bs{0}$.

Assume on the other hand that the scheme is degenerate, i.e.\ there is a non-zero lattice vector $\bs{l}=L\bs{r}$ for some $\bs{0}\neq\bs{r}\in\Z^s$ such that
$\bs{0}=\pi_\parallel(L\bs{r})=(I_n,O) L\bs{r}$. Then
$$
L\bs{r}=\binom{\bs{0}}{\bs{x}} = \binom{O}{I_{s-n}}\bs{x}\quad \text{ for some } \bs{x}\in\C^{s-n}.
$$
This implies $\bs{r}=L^{-1}{I_{s-n}}\bs{x}=Y{I_{s-n}}\bs{x}$, i.e.\ $\bs{r}\in\mathrm{span}_\C\{Y_{\bullet n+1},\dots,Y_{\bullet s}\}\cap\Q^s$.
Analogically, we demonstrate the claim in Item (ii).
	
Let us focus on Item (iii). We will show it using Item (iii) of Proposition~\ref{p:pleasants}. Since $L$ is a real matrix, the vectors $\bs{u}$, $\bs{x}$ in Item (iii) of Proposition~\ref{p:pleasants} can be equivalently taken to be complex.
We therefore need to show equivalence of the following two statements:

(a) There exists $S\in\Z^{s\times(s-1)}$ such that for every $\bs{x}\in\C^n$ we have $\binom{\bs{x}}{\bs{0}}\in \{LS \bs{u}: \bs{u}\in\C^{s-1}\}$.
	
(b) For every $\bs{y} \in \mathrm{span}_\C\{Y_{\bullet 1},\dots,Y_{\bullet n}\}$ there exists $\bs{q} \in \Z^s$, $\bs{q}\neq \bs{0}$ such that $\bs{q}^\top \bs{y} = 0$.

Let us prove the implication (a)$\Rightarrow$(b). The fact that for every $\bs{x}\in\C^n$ we have $\binom{\bs{x}}{\bs{0}}\in \{LS \bs{u}: \bs{u}\in\C^{s-1}\}$ can be rewritten as
that for  every $\bs{x}\in\C^n$ we have $Y\binom{\bs{x}}{\bs{0}}\in \{S \bs{u}: \bs{u}\in\C^{s-1}\}$. This is further equivalent to saying that
$$
\mathrm{span}_\C\{Y_{\bullet 1},\dots,Y_{\bullet n}\}\subset \{S \bs{u}: \bs{u}\in\C^{s-1}\}.
$$
Consider a non-zero $\bs{q}\in\Z^s$ such that $\bs{q}^\top S = \bs{0}^\top\in\C^{1\times(s-1)}$. Since every $\bs{y}\in \mathrm{span}_\C\{Y_{\bullet 1},\dots,Y_{\bullet n}\}$ is of the form $\bs{y}=S\bs{u}$, $\bs{u}\in\C^{s-1}$, we obtain $\bs{q}^\top \bs{y} =\bs{q}^\top S\bs{u}=\bs{0}^\top\bs{u} = 0$.
	
In order to prove (b)$\Rightarrow$(a), consider the vector space
$$
{\mathcal E}= \big\{\sum_{i=1}^s\sum_{j=1}^{n}q_{ij}Y_{ij} : q_{ij}\in\Q\big\}\subset\R
$$
over the rationals. The space ${\mathcal E}$ is generated by coordinates of the columns of the matrix $Y_1$. Denote $d=\dim_{\Q}{\mathcal E}$ and choose a basis $\gamma_1,\dots,\gamma_d$ of ${\mathcal E}$. Since the dimension of $\R$ as a vector space over $\Q$ is infinite, there exist real numbers $t_1,\dots,t_n\in\R$ such that $t_i\gamma_j$, $i=1,\dots,n$, $j=1,\dots,d$, are linearly independent over $\Q$. Choose
$$
\bs{y}=t_1 Y_{\bullet 1} +t_2 Y_{\bullet 2} + \cdots + t_n Y_{\bullet n}.
$$
Statement (b) implies existence of a non-zero $\bs{q}\in\Z^s$ such that
$$
0=\bs{q}^\top \bs{y} = t_1 \bs{q}^\top Y_{\bullet 1} +t_2 \bs{q}^\top Y_{\bullet 2} + \cdots + t_n \bs{q}^\top Y_{\bullet n}.
$$
As $\bs{q}^\top Y_{\bullet j}$ belongs to $\mathcal{E} = \mathrm{span}_\Q\{\gamma_1,\dots,\gamma_d\}$, we can write for
every $i =1,2,...,n$ that $\bs{q}^\top Y_{\bullet i} = \sum_{j=1}^d q_{ij}\gamma_j$. Then
$0 = \sum_{i=1}^n  t_i \bs{q}^\top Y_{\bullet i} = \sum_{i=1}^{n}\sum_{j=1}^{s}  q_{ij} t_i\gamma_j$.
Since numbers $t_i\gamma_j$ are linearly independent over $\Q$, the latter implies that $q_{ij}=0$, and thus $\bs{q}^\top Y_{\bullet i}=0$
 for every $i=1,\dots,n$. Denoting $\bs{q}^\top =(q_1,q_2,\dots,q_s)$, this can be rewritten as
\begin{equation}\label{eq:zs}
  q_s Y_{s i} = -q_1 Y_{1 i} - q_2 Y_{2 i} - \cdots -q_{s-1} Y_{(s-1) i}.
\end{equation}

Since $\bs{q}\neq \bs{0}$, we can assume without loss of generality that $q_s\neq 0$.
Set
$$
S=\begin{pmatrix}
q_s&0&\cdots&0\\
0 &q_s&\cdots&0\\
\vdots&\vdots&\ddots&\vdots\\
0&0&\cdots&q_s\\
-q_1&-q_2&\cdots&-q_{s\!-\!1}
\end{pmatrix}\in\Z^{s\times (s-1)}
$$
and for $i=1,2,\dots,n$
$$
\bs{u_i} =
\begin{pmatrix}
Y_{1i}\\
Y_{2i}\\
\vdots\\
Y_{(s-1)i}
\end{pmatrix}.
$$
Then  using~\eqref{eq:zs}, we have $S\bs{u}_i = q_sY_{\bullet i}$.

Now let us show that for every $\bs{x}\in\C^n$ the vector $\binom{\bs{x}}{\bs{0}}\in\C^s$ can be written as $LS\bs{u}$ for some $\bs{u}\in\C^{s-1}$.
Consider $\bs{x}\in\C^n$, $\bs{x}^\top = (x_1,x_2,\dots,x_n)$. Set
$$
\bs{u}^\top = \frac{1}{q_s}(x_1\bs{u}_1+x_2\bs{u}_2+\cdots+x_n\bs{u}_n)^\top.
$$
Then, denoting $\bs{e}_1,\dots,\bs{e}_s$ the standard basis of $\R^s$, and since $L$ is the inverse to $Y=(Y_1,Y_2)$, we have
$$
LS\bs{u} = L (x_1 Y_{\bullet 1} +x_2 Y_{\bullet 2} + \cdots + x_n Y_{\bullet n}) =
x_1\bs{e}_1 + x_2\bs{e}_2+\cdots+x_n\bs{e}_n
=
\binom{\bs{x}}{\bs{0}}.
$$
This completes the proof of Item (iii).
\end{proof}


\section{Vandermonde cut-and-project schemes}\label{sec:element}


In this section we demonstrate the construction of an elementary scheme corresponding to a monic polynomial $f\in\Z[x]$ irreducible over $\Q$. 
The construction will provide a generic cut-and-project scheme if $f$ is not linear or quadratic with negative discriminant.
The lattice $\L$ is constructed using the inverse to the Vandermonde matrix of $f$ (or its real version).
Denote the mutually distinct roots of $f$ by $\beta_j$, $j=1,\dots,d$. The Vandermonde matrix $Z_f$ of the polynomial $f$ is of the form
$$
Z=Z_f=\begin{pmatrix}
1&1&\cdots&1\\
\beta_1 &\beta_2&\cdots&\beta_d\\[1mm]
\beta_1^2&\beta_2^2&\cdots&\beta_d^2\\
\vdots&\vdots&\cdots&\vdots\\[1mm]
\beta_1^{d-1}&\beta_2^{d-1}&\cdots&\beta_d^{d-1}
\end{pmatrix}.
$$
The Vandermonde matrix is non-singular, since numbers $\beta_j$ are mutually distinct.
Note that $\beta_j$ are eigenvalues of the companion matrix $C=C_f$ and the corresponding eigenvectors can be chosen to be the columns of the matrix $Z_f$, i.e.\  $\bs{z}_j=(1,\beta_j,\beta_j^2,\dots,\beta_j^{d-1})^\top$. This implies
$$
CZ=ZD\,\quad \text{ where } D=\,\mathrm{diag}(\beta_1,\dots,\beta_d)\,.
$$
Any choice of the columns of $Z$ therefore generates an invariant subspace of $C$. We will further need the following fact about the matrix $Z=Z_f$.

\begin{lem}\label{l:obrazy}
  Let $\beta_1,\dots,\beta_d$ be roots of a polynomial $f\in\Z[x]$ irreducible over $\Q$.
  Let $Z$ be the $d\times d$ matrix defined above. Then the $j$-th row of the matrix $Z^{-1}$ has components in the field $\Q(\beta_j)$ and is the image of the first row of $Z^{-1}$ under the field isomorphism $\psi:\Q(\beta_1)\to\Q(\beta_j)$ defined by the assignment $\beta_1\mapsto\beta_j$.
\end{lem}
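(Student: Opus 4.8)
The plan is to recognize the rows of $Z^{-1}$ as coefficient vectors of Lagrange interpolation polynomials, and then read off their arithmetic nature from an explicit formula.

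First I would set $W=Z^{-1}$ and exploit $WZ=I$. Since the $(i,k)$ entry of $Z$ is $\beta_k^{i-1}$, the identity $\sum_{i=1}^d W_{ji}\beta_k^{i-1}=\delta_{jk}$ says exactly that the polynomial $g_j(X):=\sum_{i=1}^d W_{ji}X^{i-1}$, whose coefficients are the $j$-th row of $Z^{-1}$, satisfies $g_j(\beta_k)=\delta_{jk}$ for all $k$. As $\deg g_j\le d-1$ and the $\beta_k$ are distinct, $g_j$ is the Lagrange basis polynomial $g_j(X)=\prod_{k\ne j}\frac{X-\beta_k}{\beta_j-\beta_k}$. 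Since $f$ is monic with $f(X)=\prod_{k=1}^d(X-\beta_k)$ and $\prod_{k\ne j}(\beta_j-\beta_k)=f'(\beta_j)$, this can be rewritten as $g_j(X)=\dfrac{f(X)}{(X-\beta_j)f'(\beta_j)}$.

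The key algebraic observation I would use is that $q(X,t):=\dfrac{f(X)-f(t)}{X-t}$ lies in $\Z[X,t]$, because $f(X)-f(t)$ is divisible by $X-t$ already over $\Z$ (expand $X^m-t^m$). Hence, whenever $f(\beta_j)=0$, one has $\dfrac{f(X)}{X-\beta_j}=q(X,\beta_j)$. Writing $q(X,t)=\sum_{i=0}^{d-1}q_i(t)X^i$ with $q_i\in\Z[t]$ yields the explicit formula $W_{j,i+1}=q_i(\beta_j)/f'(\beta_j)$. Since $f$ is irreducible over $\Q$ in characteristic $0$ it is separable, so $f'(\beta_j)\ne 0$, and this formula exhibits every entry of the $j$-th row of $Z^{-1}$ as an element of $\Q(\beta_j)$, which is the first assertion. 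For the second assertion, I would observe that $W_{j,i+1}$ is obtained from $W_{1,i+1}=q_i(\beta_1)/f'(\beta_1)$ by substituting $\beta_j$ for $\beta_1$ in the same expression with rational coefficients. The isomorphism $\psi:\Q(\beta_1)\to\Q(\beta_j)$ fixes $\Q$ pointwise and sends $\beta_1\mapsto\beta_j$; being a $\Q$-algebra homomorphism it commutes with evaluation of polynomials in $\Z[t]$, so $\psi\big(q_i(\beta_1)\big)=q_i(\beta_j)$ and $\psi\big(f'(\beta_1)\big)=f'(\beta_j)$. Therefore $\psi(W_{1,i+1})=q_i(\beta_j)/f'(\beta_j)=W_{j,i+1}$ for every $i$, i.e.\ the $j$-th row of $Z^{-1}$ is the image of its first row under $\psi$.

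The step I expect to be the main obstacle is precisely this last passage. One is tempted to verify the conjugacy by evaluating $\psi$-transformed polynomials at the nodes $\beta_k$, but this fails because $\psi$ is defined only on $\Q(\beta_1)$, and the other roots need not lie in that field. The integer two-variable polynomial $q(X,t)$ is what circumvents the difficulty: it packages the Lagrange coefficients as genuine polynomial expressions in a \emph{single} root with coefficients in $\Q$, so that applying a $\Q$-homomorphism reduces to the substitution $\beta_1\mapsto\beta_j$. All remaining manipulations are routine.
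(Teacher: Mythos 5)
Your proof is correct, but it takes a genuinely different route from the paper. The paper argues spectrally: from $CZ=ZD$ (with $C$ the companion matrix of $f$) it gets $C^\top (Z^{-1})^\top=(Z^{-1})^\top D$, so the $j$-th row $\bs{w}_j^\top$ of $Z^{-1}$ is an eigenvector of $C^\top$ for the simple eigenvalue $\beta_j$; since $C^\top-\beta_j I$ has entries in $\Q(\beta_j)$ this eigenvector lives in $\Q(\beta_j)$ up to scale, the normalization $\bs{w}_j^\top\bs{z}_j=1$ fixes the scale, and applying $\psi$ to the equation $C^\top\bs{w}_1=\beta_1\bs{w}_1$ together with uniqueness of the normalized eigenvector yields $\psi(\bs{w}_1)=\bs{w}_j$. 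You instead compute $Z^{-1}$ in closed form: its $j$-th row consists of the coefficients of the Lagrange basis polynomial $f(X)/\bigl((X-\beta_j)f'(\beta_j)\bigr)$, and the divided-difference identity $q(X,t)=\bigl(f(X)-f(t)\bigr)/(X-t)\in\Z[X,t]$ exhibits every entry as a fixed integer polynomial in $\beta_j$ divided by $f'(\beta_j)$, after which applying the $\Q$-homomorphism $\psi$ entrywise is immediate. Your approach buys explicitness and self-containedness — it proves the classical inverse-Vandermonde formula along the way, avoids the (mildly delicate) step of arguing that a simple eigenvector of $C^\top$ can be normalized over $\Q(\beta_j)$, and correctly flags why the naive substitution in the Lagrange formula fails (the other roots $\beta_k$ need not lie in $\Q(\beta_1)$, so $\psi$ cannot be applied to them). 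The paper's approach buys brevity within its own framework, since $CZ=ZD$ and the matrix formalism are already in place and get reused throughout the section.
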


\begin{proof}
 Denote for simplicity by $\bs{w}_j^\top$ the $j$-th row $(Z^{-1})_{j\bullet}$ of the matrix $Z^{-1}$. The vector $\bs{w}_j$ is uniquely given by the relations
 $\bs{w}_j^\top\bs{z}_k=\delta_{jk}$, $k=1,\dots,d$ where $\delta_{jk}$ stands for the Kronecker symbol.

  Recall that $CZ=ZD$. Hence $Z^{-1}C=DZ^{-1}$ and after transposition
  $ C^\top(Z^{-1})^\top = (Z^{-1})^\top D$. Obviously, for any $i$, the $i$-th column $\bs{w}_i$ of $(Z^{-1})^\top$ is an eigenvector of the matrix $C^\top$ corresponding to the eigenvalue $\beta_i$, i.e.\ $C^\top\bs{w}_i = \beta_i\bs{w}_i$. Its components therefore belong to $c\cdot\Q(\beta_i)$ for a constant $c$.  As $\bs{w}_i^\top\bs{z}_i=1$, we derive that  $\bs{w}_i^\top\in\Q(\beta_i)$. Applying the field isomorphism $\psi$ on the equation $C^\top\bs{w}_1 = \beta_1\bs{w}_1$, we obtain (with a little abuse of notation)
  that
  $$
  C^\top\psi(\bs{w}_1) = \beta_j\psi(\bs{w}_1)\,,
  $$
  where we have used the fact that $\psi$ restricted to $\Q$ is the identity. We have derived that $\psi(\bs{w}_1)\in\Q(\beta_j)$ is an eigenvector of $C^\top$ corresponding to the eigenvalue $\beta_j$, i.e.\ $\psi(\bs{w}_1)=t\bs{w}_j$ for a constant $t\in\Q(\beta_j)$. Knowing that $\psi(\bs{z}_1)=\bs{z}_j$, we further have $t=t\bs{w}_j^\top\bs{z}_j=\psi(\bs{w}_1^\top\bs{z}_1)=\psi(1)=1$. Thus  $\bs{w}_j=\psi(\bs{w}_1)$.
\end{proof}

The above lemma has a simple corollary. Having a rational vector $\bs{r}\in\Q^d$, the components of the vector $Z^{-1}\bs{r}$ are images under Galois automorphisms in the splitting field $\F=\Q(\beta_1,\dots,\beta_d)$ of the polynomial $f$. Therefore the components must be all non-zero or all equal to 0.

\begin{coro}\label{c:vektor0}
  Let $Z$ be as above and let $\bs{r}\in\Q^{d}$. If at least one component of the vector $Z^{-1}\bs{r}$ vanishes, then $\bs{r}=\bs{0}$.
\end{coro}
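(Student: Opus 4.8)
The plan is to exploit the explicit description of the rows of $Z^{-1}$ supplied by Lemma~\ref{l:obrazy}, which identifies the $j$-th row as the image of the first row under the field isomorphism $\psi:\Q(\beta_1)\to\Q(\beta_j)$ sending $\beta_1\mapsto\beta_j$. The guiding idea is that the components of $Z^{-1}\bs{r}$ are Galois conjugates of one another over $\Q$, so an injective field map that fixes $0$ forces them either to all vanish or all be non-zero; once one component is zero, they are all zero and non-singularity of $Z$ finishes the job.

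First I would fix notation, writing $\bs{w}_j^\top$ for the $j$-th row $(Z^{-1})_{j\bullet}$, so that the $j$-th component of $Z^{-1}\bs{r}$ is the scalar $\bs{w}_j^\top\bs{r}\in\Q(\beta_j)$. By Lemma~\ref{l:obrazy}, for each $j$ we have $\bs{w}_j=\psi(\bs{w}_1)$ componentwise, where $\psi:\Q(\beta_1)\to\Q(\beta_j)$ is the isomorphism with $\beta_1\mapsto\beta_j$. The key step is then to push the rational vector $\bs{r}$ through $\psi$: since $\psi$ is a ring homomorphism that restricts to the identity on $\Q$ and $\bs{r}\in\Q^{d}$, one obtains
\[
  \bs{w}_j^\top\bs{r}=\psi(\bs{w}_1)^\top\bs{r}=\psi\!\left(\bs{w}_1^\top\bs{r}\right),
\]
the last equality being the componentwise computation $\psi\bigl(\sum_i (\bs{w}_1)_i\, r_i\bigr)=\sum_i \psi\bigl((\bs{w}_1)_i\bigr)\,r_i$, in which $\psi(r_i)=r_i$.

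With this identity in hand I would conclude as follows. Suppose the $k$-th component of $Z^{-1}\bs{r}$ vanishes, i.e.\ $\psi\bigl(\bs{w}_1^\top\bs{r}\bigr)=0$ for the isomorphism $\psi$ onto $\Q(\beta_k)$. As a field isomorphism $\psi$ is injective and sends $0$ to $0$, hence $\bs{w}_1^\top\bs{r}=0$. Applying in turn each isomorphism onto $\Q(\beta_j)$ then gives $\bs{w}_j^\top\bs{r}=\psi(0)=0$ for every $j$, so $Z^{-1}\bs{r}=\bs{0}$; since $Z$, and hence $Z^{-1}$, is non-singular, $\bs{r}=\bs{0}$.

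I do not expect a genuine obstacle here, as the argument is essentially a one-line Galois observation dressed in matrix form. The only point requiring care is the passage of $\bs{r}$ through $\psi$: one must invoke that $\psi$ fixes $\Q$ pointwise, so that the entries of $\bs{r}$ are left untouched while the entries of $\bs{w}_1$ get conjugated. Everything else reduces to the elementary facts that field homomorphisms are injective and preserve $0$, together with the non-singularity of the Vandermonde matrix noted just before Lemma~\ref{l:obrazy}.
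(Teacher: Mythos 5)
Your proof is correct and follows essentially the same route as the paper: the paper derives the corollary from Lemma~\ref{l:obrazy} by observing that the components of $Z^{-1}\bs{r}$ are Galois-conjugate images of one another (since the isomorphisms fix $\Q$ and hence $\bs{r}$), so they all vanish or none do, and non-singularity of $Z$ then gives $\bs{r}=\bs{0}$. Your write-up merely spells out the componentwise computation $\bs{w}_j^\top\bs{r}=\psi\left(\bs{w}_1^\top\bs{r}\right)$ that the paper leaves implicit.
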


Since all eigenvalues of $C=C_f$ have multiplicity 1, with the eigenvectors $\bs{z}_j$, one can form two kinds of elementary real invariant subspaces of $C$, namely
$$
\begin{aligned}
\mathrm{span}_\R \{\bs{z}_j\}\quad &\text{if }\beta_j\in\R\\
\mathrm{span}_\R \{\Re\bs{z}_j,\Im\bs{z}_j\}\quad &\text{if }\beta_j\in\C\setminus\R,
\end{aligned}
$$
where the real part $\Re$, and the imaginary part $\Im$ are taken componentwise.
It is obvious that every real invariant subspace of $C$ is a direct sum of a selection of the above elementary subspaces.

Consider a decomposition of $\R^d$ into a direct sum of two real invariant subspaces $\mathcal{Y}_1,\mathcal{Y}_2$ of the matrix $C$ of dimension $1\leq n,d-n < d$, respectively, $\R^d=\mathcal{Y}_1\oplus\mathcal{Y}_2$.  Necessarily, each of the subspaces must have a basis formed by vectors $\bs{z}_j$, or pairs $\Re\bs{z}_j,\Im\bs{z}_j$ for some indices $j$. Take matrices $Y_1$, $Y_2$ formed from the column vectors of these bases, respectively.
 Now define the matrix $Y=(Y_1,Y_2)$ and $L=Y^{-1}$. We denote further $L_1\in\R^{n\times d}$, $L_2\in\R^{(d-n)\times d}$ such that $L=\binom{L_1}{L_2}$. We call the cut-and-project scheme $\Lambda=(\L\subset\R^d,\R^n)$, where $\L$ has $L$ as its associated matrix, \emph{a Vandermonde cut-and-project scheme corresponding to the polynomial $f$}. 
 
 \begin{pozn}\label{pozn:Vandermonde}
  Note that the lattice $\L$ constructed in a Vandermonde scheme is given by the matrix $L=Y^{-1}$ which has coordinates in the splitting field $\F=\Q(\beta_1,\dots,\beta_d)$ of the polynomial $f$.
  To a polynomial $f$, one can construct different Vandermonde schemes, even if the dimension $n$ is fixed. Such two schemes differ by permutation of columns in the matrix $Y$, or in other words, by permutation of rows in the matrix $L$.
  \end{pozn}

 In order to prove that the cut-and-project scheme constructed in this way is generic, we state two simple lemmas.

\begin{lem}\label{l:elem1}
  Let $\mathcal{I}$ be a proper non-empty subset of $\{1,\dots,d\}$. Then we have $\mathrm{span}_\C\{\bs{z}_j: j\in{\cal I}\}\cap \Q^{d}=\{\bs{0}\}$.
\end{lem}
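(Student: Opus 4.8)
The plan is to exploit the structure established in Lemma~\ref{l:obrazy}: the rows of $Z^{-1}$ are Galois conjugates of one another. A vector $\bs{v}=\sum_{j\in\mathcal{I}}c_j\bs{z}_j$ lies in $\mathrm{span}_\C\{\bs{z}_j:j\in\mathcal{I}\}$, and membership in $\Q^d$ means I want to understand when such a combination has rational entries. First I would observe that, since the $\bs{z}_j$ are the columns of $Z$, writing $\bs{v}=Z\bs{c}$ where $\bs{c}$ is supported on $\mathcal{I}$ recovers the coefficient vector as $\bs{c}=Z^{-1}\bs{v}$. Hence a rational vector $\bs{r}\in\Q^d$ lies in $\mathrm{span}_\C\{\bs{z}_j:j\in\mathcal{I}\}$ precisely when the coefficients $(Z^{-1}\bs{r})_k$ vanish for all $k\notin\mathcal{I}$.

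The key step is then to invoke Corollary~\ref{c:vektor0}, which is tailor-made for this situation: if at least one component of $Z^{-1}\bs{r}$ vanishes, then $\bs{r}=\bs{0}$. Since $\mathcal{I}$ is a \emph{proper} subset of $\{1,\dots,d\}$, its complement is non-empty, so there is at least one index $k\notin\mathcal{I}$ with $(Z^{-1}\bs{r})_k=0$. The corollary immediately forces $\bs{r}=\bs{0}$, which is exactly the claim $\mathrm{span}_\C\{\bs{z}_j:j\in\mathcal{I}\}\cap\Q^d=\{\bs{0}\}$. The non-emptiness of $\mathcal{I}$ is used only to guarantee the span is well defined (the inclusion $\{\bs{0}\}\subseteq$ holds trivially anyway).

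I expect no serious obstacle here; the real content was already packaged into Corollary~\ref{c:vektor0}, whose proof rests on the Galois-theoretic observation of Lemma~\ref{l:obrazy} that the components of $Z^{-1}\bs{r}$ are conjugate under the Galois action on the splitting field $\F=\Q(\beta_1,\dots,\beta_d)$, so they vanish simultaneously or not at all. The only point demanding a moment of care is the translation between the two descriptions of the span, namely identifying ``$\bs{r}\in\mathrm{span}_\C\{\bs{z}_j:j\in\mathcal{I}\}$'' with ``$(Z^{-1}\bs{r})_k=0$ for $k\notin\mathcal{I}$''; this is just the statement that expressing $\bs{r}$ in the eigenbasis $\{\bs{z}_1,\dots,\bs{z}_d\}$ of $C$ reads off its coordinates as the entries of $Z^{-1}\bs{r}$. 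Once that identification is made explicit, the argument is a one-line application of the corollary.
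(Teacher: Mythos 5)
Your proposal is correct and follows exactly the paper's own argument: write a rational vector $\bs{r}$ in the span as $Z\bs{c}$ with $\bs{c}$ supported on $\mathcal{I}$, apply $Z^{-1}$ to see that $Z^{-1}\bs{r}$ has a vanishing component (using that $\mathcal{I}$ is proper), and invoke Corollary~\ref{c:vektor0} to conclude $\bs{r}=\bs{0}$. No differences worth noting.
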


\begin{proof}
  Suppose that for some $\alpha_j\in\C$, we have $\sum_{j\in\mathcal{I}}\alpha_j\bs{z}_j=\bs{r}\in\Q^d$. Applying the matrix $Z^{-1}$ we obtain
  $$
  Z^{-1}\sum_{j\in\mathcal{I}}\alpha_j\bs{z}_j=\bs{v}=Z^{-1}\bs{r},
  $$
  where $\bs{v}=(v_1,\dots,v_d)^\top$ with $v_j=\alpha_j$ if $j\in\mathcal{I}$ and $v_j=0$ otherwise. Note that $v_j=0$ for at least one index $j$, since $\mathcal{I}$ is a proper subset of $\{1,\dots,d\}$. By Corollary~\ref{c:vektor0}, $\bs{r}=\bs{0}$.
\end{proof}

\begin{lem}\label{l:A}
 Let $\bs{z}_j$ be the $j$th column of the Vandermonde matrix $Z$. Then for every  $\bs{q}\in\Z^d$, $\bs{q}\neq\bs{0}$, we have $\bs{q}^\top\bs{z}_j\neq 0$.
\end{lem}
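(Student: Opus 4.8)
The plan is to recognize the scalar $\bs{q}^\top\bs{z}_j$ as the value of an integer polynomial of degree at most $d-1$ evaluated at the root $\beta_j$, and then to invoke the irreducibility of $f$ to force that polynomial to vanish identically.

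First I would write $\bs{q}=(q_1,\dots,q_d)^\top$ and recall that the $j$-th column of the Vandermonde matrix is $\bs{z}_j=(1,\beta_j,\beta_j^2,\dots,\beta_j^{d-1})^\top$. Then
\[
\bs{q}^\top\bs{z}_j=\sum_{i=1}^{d}q_i\beta_j^{\,i-1}=g(\beta_j),
\qquad\text{where}\qquad
g(X):=\sum_{i=1}^{d}q_iX^{\,i-1}\in\Z[X].
\]
The polynomial $g$ thus has degree at most $d-1$ and rational (indeed integer) coefficients.

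Next I would argue by contradiction: suppose $\bs{q}^\top\bs{z}_j=0$ for some $\bs{q}\neq\bs{0}$. Then $g(\beta_j)=0$, so $\beta_j$ is a root of the nonzero polynomial $g\in\Q[X]$ with $\deg g<d$. Since $f$ is monic, irreducible over $\Q$, and satisfies $f(\beta_j)=0$, it is the minimal polynomial of $\beta_j$ over $\Q$, of degree exactly $d$. Every polynomial in $\Q[X]$ annihilated by $\beta_j$ is divisible by $f$, but a nonzero polynomial of degree strictly less than $\deg f=d$ cannot be divisible by $f$. This forces $g$ to be the zero polynomial, i.e.\ $q_1=\dots=q_d=0$, contradicting $\bs{q}\neq\bs{0}$.

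I expect no genuine obstacle here: the only point needing a word of care is the justification that $f$ is precisely the minimal polynomial of each $\beta_j$, which is immediate from the hypotheses that $f$ is monic, irreducible over $\Q$, and has $\beta_j$ as a root. The degree bound $\deg g\le d-1$ is exactly what drives the divisibility contradiction, so the argument is short and self-contained.
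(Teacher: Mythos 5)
Your proof is correct and follows essentially the same route as the paper's: both identify $\bs{q}^\top\bs{z}_j$ as the value at $\beta_j$ of a nonzero integer polynomial of degree at most $d-1$, and derive a contradiction with the fact that $\beta_j$, being a root of the irreducible monic polynomial $f$ of degree $d$, has degree exactly $d$ over $\Q$. Your write-up merely spells out the divisibility-by-the-minimal-polynomial step that the paper leaves implicit.
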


\begin{proof}
  Let $\bs{q}\in\Z^d$ and write $\bs{q}^\top=(q_0,q_1,\dots,q_{d-1})$. Then
  $$
  \bs{q}^\top\bs{z}_j = q_0+q_1\beta_j + q_2\beta_j^2 + \cdots + q_{d-1}\beta_j^{d-1}=0
  $$
  means that $\beta_j$ is a root of an integer polynomial of degree strictly smaller than $d$. If $\bs{q}\neq \bs{0}$, then we have a contradiction to the fact that $\beta_j$ is an algebraic integer of degree $d$.
\end{proof}

\begin{thm}\label{t:elementary}
Let $f\in\Z[X]$ be a monic polynomial irreducible over $\Q$. Any Vandermonde cut-and-project scheme $\Lambda$ corresponding to $f$ is generic.
\end{thm}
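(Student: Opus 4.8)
The plan is to verify directly the three equivalent conditions of Proposition~\ref{p:generic} for the matrix $Y=L^{-1}=(Y_1,Y_2)$, using that by construction the columns of $Y_1$ (resp.\ $Y_2$) are chosen from among the eigenvectors $\bs{z}_j$ and the real/imaginary parts $\Re\bs{z}_j,\Im\bs{z}_j$. Since a Vandermonde scheme corresponding to $f$ is given, we have a splitting $\R^d=\mathcal{Y}_1\oplus\mathcal{Y}_2$ into real invariant subspaces of $C$ with $1\le n,\,d-n<d$, and $Y_1$, $Y_2$ carry the chosen real bases of $\mathcal{Y}_1$, $\mathcal{Y}_2$.

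The one genuinely conceptual step is to rewrite the complex column spans in Proposition~\ref{p:generic} as coordinate spans of the $\bs{z}_j$. I would argue that, as $f$ has real coefficients, its non-real roots come in conjugate pairs and $\overline{\bs{z}_j}=\bs{z}_{j'}$ whenever $\beta_{j'}=\overline{\beta_j}$; hence every complex-eigenvalue pair contributing $\Re\bs{z}_j,\Im\bs{z}_j$ to $Y_1$ satisfies $\mathrm{span}_\C\{\Re\bs{z}_j,\Im\bs{z}_j\}=\mathrm{span}_\C\{\bs{z}_j,\bs{z}_{j'}\}$. Collecting the real indices and conjugate pairs used in $\mathcal{Y}_1$ into a set $\mathcal{I}\subset\{1,\dots,d\}$, this yields
$$
\mathrm{span}_\C\{Y_{\bullet 1},\dots,Y_{\bullet n}\}=\mathrm{span}_\C\{\bs{z}_j:j\in\mathcal{I}\},\qquad
\mathrm{span}_\C\{Y_{\bullet n+1},\dots,Y_{\bullet d}\}=\mathrm{span}_\C\{\bs{z}_j:j\in\mathcal{I}^c\},
$$
and because $\mathcal{Y}_1,\mathcal{Y}_2$ are both non-trivial and proper, both $\mathcal{I}$ and $\mathcal{I}^c$ are proper non-empty subsets of $\{1,\dots,d\}$.

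With this identification in hand, the remaining work is to invoke the two preparatory lemmas. Non-degeneracy (item (i)) and aperiodicity (item (ii)) follow at once from Lemma~\ref{l:elem1}: applied to the proper non-empty set $\mathcal{I}^c$ it gives $\mathrm{span}_\C\{\bs{z}_j:j\in\mathcal{I}^c\}\cap\Q^d=\{\bs{0}\}$, which is exactly condition (i), and applied to $\mathcal{I}$ it gives condition (ii). For irreducibility (item (iii)) I must exhibit one vector $\bs{y}$ in the physical span annihilated by no non-zero integer functional; the natural choice is $\bs{y}=\bs{z}_j$ for a fixed $j\in\mathcal{I}$. It lies in $\mathrm{span}_\C\{Y_{\bullet 1},\dots,Y_{\bullet n}\}$ directly when $\beta_j\in\R$, and via $\bs{z}_j=\Re\bs{z}_j+i\,\Im\bs{z}_j$ when $\beta_j\notin\R$, and Lemma~\ref{l:A} guarantees $\bs{q}^\top\bs{z}_j\neq 0$ for every non-zero $\bs{q}\in\Z^d$, which is precisely condition (iii).

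The main obstacle is bookkeeping rather than substance: one must carefully check that passing from the real spanning set $\{\Re\bs{z}_j,\Im\bs{z}_j\}$ to the complex span $\{\bs{z}_j,\bs{z}_{j'}\}$ accounts correctly for the conjugate pairs, so that $\mathcal{I}$ and $\mathcal{I}^c$ genuinely partition $\{1,\dots,d\}$ into two proper non-empty blocks; once this translation is secured, Corollary~\ref{c:vektor0}, Lemma~\ref{l:elem1}, and Lemma~\ref{l:A} do all the work. Note also that when $f$ is linear or quadratic with negative discriminant no such splitting exists, so the statement is vacuous there, consistent with the remark preceding the construction.
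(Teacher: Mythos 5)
Your proposal is correct and follows essentially the same route as the paper's own proof: both verify the three criteria of Proposition~\ref{p:generic}, deducing non-degeneracy and aperiodicity from Lemma~\ref{l:elem1} applied to the index sets $\mathcal{I}$ and $\mathcal{I}^c$, and irreducibility from criterion (iii) combined with Lemma~\ref{l:A} via the choice $\bs{y}=\bs{z}_j$. The only difference is that you spell out explicitly the identification $\mathrm{span}_\C\{\Re\bs{z}_j,\Im\bs{z}_j\}=\mathrm{span}_\C\{\bs{z}_j,\bs{z}_{j'}\}$, which the paper leaves as an unstated ``it suffices to realize'' step.
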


\begin{proof}
Lemma~\ref{l:elem1} implies by Proposition~\ref{p:generic} that the cut-and-project scheme $(\L\subset\R^d,\R^n)$ in Theorem~\ref{t:elementary} is aperiodic. For that, it suffices to realize that there exists a set of indices $\mathcal{I}$ (a proper non-empty subset of $\{1,\dots,d\}$) such that $\mathrm{span}_\C\{ Y_{\bullet 1},\dots,Y_{\bullet n}  \}= \mathrm{span}_\C\{\bs{z}_j: j\in{\cal I}\}$.
In a similar way, one proves non-degeneracy.
In order to prove irreducibility, we use criterion (iii) in
Proposition~\ref{p:generic} combined with Lemma~\ref{l:A}.
\end{proof}


\section{Construction for trivial self-similarities}\label{sec:trivial}

In this section we present the construction of a cut-and-project scheme with self-similarities $A$ which have in the spectrum rational integers or non-real quadratic numbers.
These are exactly the cases in which the Vandermonde scheme cannot be found.

If $A=kI$ for some $k\in\Z$, we have $\sigma(A) = \{k\} \subset\Z$, and it is obvious that any generic scheme $(\L\subset\R^{n+1},\R^n)$ satisfies the required properties. For, any $\Z$-module is closed under multiplication by integers.

Suppose now that the spectrum $\sigma(A)$ contains only complex conjugated pairs of the same  non-real quadratic numbers $\lambda, \bar{\lambda}$.

\begin{prop}
	Let $A\in \R^{n\times n}$ be a matrix diagonalizable over $\C$ such that its eigenvalues are roots of the same quadratic polynomial $\lambda^2 -p\lambda - q$, $p,q\in \Z$ with negative discriminant. Then there exists a generic scheme $\Lambda=(\L \subset \R^{n+2},\R^n)$ with self-similarity $A$ and $n+2$ is the minimal dimension of a lattice defining such a scheme.
\end{prop}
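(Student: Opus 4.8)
The plan is to prove the two assertions separately, treating minimality first since it is short and since it already reveals why the internal space must be two--dimensional.

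\emph{Minimality.} Suppose $A$ is a self-similarity of some generic scheme $(\L\subset\R^s,\R^n)$, and let $B\in\R^{(s-n)\times(s-n)}$, $C\in\Z^{s\times s}$ be the matrices of Definition~\ref{de:self}. Because the discriminant is negative, $X^2-pX-q$ is irreducible over $\Q$ and its roots $\lambda,\bar\lambda$ are non-real; as $A$ is a nonzero real matrix with $\sigma(A)\subset\{\lambda,\bar\lambda\}$, in fact $\sigma(A)=\{\lambda,\bar\lambda\}$ with equal multiplicities, so $n=2m$ is even. By Corollary~\ref{c:spektrumAB} each eigenvalue of $A$ is an algebraic conjugate of an eigenvalue of $B$, and the only conjugate of $\lambda$ is $\bar\lambda$; hence the real matrix $B$ has a non-real eigenvalue and therefore contains the whole pair $\{\lambda,\bar\lambda\}$ in its spectrum. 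Thus $s-n=\dim V_\perp\ge 2$, i.e. $s\ge n+2$, which is the minimality claim once existence in dimension $n+2$ is shown.

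\emph{Existence, the algebraic skeleton.} By Corollary~\ref{c:BUNO} I may assume $A$ is in real Jordan form, so $A=R(\lambda)^{\oplus m}$ with $m=n/2$. I set $B=R(\lambda)$ and, as forced by Proposition~\ref{p:minpol} (which gives $\m{C}=X^2-pX-q$ with $C$ diagonalizable over $\C$), I take $C=C_f^{\oplus(m+1)}$, the block-diagonal integer matrix with $m+1$ companion blocks of $f=X^2-pX-q$. It remains to produce a real non-singular $L$ satisfying $\bigl(\begin{smallmatrix}A&O\\O&B\end{smallmatrix}\bigr)L=LC$ and giving a generic scheme. Here I would exploit that $\R[C]\cong\C$: the element $J=\tfrac1\omega\bigl(C-\tfrac p2 I\bigr)$, with $\omega=\Im\lambda$, satisfies $J^2=-I$ and commutes with $C$ (using $C^2=pC+qI$ and $\omega^2=-q-\tfrac{p^2}{4}$), so it equips the source space with a complex structure under which $C$ acts as multiplication by $\lambda$. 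The analogous structure on the target makes $\bigl(\begin{smallmatrix}A&O\\O&B\end{smallmatrix}\bigr)$ act as multiplication by $\lambda$ as well, whence the equation $\bigl(\begin{smallmatrix}A&O\\O&B\end{smallmatrix}\bigr)L=LC$ is precisely the statement that $L$ is a $\C$-linear isomorphism between the two structures. Passing to $Y=L^{-1}$, the relation $YM=CY$ shows that the columns of $Y$ split the eigenspace $E_\lambda=\ker(C-\lambda I)\subset\C^s$ into a complementary pair, a complex line $\C\bs{u}$ feeding the internal block and an $m$-dimensional $H$ feeding the physical block, with conjugates spanning $E_{\bar\lambda}$; any such choice yields \eqref{eq:ABC} via Proposition~\ref{p:ABCzpet}, after using Lemma~\ref{l:fn} to put the two diagonal blocks into the exact forms $A$ and $B$.

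\emph{The main obstacle: genericity.} The genuine difficulty is arranging that the scheme be generic, because a non-real conjugate pair cannot be separated between the physical and internal spaces (each must carry the full pair), so no direct-sum or Vandermonde construction applies and the lattice must be chosen ``entangled''. Writing the eigenvector $\bs{u}\in E_\lambda$ through its coordinate data $(c_0,\dots,c_m)\in\C^{m+1}$ and $H$ through the remaining coordinates, I would translate the three requirements of Proposition~\ref{p:generic} into conditions on these coordinates: non-degeneracy and aperiodicity demand that the only real combinations $2\Re(\alpha\bs{u})$ lying in $\Q^s$ be zero, while irreducibility demands a vector in the physical span pairing non-trivially with every nonzero integer vector.

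These conditions fail for ``algebraic'' choices (for instance $\lambda=i$ gives $\omega\in\Q$, so $\alpha c_k\in\Z[i]$ would already produce rational vectors in the internal span). I would therefore select the coordinates so that the real and imaginary parts of the relevant entries are linearly independent over $\Q$, exactly as the rationally independent parameters $t_i\gamma_j$ are introduced in the proof of Proposition~\ref{p:generic}(iii); a dimension count over $\Q$ then guarantees that a single such choice makes all three conditions hold simultaneously. This completes the construction of a generic scheme in the minimal dimension $s=n+2$.
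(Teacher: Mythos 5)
Your minimality half is correct and complete---in fact more explicit than the paper, whose proof of this proposition only carries out the construction in dimension $n+2$: as you argue, Corollary~\ref{c:spektrumAB} forces the real matrix $B$ to have an eigenvalue in $\{\lambda,\bar\lambda\}$, hence a non-real conjugate pair, so $s-n\geq 2$. The algebraic skeleton of your existence half is also sound and coincides with the paper's choices ($B$ a single $2\times2$ block for $\lambda$, $C=I_{m+1}\otimes C_f$); your observation that $\bigl(\begin{smallmatrix}A&O\\O&B\end{smallmatrix}\bigr)L=LC$ is exactly $\C$-linearity of $L$ with respect to the complex structures $\tfrac1\omega\bigl(C-\tfrac{p}{2}I\bigr)$ and $\tfrac1\omega\bigl(M-\tfrac{p}{2}I\bigr)$, where $M=\bigl(\begin{smallmatrix}A&O\\O&B\end{smallmatrix}\bigr)$, is a clean substitute for the paper's direct Kronecker-product verification of \eqref{eq:ABC} with the explicit ansatz $L=H\otimes\bigl(\begin{smallmatrix}1&\Re\lambda\\0&\Im\lambda\end{smallmatrix}\bigr)$.

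The gap is the genericity verification, which you yourself flag as ``the genuine difficulty'' and then do not carry out. The sentences ``I would translate the three requirements of Proposition~\ref{p:generic} into conditions on these coordinates'' and ``a dimension count over $\Q$ then guarantees that a single such choice makes all three conditions hold simultaneously'' are statements of intent, not arguments; the simultaneous satisfiability of the three conditions is precisely what the proposition asserts, and nothing in your text establishes it. Concretely missing: an explicit family of admissible $L$ with identified free real parameters, a statement of which parameters must be linearly independent over $\Q$, and the derivation of the three properties from that independence. The paper supplies exactly this: with $L=H\otimes\bigl(\begin{smallmatrix}1&\Re\lambda\\0&\Im\lambda\end{smallmatrix}\bigr)$, $\Q$-linear independence of the $m$-th row of $H$ gives non-degeneracy (if $\pi_\parallel(\bs{l})=\bs{0}$, the last coordinate reads $\Im\lambda\sum_i b_iH_{m\,i}=0$, forcing all integer coefficients $b_i=0$, and then the penultimate coordinate forces all $a_i=0$), while $\Q$-linear independence of the $(m+1)$-th row gives aperiodicity by the same computation and irreducibility because $\mathcal{H}=\{\sum_i c_iH_{m+1,\,i}:c_i\in\Z\}$ is then dense in $\R$, which by \eqref{eq:trivialH} makes $\pi_\perp(\L)$ dense in $\R^2$. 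Note also that your route to irreducibility through Proposition~\ref{p:generic}(iii) is costlier than this density argument: it requires exhibiting a vector in the physical span whose pairing with every non-zero integer vector is non-zero, i.e.\ an additional construction of the type of Lemma~\ref{l:H2}, which your sketch does not provide either. Until this final third is written out, the hardest part of the proposition is assumed rather than proved.
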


\begin{proof}
	It is obvious that $n$ is even, i.e.\ $n=2m$ for some $m$. Due to Corollary~\ref{c:BUNO}, the matrix $A$ can be assumed to be in the form
$$
A = I_m \otimes \begin{pmatrix}
	\Re \lambda & -\Im \lambda \\ \Im \lambda & \Re \lambda
	\end{pmatrix}.
$$
Let us define the lattice $\L$ through the following matrix
$$
L = H \otimes \begin{pmatrix}
	1 & \Re \lambda \\ 0 & \Im \lambda
	\end{pmatrix}\qquad\text{ for some matrix } H\in \R^{(m+1)\times(m+1)}.
$$
The conditions on the matrix $H$ will be imposed later. At the same time let us define $C  = I_{m+1} \otimes \left(\begin{smallmatrix}
	0 & q \\ 1 & p
	\end{smallmatrix}\right)$ and $B = \left(\begin{smallmatrix}
	\Re \lambda & -\Im \lambda \\ \Im \lambda & \Re \lambda
	\end{smallmatrix} \right)$. We verify that equation \eqref{eq:ABC} holds. We want to have
$$	
\begin{gathered}
\left(H \otimes  \begin{pmatrix}
	1 & \Re \lambda \\ 0 & \Im \lambda
	\end{pmatrix}\right) \left( I_{m+1} \otimes \begin{pmatrix}
	0 & q \\ 1 & p
	\end{pmatrix} \right) = \hspace*{5cm}\\
\hspace*{4cm}=\left(I_{m+1} \otimes \begin{pmatrix}
	\Re \lambda & -\Im \lambda \\ \Im \lambda & \Re \lambda
	\end{pmatrix}\right)\left(H \otimes  \begin{pmatrix}
	1 & \Re \lambda \\ 0 & \Im \lambda
	\end{pmatrix} \right),
\end{gathered}
$$
which is equivalent to 
$$
H \otimes \begin{pmatrix}
	\Re \lambda & q + p\, \Re \lambda \\ \Im \lambda & p\, \Im \lambda
	\end{pmatrix} = H \otimes \begin{pmatrix}
	\Re \lambda & \Re^2 \lambda - \Im ^2 \lambda \\ \Im \lambda & 2 \Im \lambda\, \Re \lambda
	\end{pmatrix}.
$$
The latter equality is true, because of $\Re^2 \lambda - \Im ^2 \lambda =  q + p\,\Re \lambda$ and $2 \Im \lambda\, \Re \lambda = p \Im \lambda$. This proves by Proposition \ref{p:ABCzpet} that $A\pi_\parallel(\L) \subset \pi_\parallel(\L)$.

Let us now prove that under suitable assumptions on the matrix $H$, we obtain a generic cut-and-project scheme $\Lambda=(\L\subset\R^{n+2},\R^n)$.
The non-degeneracy of $\Lambda$ is ensured choosing numbers $H_{m\, i}$, $i = 1, 2 ,\dots ,m+1$, linearly independent over $\Q$. Indeed, if $\bs{l}$ is a lattice point, then its image under the projection $\pi_\parallel$ is of the form
	\[
    \pi_\parallel(\bs{l}) = \sum_{i=1}^{m+1} a_i \begin{pmatrix}
	H_ {1\, i} \\ 0 \\ \vdots \\  H_ {m \, i} \\ 0
	\end{pmatrix} + \sum_{i=1}^{m+1} b_i \begin{pmatrix}
	H_{1\, i} \, \Re \lambda \\ H_{1\, i} \, \Im \lambda \\ \vdots \\ H_{m\, i} \, \Re \lambda \\ H_{m\, i} \, \Im \lambda
	\end{pmatrix}, \ a_i,b_i \in \Z.
   \]
	If $\pi_\parallel(\bs{l})=\bs{0}$, then the last row of the above must vanish. More precisely,
	\[
 \Im \lambda \sum_{i=1}^{m+1} b_i  H_{m\, i}  =0.
  \]
	From the requirement on the linear independence of  $H_{m\, i}$ over $\Q$ one gets $b_i = 0 $ for $i = 1,\dots , m+1$. Consequently, from the penultimate row, we check that $a_i =0$ for $i = 1,\dots, m+1$. Necessarily, $\bs{l}=\bs{0}$ and hence $\pi_\parallel$ restricted to $\L$ is injective.

In order to prove aperiodicity of the scheme, consider the $\pi_\perp$ image of a lattice point, which is of the form
	\[
    \pi_\perp(\bs{l}) = \sum_{i=1}^{m+1} a_i H_ {m+1,\, i}
      \begin{pmatrix}
	1 \\ 0
	\end{pmatrix} + \sum_{i=1}^{m+1} b_i H_{m+1,\, i}
    \begin{pmatrix}
	 \Re \lambda \\ \Im \lambda
	\end{pmatrix} : a_i,b_i \in \Z.
   \]
Similarly as before, choosing numbers $H_{m+1,\, i}$,  $i = 1, 2 ,\dots ,m+1$, to be linearly independent over $\Q$, we see that $\pi_\perp(\bs{l})$ implies $\bs{l}=\bs{0}$ and thus $\pi_\perp$ restricted to $\L$ is injective.

It remains to verify that the cut-and-project scheme $\Lambda$ is irreducible.
The projection in the internal space $\R^2$ can be expressed as
\begin{equation}\label{eq:trivialH}
   \pi_\perp(\L) = \Big\{
   \sum_{i=1}^{m+1} a_i H_ {m+1,\, i}
      \begin{pmatrix}
	1 \\ 0
	\end{pmatrix} + \sum_{i=1}^{m+1} b_i H_{m+1,\, i}
    \begin{pmatrix}
	 \Re \lambda \\ \Im \lambda
	\end{pmatrix} : a_i,b_i \in \Z\Big\} .
	\end{equation}
Consider the set $ \mathcal{H} = \left\{\sum_{i=1}^{m+1}c_i H_{m+1,\, i}  :c_1,\dots,c_{m+1} \in \Z \right\}$.
The second row of the projection~\eqref{eq:trivialH} can be expressed as $({\Im \lambda}) \mathcal{H}$. Since $H_{m+1,\, i}$,  $i = 1, 2 ,\dots ,m+1$, are linearly independent over $\Q$, the set $\mathcal{H}$ is dense in $\R$. The first row of the projection $\pi_\perp(\L)$ is then written as $\mathcal{H} +({\Re \lambda}) \mathcal{H}$. This set is clearly dense in $\R$ as well. From~\eqref{eq:trivialH}, we conclude that $\pi_\perp(\L)$ is dense in $\R^2$. We conclude that the scheme $\Lambda$  is generic and has the matrix $A$ as its self-similarity.
\end{proof}

\section{Answer to Question~\ref{q2}}\label{sec:answerQ2}

In this section we prove the key result for answering Question~\ref{q2}.

\begin{thm}\label{t:mindimensionnotmin}
Let $A\in\R^{n\times n}$ be a matrix diagonalizable over $\C$.
Then there exists a generic cut-and-project scheme $\Lambda$
with self-similarity $A$ if and only if the spectrum $\sigma(A)$ of $A$ is composed of
algebraic integers.
\end{thm}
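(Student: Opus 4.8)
The plan is to prove the two implications separately. Necessity is immediate: if a generic scheme $\Lambda$ has $A$ as a self-similarity, then Corollary~\ref{c:duseldek1} already guarantees that every eigenvalue of $A$ is an algebraic integer, so nothing further is required here (diagonalizability is not even used for this direction).

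For sufficiency, assume $\sigma(A)$ consists of algebraic integers. First I would apply Corollary~\ref{c:BUNO} to replace $A$ by its real Jordan form, which, since $A$ is diagonalizable over $\C$, is quasidiagonal: a direct sum of blocks $(\mu)$ for the real eigenvalues $\mu$ and $2\times 2$ blocks $R(\lambda)$ for the non-real conjugate pairs $\lambda,\bar\lambda$. By iterating Lemma~\ref{l:sklad} it then suffices to exhibit, for a single such elementary block, a generic scheme having that block as its self-similarity; the direct sum of these schemes is generic and has the prescribed $A$ as self-similarity.

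The elementary blocks are handled by a case analysis on the minimal polynomial $f\in\Z[X]$ of the eigenvalue, which is monic and irreducible precisely because the eigenvalue is an algebraic integer. For a block $(\mu)$ with $\mu\in\Z$, or $R(\lambda)$ with $\lambda$ a root of an irreducible integer quadratic of negative discriminant, I would use the trivial constructions of Section~\ref{sec:trivial}. In every remaining case I would use a Vandermonde scheme of Section~\ref{sec:element}: choosing the physical invariant subspace $\mathcal{Y}_1$ to be $\mathrm{span}_\R\{\bs{z}_\mu\}$ when $\mu$ is a selected real root, or $\mathrm{span}_\R\{\Re\bs{z}_\lambda,\Im\bs{z}_\lambda\}$ when $\lambda,\bar\lambda$ is a selected conjugate pair, makes the restriction of $C_f$ to $\mathcal{Y}_1$ equal exactly to $(\mu)$, respectively to $R(\lambda)$, while Theorem~\ref{t:elementary} furnishes genericity. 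In each instance one checks that the splitting $\R^d=\mathcal{Y}_1\oplus\mathcal{Y}_2$ is proper, i.e.\ that $\mathcal{Y}_2$ is non-trivial; this is exactly the point at which the excluded cases (linear $f$, and quadratic $f$ of negative discriminant) fail and must instead be realized by the trivial construction.

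The bulk of the work is this case analysis, namely verifying that the three families exhaust every algebraic-integer eigenvalue and that each elementary block is reproduced exactly and not merely up to its spectrum. The genuinely difficult content -- the genericity of the Vandermonde schemes and of direct sums -- has already been established in Theorem~\ref{t:elementary} and Lemma~\ref{l:directsumgeneric}, so the remaining step is essentially an assembly of these ingredients rather than any new estimate.
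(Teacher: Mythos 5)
Your proposal is correct and follows essentially the same route as the paper: necessity via Corollary~\ref{c:duseldek1}, and sufficiency by reducing to the real Jordan form (Corollary~\ref{c:BUNO}), building one Vandermonde scheme per real eigenvalue or complex-conjugate pair with the physical space spanned by the corresponding eigenvector data (Section~\ref{sec:element}, Theorem~\ref{t:elementary}), falling back on Section~\ref{sec:trivial} exactly for linear $f$ and quadratic $f$ with negative discriminant, and assembling everything with Lemmas~\ref{l:sklad} and~\ref{l:directsumgeneric}. The only deviation is organizational: the paper first groups eigenvalues by algebraic conjugacy (citing Proposition~\ref{p:Anekonjug}) before invoking the same block-by-block construction, whereas you pass directly to the individual Jordan blocks, which is logically equivalent here.
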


\begin{proof}
Necessity is obvious from Corollary~\ref{c:duseldek1}. For the opposite implication, we must show existence of a generic cut-and-project scheme with self-similarity $A$.
Based on Proposition~\ref{p:Anekonjug}, we may consider, without loss of generality, only matrices $A$ whose eigenvalues are roots of the same minimal polynomial $f$. If $f$ is linear or quadratic with negative discriminant, the situation is solved in Section~\ref{sec:trivial}. Suppose that $f$ is quadratic with positive discriminant or of degree at least three.
 To every $\beta$ in the spectrum of $A$ we create a Vandermonde cut-and-project scheme $\Lambda_\beta$ corresponding to the minimal polynomial $f$ of $\beta$. We take the scheme $\Lambda_\beta$ so that it has self-similarity $A_\beta=(\beta)\in\R^{1\times 1}$ if $\beta\in\R$ or
$A_\beta=\left(\begin{smallmatrix}
\Re \beta & -\Im\beta \\
\Im\beta & \Re\beta
\end{smallmatrix}\right)\in\R^{2\times2}$
if $\beta\in\C\setminus\R$.
For every real eigenvalue $\beta$ or complex pair of eigenvalues $\beta,\overline{\beta}$ we compose as many schemes $\Lambda_\beta$ as is its multiplicity in the spectrum of the matrix $A$.
By Lemmas~\ref{l:directsumgeneric} and~\ref{l:sklad}, the direct sum of all these cut-and-project schemes is a generic cut-and-project scheme with self-similarity $A$.
\end{proof}

We illustrate the construction of the above proof on a simple example.

\begin{example}\label{ex:dimenze6}
  Consider the matrix
  $$
  A=\begin{pmatrix}
                             \tau & 0 & 0 \\
                             0 & \tau & 0 \\
                             0 & 0 & \tau'
                           \end{pmatrix},
  $$
  where $\tau=\frac12(1+\sqrt5)$ is the golden ratio and $\tau'=\frac12(1-\sqrt5)$ its algebraic conjugate.
 The minimal polynomial of $\tau,\tau'$ is $f(x)=x^2-x-1$ with companion matrix $C_f=\left(\begin{smallmatrix}
                             0 & 1 \\
                             1 & 1
                           \end{smallmatrix}\right)$.
 The matrix $C_f$ has eigenvectors $\binom{1}{\tau}$, $\binom{1}{\tau'}$ corresponding to the eigenvalues $\tau$, $\tau'$, respectively.

The proof of Theorem~\ref{t:mindimensionnotmin} gives a construction of a cut-and-project scheme $\Lambda=(\L\subset\R^6,\R^3)$ with self-similarity $A\pi_\parallel(\L)\subset\pi_\parallel(\L)$  as a direct sum of three Vandermonde schemes to the polynomial $f$.  The resulting scheme $(\L\subset\R^6,\R^3)$ is given by the lattice $\L$ associated to the matrix $L$ of the form
  $$
    L=\begin{pmatrix}
        1 & 0 & 0 & 1 & 0 & 0 \\
        \tau & 0 & 0 & \tau' & 0 & 0 \\
        0 & 1 & 0 & 0 & 1 & 0 \\
        0 & \tau & 0 & 0 & \tau' & 0 \\
        0 & 0 & 1 & 0 & 0 & 1 \\
        0 & 0 & \tau' & 0 & 0 & \tau
      \end{pmatrix}^{-1}.
  $$
Matrices $B,C$ for~\eqref{eq:ABC} are
    $$
    C=\begin{pmatrix}
                             C_f & O & O \\
                             O & C_f & O \\
                             O & O & C_f
                           \end{pmatrix}\in\Z^{6\times6},\qquad
    B=\begin{pmatrix}
                             \tau' & 0 & 0 \\
                             0 & \tau' & 0 \\
                             0 & 0 & \tau
                           \end{pmatrix}\in\R^{3\times 3}.
    $$
\end{example}

In some cases, the construction in the proof of Theorem~\ref{t:mindimensionnotmin} gives a cut-and-project scheme with a lattice in an exaggeratedly high dimension. We demonstrate such a situation on the following example.

\begin{example}\label{ex:kubicke}
Let $f(x)=x^3-2x^2-x+1$. The polynomial has three real roots $\beta,\beta',\beta''$. Let us construct a cut-and-project scheme with self-similarity given by
$$
A=\begin{pmatrix}
    \beta & 0 &0 &0 \\
    0 & \beta & 0 & 0 \\
    0 & 0 & \beta' & 0 \\
    0 & 0 & 0 & \beta''
  \end{pmatrix}.
$$
The construction described in proof of Theorem~\ref{t:mindimensionnotmin} yields a cut-and-project scheme as a direct sum of four Vandermonde schemes, each of dimension $d=3$, thus giving a lattice $\L\subset\R^{12}$. Let us provide a construction using a direct sum of only two Vandermonde schemes, resulting a lattice $\L\in\R^6$.
Consider two Vandermonde schemes $\Lambda_1$, $\Lambda_2$ to the polynomial $f$. The scheme $\Lambda_1$ is constructed in such a way that it has a self-similarity
$A_1=\big(\begin{smallmatrix}
  \beta & 0 \\
    0 & \beta'
\end{smallmatrix}\big)$, and the matrix $B_1$ corresponding to $A_1$ is of the form $B_1=(\beta'')$. Similarly, the scheme $\Lambda_2$ is constructed to have a self-similarity
$A_2=\big(\begin{smallmatrix}
  \beta & 0 \\
    0 & \beta''
\end{smallmatrix}\big)$, and the matrix $B_2$ corresponding to $A_2$ is of the form $B_2=(\beta')$.
The cut-and-project scheme $\Lambda$ having as a self-similarity the matrix $A_1\oplus A_2$ is constructed as the direct sum $\Lambda_1\oplus\Lambda_2$. The scheme with self-similarity $A\sim_\R A_1\oplus A_2$ is obtained by a simple permutation matrix.

Note that in order to obtain the schemes $\Lambda_1$, $\Lambda_2$, we had to distribute the eigenvalues of $A$ into the spectrum of $A_1$, $A_2$ in a suitable way, so that the corresponding matrices $B_1$, $B_2$ are non-empty. This can be schematically captured by providing a matrix ${\mathcal K}\in\{0,1\}^{d\times 2}$ with $d$ rows corresponding to the eigenvalues of $A$ and $2$ columns corresponding to the matrices $A_1,A_2$. An element $\K_{ij}$ is equal to 1, if the $i$-th eigenvalue is assigned into the spectrum of $A_j$,
$$
\K=\begin{pmatrix}
  1 & 1  \\
  1 & 0  \\
  0 & 1
\end{pmatrix}\,.
$$
The number of elements ``1'' in the $i$-th row is equal to the multiplicity of the $i$-th eigenvalue in $\sigma(A)$. Moreover, in order that the matrices $A_1,A_2,B_1,B_2$ are all non-empty, each column must have at least one ``1'' and at least one ``0''.
\end{example}

As illustrated on the above example, the construction of a cut-and-project scheme to a matrix $A$ by composition of elementary Vandermonde schemes can be reformulated into solving a combinatorial question of existence of a certain type of matrix. Let $u,K\in \N$ and let $l_1,\dots,l_u$ be non-negative integers.
We will say that a matrix ${\mathcal K}\in\{0,1\}^{u\times K}$ is a well distributing matrix if it satisfies
$$
(i) \quad \sum_{j=1}^{K}{\mathcal K}_{i,j}=l_i,\ \text{ for all }i,\qquad (ii) \quad 1\leq \sum_{i=1}^{u}{\mathcal K}_{i,j}\leq u-1, \ \text{ for all }j.
$$

\begin{pozn}
  The combinatorial question could also be rephrased as a problem of construction of a bipartite graph without multiple edges. Consider a graph with $u$ vertices in the first and $K$ vertices in the second part of the graph. Elements ${\mathcal K}_{i,j}$ determine whether vertex $i$ in the first and vertex $j$ in the second part are connected by an edge or not.
  Condition (i) expresses the fact that the degrees of vertices in the first part are equal to $l_1,\dots,l_d$, condition (ii) states that the degree of each vertex in the second part takes value in $\{1,2,\dots,u-1\}$. The existence of such a graph is equivalent to existence of a matrix ${\mathcal K}\in\{0,1\}^{u\times K}$ with the desired conditions satisfied.
\end{pozn}

\begin{lem}\label{l:K}
Let $u\in\N$, $u\geq 2$, and let $l_1,\dots,l_u$ be non-negative integers. Denote $M:=\max\{l_j:j=1,\dots,u\}$. The minimal value $K$ for which a well distributing matrix $\K\in\{0,1\}^{u\times K}$ exists is
\begin{equation}\label{eq:K}
\max\Big\{M, \Bigl\lceil\tfrac{1}{u-1}\sum_{i=1}^{u}l_i\Bigr\rceil\Big\}.
\end{equation}
\end{lem}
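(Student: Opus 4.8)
The plan is to prove that the quantity in \eqref{eq:K}, which I abbreviate by $K_0:=\max\{M,\lceil\tfrac{1}{u-1}\sum_{i=1}^{u}l_i\rceil\}$, is simultaneously a lower bound for the number of columns of any well distributing matrix and is attained by an explicit construction. Throughout I write $S:=\sum_{i=1}^{u}l_i$ for the total number of ones. The degenerate case $S=0$ (all $l_i=0$) gives $K_0=0$, and the empty $u\times 0$ matrix vacuously satisfies (i)--(ii), so I assume $S\geq 1$.

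For the lower bound I would argue as follows. Let $\K\in\{0,1\}^{u\times K}$ be well distributing. Condition (i) forces row $i$ to contain $l_i$ ones in $l_i$ distinct columns, so $K\geq l_i$ for every $i$ and hence $K\geq M$. Counting the ones columnwise and using the upper part of condition (ii), $S=\sum_{j=1}^{K}\sum_{i=1}^{u}\K_{i,j}\leq (u-1)K$, whence $K\geq S/(u-1)$; since $K$ is an integer this gives $K\geq\lceil S/(u-1)\rceil$. Combining the two estimates yields $K\geq K_0$.

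For the upper bound I would exhibit a well distributing matrix with exactly $K_0$ columns by a cyclic (arc) filling. Index the columns by $0,1,\dots,K_0-1$ and keep a running pointer $p$ initialised at $0$; processing the rows $i=1,\dots,u$ in turn, place the $l_i$ ones of row $i$ into the columns $p,p+1,\dots,p+l_i-1$ taken modulo $K_0$, then advance $p$ by $l_i$ modulo $K_0$. Because $l_i\leq M\leq K_0$, these $l_i$ target columns are pairwise distinct, so condition (i) holds and no cell is filled twice. By construction the sequence of columns used, read in order, is precisely $0,1,\dots,S-1$ reduced modulo $K_0$; hence every column sum equals $\lfloor S/K_0\rfloor$ or $\lceil S/K_0\rceil$. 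It then remains to check that these two values lie in $\{1,\dots,u-1\}$. The inequality $K_0\geq\lceil S/(u-1)\rceil\geq S/(u-1)$ gives $S/K_0\leq u-1$, so $\lceil S/K_0\rceil\leq u-1$, which is the upper part of (ii); for the lower part I must show $\lfloor S/K_0\rfloor\geq 1$, i.e.\ $S\geq K_0$.

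The verification $S\geq K_0$ is the one step that is not immediate, and I expect it to be the main obstacle. If $K_0=M$ it is clear, since $S=\sum_i l_i\geq\max_i l_i=M$. If instead $K_0=\lceil S/(u-1)\rceil$, I would invoke the hypothesis $u\geq 2$: then $u-1\geq 1$, so $S/(u-1)\leq S$, and as $S$ is an integer $\lceil S/(u-1)\rceil\leq S$. In either case $S\geq K_0$, so $\lfloor S/K_0\rfloor\geq 1$ and every column then contains between $1$ and $u-1$ ones. This shows that $K_0$ columns suffice, matching the lower bound, and establishes \eqref{eq:K}.
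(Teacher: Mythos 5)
Your proof is correct, and the construction is genuinely different from the paper's. The lower-bound half (row sums force $K\geq M$; summing the column condition forces $K\geq\lceil\frac{1}{u-1}\sum_i l_i\rceil$) is the same in both arguments, as it essentially must be. For the matching construction, the paper proceeds greedily: it places the ones of rows $1,2,\dots$ in consecutive blocks of columns, observes that the disjointness of the supports of the first two rows already guarantees every column sum is at most $u-1$, chooses an index $i_0$ so that the blocks of rows $1,\dots,i_0+1$ cover all $K$ columns (giving column sums at least $1$), leaves the remaining entries arbitrary subject to (i), and then needs a separate case $l_1+l_2>K$ which it handles by a complementation trick, interchanging the roles of $0$ and $1$ and of $l_i$ and $K-l_i$. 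Your cyclic round-robin filling replaces all of this by a single uniform construction: the placed ones traverse the columns $0,1,\dots,S-1$ modulo $K_0$, so every column sum is $\lfloor S/K_0\rfloor$ or $\lceil S/K_0\rceil$, and condition (ii) reduces to exactly the two inequalities $\lceil S/K_0\rceil\leq u-1$ and $S\geq K_0$. You correctly identify $S\geq K_0$ as the only nontrivial verification and prove it in both cases ($K_0=M$ is clear; $K_0=\lceil S/(u-1)\rceil\leq S$ since $u\geq 2$ and $S$ is an integer). What your approach buys is the absence of any case distinction and a fully explicit, balanced matrix; what the paper's buys is that the upper bound in (ii) is automatic from the first two rows without computing any column sums, at the price of the two-case analysis. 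Your handling of the degenerate case $S=0$ via the empty matrix is also consistent with the formula.
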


\begin{proof}
Let us show that $K$ cannot be smaller than the value in~\eqref{eq:K}.
Consider a well distributing matrix $\K\in\{0,1\}^{u\times K}$.
Condition (i) says that the $i$-th row of the matrix ${\mathcal K}$ contains $l_i$ elements ``1'' and $K-l_i$ elements ``0''.
Necessarily, $K\geq M=\max\{l_j:j=1,\dots,u\}$.
A well distributing matrix ${\mathcal{K}}$ also satisfies (ii). Summing inequalities (ii) for $j=1,\dots, K$, we have
$$
\sum_{j=1}^K\sum_{i=1}^{u}{\mathcal K}_{i,j}=\sum_{i=1}^{u} l_i \leq K(u-1)\,,
$$
which shows that $K\geq \tfrac{1}{u-1}\sum_{i=1}^{u}l_i$.

Let us demonstrate that for $K:=\max\Big\{M, \Bigl\lceil\tfrac{1}{u-1}\sum_{i=1}^{u}l_i\Bigr\rceil\Big\}$, a well distributing matrix exists.
Suppose first that $l_1+l_2\leq K$.
We will define the matrix ${\mathcal K}$ by setting precisely $l_i$ elements ``1'' in the $i$-th row, thus respecting (i).
Define the first two rows of the matrix ${\mathcal K}$ as follows.
$$
{\mathcal K}_{1,j}=\begin{cases}
                     1 & \mbox{for } j=1,\dots, l_1 ,\\
                     0 & \mbox{otherwise},
                   \end{cases}
                   \qquad
{\mathcal K}_{2,j}=\begin{cases}
                     1 & \mbox{for } j=l_1+1,\dots, l_1+l_2, \\
                     0 & \mbox{otherwise}.
                   \end{cases}
$$
Note that whatever the elements in the other rows of ${\mathcal K}$, we already have $\sum_{i=1}^{u}{\mathcal K}_{i,j}\leq u-1$ for all~$j$.
If $u=2$, then~\eqref{eq:K} means that $K=l_1+l_2$,
$$
{\mathcal K}=\Big(\begin{array}{c@{\ \,}c}
               \overbrace{1\ 1\ \dots\ \dots\ 1}^{l_1} & 0\ 0\ \dots\ 0 \\
                0\ 0\ \dots\ \dots\ 0 & \underbrace{1\ 1\ \dots\ 1}_{l_2}
             \end{array}\Big)
$$
and we have also $\sum_{i=1}^{u}{\mathcal K}_{i,j}\geq 1$ for all $j$, thus ${\mathcal K}$ satisfies (ii).
Consider $u\geq 3$.
By assumption, $K\leq\sum_{i=1}^{u} l_i$, and therefore there exists $i_0\geq 2$ such that $\sum_{i=1}^{i_0}l_i\leq K<\sum_{i=1}^{i_0+1}l_i$.
Set for $i=3,\dots,i_0$
$$
{\mathcal K}_{i,j}=\begin{cases}
                     1 & \mbox{for } j=1+\sum_{m=1}^{i-1}l_m,\dots, \sum_{m=1}^{i}l_m \\
                     0 & \mbox{otherwise},
                   \end{cases}
$$
and ${\mathcal K}_{i_0+1,j}=1$ for $j=1+\sum_{m=1}^{i_0}l_m,\dots, K$. The rest of matrix elements define arbitrarily, just respecting (i), i.e.
$$
{\mathcal K}=\begin{pmatrix}
               \underbrace{1\ \dots\ 1}_{l_1} &  &  &  &  &    \\[-3mm]
                & \underbrace{1\ \dots\ 1}_{l_2} &  &  &  &   \\[-3mm]
                &  & \underbrace{1\ \dots\ 1}_{l_3} &  &  &   \\[-3mm]
                &  &  & \quad\dots\quad &  &   \\[-3mm]
                &  &  &  & \overbrace{1\ \dots\ 1}^{l_{i_0}} &   \\
               * & * & * & * & * & 1\ \dots  \\
               * & * & * & * & * & *  \\
               * & * & * & * & * & *
             \end{pmatrix}
$$
With this, $\sum_{i=1}^{u}{\mathcal K}_{i,j}\geq 1$ for all $j$. 

Secondly, let $l_1+l_2>K$. This is equivalent to $(K-l_1)+(K-l_2)<K$, and the demonstration is analogous to the above, just interchanging
the role of ``0'' and ``1'' as matrix elements and $K-l_i$ and $l_i$ as their number in the row.
Here we use the inequality $\frac{1}{u-1}\sum_{i=1}^{u}l_i\leq K$, which can be rewritten as $K\leq \sum_{i=1}^{u} (K-l_i)$.
This completes the proof.
\end{proof}

\begin{prop}\label{p:notmindimension}
Let $A\in\R^{n\times n}$ be a matrix diagonalizable over $\C$. Suppose that
all $\lambda\in\sigma(A)$ are algebraic integers with the same minimal polynomial $f$ with $r$ real and $t$ pairs of complex conjugate roots. Denote $l_j=m_A(\beta_j)$, $j=1,\dots,r+t$, and $M:=\max\{l_j:j\in\{1,\dots,r+t\}\}$. Suppose that $r+t\geq 2$.
Set
$$
s= d\max\Bigl\{ M, \Bigl\lceil\tfrac{1}{r+t-1}\sum_{j=1}^{r+t}l_j\Bigr\rceil\Bigr\}\,.
$$
Then there exists a scheme $\Lambda=(\L\subset\R^{s},\R^n)$ with self-similarity $A$.
\end{prop}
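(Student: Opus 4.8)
The plan is to convert the combinatorial statement of Lemma~\ref{l:K} into a geometric construction built from Vandermonde schemes. First I would apply Lemma~\ref{l:K} with $u=r+t$ and the multiplicities $l_1,\dots,l_{r+t}$. Since $r+t\geq 2$, this yields a well distributing matrix $\K\in\{0,1\}^{(r+t)\times K}$ with $K=\max\{M,\lceil\frac{1}{r+t-1}\sum_{j}l_j\rceil\}$, so that $s=dK$ is precisely the claimed dimension. I would index the $r+t$ rows of $\K$ by the $r$ real roots and the $t$ complex-conjugate pairs of $f$; each column then prescribes a splitting of the $d$ roots of $f$ into a physical part (entries equal to $1$) and an internal part (entries equal to $0$).

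For each column $j=1,\dots,K$ I would construct a Vandermonde scheme $\Lambda_j$ corresponding to $f$, choosing the invariant-subspace decomposition $\R^d=\mathcal{Y}_1\oplus\mathcal{Y}_2$ so that the eigenvector $\bs{z}_i$ (or the real pair $\Re\bs{z}_i,\Im\bs{z}_i$) enters the physical block exactly when $\K_{i,j}=1$. Condition (ii) of a well distributing matrix, namely $1\leq\sum_i\K_{i,j}\leq r+t-1$, guarantees that both summands are proper and non-zero, which is exactly the hypothesis needed for a Vandermonde scheme. By Theorem~\ref{t:elementary}---applicable because $r+t\geq 2$ rules out the linear and negative-discriminant quadratic cases---each $\Lambda_j$ is generic, with self-similarity $A_j$ whose spectrum consists of the roots marked in column $j$, the remaining roots being carried by the internal matrix $B_j$.

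Next I would take the direct sum $\Lambda=\Lambda_1\oplus\cdots\oplus\Lambda_K$. Lemma~\ref{l:directsumgeneric} shows that $\Lambda$ remains generic, and Lemma~\ref{l:sklad} shows that the block-diagonal matrix $A'=A_1\oplus\cdots\oplus A_K$ is a self-similarity of $\Lambda$. Each $\Lambda_j$ lives in $\R^d$, so $\Lambda$ lives in $\R^{dK}=\R^s$; condition (i) of $\K$ makes each real root $\beta_i$ and each complex pair occur in exactly $l_i$ columns, so the physical dimensions of the summands add up to $n$ and, simultaneously, every eigenvalue $\beta_i$ occurs in $A'$ with multiplicity $l_i$---the same as in $A$.

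Finally, since $A$ and $A'$ are both diagonalizable over $\C$ and share identical eigenvalue multiplicities, they have the same real Jordan form, i.e.\ $A\sim_\R A'$. Applying Corollary~\ref{c:BUNO} (equivalently Lemma~\ref{l:fn}) I would pass from the generic scheme with self-similarity $A'$ to an equivalent generic scheme of the same dimension $s$ having $A$ itself as a self-similarity, which completes the argument. I expect the only delicate point to be the bookkeeping matching the per-column physical dimensions and the eigenvalue multiplicities to those of $A$; the genuinely hard optimisation has already been discharged in Lemma~\ref{l:K}, so no further obstacle should appear.
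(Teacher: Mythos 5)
Your proposal is correct and follows essentially the same route as the paper's own proof: invoke Lemma~\ref{l:K} to obtain a well distributing matrix, use its columns to assign roots of $f$ to the physical and internal blocks of $K$ Vandermonde schemes (generic by Theorem~\ref{t:elementary}), take the direct sum via Lemmas~\ref{l:directsumgeneric} and~\ref{l:sklad}, and finish by passing from $A_1\oplus\cdots\oplus A_K$ to $A$ through real similarity (Corollary~\ref{c:BUNO}). The only cosmetic difference is that the paper normalizes $A$ to quasidiagonal form at the outset, whereas you apply the similarity at the end; the substance is identical.
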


The assumption $r+t\geq 2$ in the statement is equivalent to the requirement that the eigenvalues of $A$ are neither integers nor non-real quadratic numbers. 
The minimal cut-and-project scheme with a self-similarity $A$ whose spectrum is of the form $\sigma(A)=\{k\}\subset\Z$, or $\sigma(A)=\{a\pm b i\}$ where $a+bi$ is a quadratic number has been constructed in Section~\ref{sec:trivial}.

\begin{proof}
Without loss of generality (cf. Corollary~\ref{c:BUNO}) we can assume that $A$ is in its quasidiagonal form.
We show that a suitable scheme with self-similarity $A$ can be obtained as a direct sum of a sufficient number, say K, of Vandermonde schemes corresponding to the polynomial $f$. Suppose that the polynomial $f$ is of degree~$d=r+2t$. Formally, if the polynomial $f$ has $r$ real and $t$ pairs of complex conjugate roots,
$$
f(x)=\prod_{j=1}^{r}(x-\lambda_j)\prod_{k=1}^{t}(x-\mu_k)(x-\overline{\mu_k})\,,\quad r+t \geq 2\,.
$$
Then $M=\max\{l_j:j=1,\dots,r+t\}$, where $l_j=m_A(\lambda_j)$ for $j=1,\dots,r$, $l_{j+k}=m_A(\mu_k)$ for $k=1,\dots,t$ are multiplicities of $\lambda_j\in\R$, $\mu_k\in\C\setminus\R$ in the spectrum of $A$. Set $u:=r+t$ and $K=\max\Bigl\{ M, \Bigl\lceil\tfrac{1}{u-1}\sum_{j=1}^{u}l_j\Bigr\rceil\Bigr\}$.
%
%

Consider $K$ Vandermonde schemes $\Lambda_1$, \dots, $\Lambda_K$, having self-similarities $A_1$, \dots, $A_K$, where $A_i$ are suitable matrices in quasidiagonal form. Each scheme $\Lambda_i$ is given by a lattice $\L_i$ with an associated matrix $L_i$ satisfying
$\Big(\begin{smallmatrix}
 A_i & O \\
 O & B_i
\end{smallmatrix}\Big)L_i = L_iC_f$, where $C_f$ is the companion matrix to $f$. Moreover, the matrices $A_i$ are chosen in such a way that $A\sim A_1\oplus \cdots\oplus A_K$ and to each $A_j$, the corresponding matrix $B_j$ is non-empty. This algebraic situation is solved by the combinatorial task given in Lemma~\ref{l:K}. Consider a family of $K$ identical lists, whose items are
$$
\alpha_1, \alpha_2, \cdots, \alpha_r, \alpha_{r+1}, \cdots, \alpha_{r+t},
$$
where $\alpha_i$ are real roots or pairs of complex conjugate roots of the polynomial $f$, formally
$$
\alpha_i = \begin{cases}
              \lambda_j, & \mbox{for } j\in \{1,2,\cdots , r \}, \\
              \{\mu_{j-r}, \overline{\mu_{j-r}} \}, &  \mbox{for } j\in \{r+1,\dots,r+t \}.
            \end{cases}
$$
Consider a well distributing matrix $\K\in\{0,1\}^{u\times K}$ whose existence is ensured by Lemma~\ref{l:K}. If $\K_{ij}=1$, put $\alpha_i$ into the spectrum of $A_j$, otherwise, put $\alpha_i$ into the spectrum of $B_j$. The direct sum of Vandermonde schemes constructed in this way has for self-similarity a matrix $A_1\oplus\cdots\oplus A_K$. The scheme with self-similarity $A$ is obtained by a simple permutation.
\end{proof}

\begin{pozn}
Note that a direct sum of Vandermonde schemes of the same polynomial $f$ is determined using a lattice $\L$ whose vectors
have coordinates in the splitting field $\F$ of the polynomial $f$. In the following section we provide a construction of a scheme which has smaller lattice dimension compared to that of Proposition~\ref{p:notmindimension}. However, reducing the lattice dimension is achieved at the expense of having coordinates in a field with higher dimension over $\Q$.
\end{pozn}


\section{Minimal dimension of a scheme with given self-similarity}\label{sec:mindimenionscheme}

The construction of a cut-and-project scheme with a given self-similarity $A$ given in the previous section was a simple one. The main advantage was that the resulting scheme is a direct sum of elementary schemes, and the corresponding lattice has generators with components in the algebraic extension given by eigenvalues of $A$.
However, the dimension of the constructed scheme is not a~minimal one. In this section we determine a better bound on the minimal dimension and show that it is actually achieved.

Note that based on Proposition~\ref{p:Anekonjug}, it suffices to examine the cases where the eigenvalues of the matrix $A$ all have the same minimal polynomial $f$. More precisely, suppose that $f\in\Q[X]$ is an irreducible polynomial with roots $\beta_1,\beta_2,\dots,\beta_d$. We will assume that $\sigma(A)\subset\{\beta_1,\dots,\beta_d\}$.

\begin{thm}\label{t:mindimension}
Let $A\in\R^{n\times n}$ be a matrix diagonalizable over $\C$. Suppose that
all $\lambda\in\sigma(A)$ are algebraic integers with the same minimal polynomial $f$ of degree $d$.
Denote
$$
M:=\max\{m_A(\beta):\beta\text{ root of }f\}\geq 1,\qquad m:=\min\{m_A(\beta):\beta\text{ root of }f\}\geq 0.
$$
Let $\Lambda=(\L\subset\R^s,\R^n)$ be a generic cut-and-project scheme with self-similarity $A$ and such that the dimension $s$ of the lattice is as small as possible. Then
$$
s=\begin{cases}
        Md & \mbox{if } m<M, \\
        (M+1)d, & \mbox{otherwise}.
      \end{cases}
$$
\end{thm}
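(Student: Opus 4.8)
The plan is to separate the lower and upper bounds after reducing to a normal form. By Corollary~\ref{c:BUNO} and Proposition~\ref{p:Anekonjug} I may assume $A$ is in real quasidiagonal form with $\sigma(A)\subseteq\{\beta_1,\dots,\beta_d\}$, and by Lemma~\ref{l:fn} that $C$ is in rational Jordan form. Since $A$ is diagonalizable over $\C$ with eigenvalues among the roots of the irreducible $f$, its rational minimal polynomial is $\m{A}=f$, whence $\mu_C=\m{C}=f$ by Proposition~\ref{p:minpol}. Because $f$ is irreducible and the integer polynomial $\chi_C$ has the same irreducible factors as $\mu_C$, necessarily $\chi_C=f^{\,k}$ for some $k\ge1$; thus $s=kd$, the matrix $C$ is diagonalizable, and each root $\beta_j$ is an eigenvalue of $C$ of multiplicity exactly $k$. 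Writing $l_j=m_A(\beta_j)\ge0$ and using $C\sim\mathrm{diag}(A,B)$ gives $m_B(\beta_j)=k-l_j\ge0$ for all $j$.

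The lower bound is now immediate. From $m_B(\beta_j)=k-l_j\ge0$ we get $k\ge M$, hence $s=kd\ge Md$ in every case. If $m<M$ this already is the asserted value. If $m=M$ then every $l_j=M$, so the internal dimension equals $s-n=\sum_j(k-l_j)=d(k-M)$; since a cut-and-project scheme has $s-n\ge1$, we must have $k\ge M+1$ and therefore $s\ge(M+1)d$.

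For the matching upper bound I construct a generic scheme with $k=M$ when $m<M$ and $k=M+1$ when $m=M$, taking $C=I_k\otimes C_f$. Then the $\beta_j$-eigenspace of $C$ is $E_j=\{h\otimes\bs{z}_j:h\in\C^k\}$ with $\bs{z}_j=(1,\beta_j,\dots,\beta_j^{d-1})^\top$. I fix splittings $\C^k=U_j^A\oplus U_j^B$ with $\dim U_j^A=l_j$, $\dim U_j^B=k-l_j$, and let the first $n$ columns of $Y=L^{-1}$ be a (realified) basis of $\bigoplus_j U_j^A\otimes\bs{z}_j$ and the remaining columns a basis of $\bigoplus_j U_j^B\otimes\bs{z}_j$. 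Imposing $U_{\overline{\jmath}}^{A}=\overline{U_j^{A}}$ for each complex pair $\beta_j,\overline{\beta_j}$ (and $U_j^A$ real for $\beta_j\in\R$) makes both spans conjugation-stable, so $L$ can be taken real; the identities $Y_1A=CY_1$, $Y_2B=CY_2$ hold by construction, i.e.\ \eqref{eq:ABC} is satisfied and $A\pi_\parallel(\L)\subset\pi_\parallel(\L)$ follows from Proposition~\ref{p:ABCzpet}. The entries describing $U_j^A,U_j^B$ are chosen transcendental over the splitting field $\F=\Q(\beta_1,\dots,\beta_d)$, subject only to the reality constraint; in the boundary case $m=M$ this construction collapses to a Kronecker product $L=H\otimes(\cdots)$ with a generic $H\in\R^{(M+1)\times(M+1)}$, exactly in the spirit of Section~\ref{sec:trivial}.

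Finally I verify genericity through Proposition~\ref{p:generic}. Using the dual eigenbasis of $C=I_k\otimes C_f$ and Lemma~\ref{l:obrazy}, a rational vector in the physical span $\bigoplus_j U_j^A\otimes\bs{z}_j$ corresponds to some $h\in\Q(\beta_1)^k$ with $\psi_j(h)\in U_j^A$ for all $j$, where $\psi_j:\beta_1\mapsto\beta_j$; writing $h=\sum_{p=0}^{d-1}\beta_1^p\bs{c}_p$ with $\bs{c}_p\in\Q^k$, these membership conditions become a homogeneous linear system in $\bs{c}_0,\dots,\bs{c}_{d-1}$ whose coefficients are the transcendental subspace data multiplied by powers of the $\beta_j$. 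A nonzero rational solution would force a nontrivial algebraic relation over $\F$ among these transcendentals, which a generic choice excludes; hence the physical span meets $\Q^{kd}$ only in $\bs{0}$ (aperiodicity), and the symmetric argument for the $U_j^B$ yields non-degeneracy --- automatic, via Corollary~\ref{c:vektor0}, whenever some $U_j^B=\{\bs{0}\}$, as happens at the root of maximal multiplicity when $m<M$. Irreducibility follows from criterion (iii) of Proposition~\ref{p:generic} by exhibiting a vector in the physical span with coordinates linearly independent over $\Q$, again possible by transcendence of the parameters. The main obstacle is exactly this genericity check in the unequal-multiplicity regime $m<M$: there is no clean Kronecker factorization, the subspaces $U_j^A$ must be simultaneously ``unrelated'' over $\F$ yet conjugation-compatible, and one must ensure the structured system above admits no nonzero rational kernel vector; I expect to handle it, as in the proof of Proposition~\ref{p:generic}(iii), by fixing a $\Q$-basis of the coordinates involved and adjoining reals chosen linearly independent over it.
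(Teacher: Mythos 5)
Your lower bound is complete and correct, and in fact marginally cleaner than the paper's: from $\m{A}=f$ and Proposition~\ref{p:minpol} you get $\mu_C=f$, hence $\chi_C=f^{\,k}$, $s=kd$ and $m_C(\beta_j)=k$ for every root $\beta_j$, so $k\geq M$; and when $m=M$ the requirement $\dim V_\perp=s-n=d(k-M)\geq 1$ forces $k\geq M+1$. (The paper instead argues via divisibility of $\chi_C=\chi_A\chi_B$ by powers of $f$ and Corollary~\ref{c:spektrumAB}; the two routes are equivalent.) The skeleton of your construction also agrees with the paper's: $C=I_k\otimes C_f$, eigenspaces $E_j=\{h\otimes\bs{z}_j:h\in\C^k\}$, and a distribution of an $l_j$-dimensional piece of each $E_j$ to the physical space with the complement going to the internal space.

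The existence half, however, has a genuine gap, and it is exactly the one you flag yourself: nothing in the proposal actually proves genericity of the constructed scheme. Your reduction of aperiodicity to the statement ``no nonzero $h\in\Q(\beta_1)^k$ satisfies $\psi_j(h)\in U_j^A$ for all $j$'' is correct, and non-degeneracy is indeed automatic via Corollary~\ref{c:vektor0} when some $U_j^B=\{\bs{0}\}$, i.e.\ when $m<M$; but the assertion that a ``transcendental'' choice of the $U_j^A$ excludes every such $h$ is not yet an argument. One must show, for each of the countably many candidate vectors $h$, that the set of conjugation-compatible configurations $(U_j^A)_j$ with $\psi_j(h)\in U_j^A$ for all $j$ is a \emph{proper} subset of the reality-constrained parameter space (and the same for non-degeneracy when $m=M$), and the irreducibility criterion (iii) of Proposition~\ref{p:generic} is stronger still: it demands an explicit vector in the physical span whose $kd$ coordinates admit no nonzero integer relation. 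This verification is precisely where the paper invests its technical effort: instead of arbitrary generic subspaces it takes the rigid choice $Y=(H\otimes Y_f)P$ with $H$ of the special form~\eqref{eq:H}, the products $1,\,t_1,\,t_1t_2,\dots,t_1\cdots t_{K-1}$ linearly independent over the splitting field $\F$, and a permutation $P$ placing at least one column of the \emph{first} block on each side; Lemma~\ref{l:H} then annihilates rational vectors in either span simultaneously, and Lemma~\ref{l:H2} produces the witness vector for irreducibility. Until you supply analogues of these two lemmas for your unstructured, conjugation-constrained subspaces $U_j^A$, the upper bound $s\leq Kd$, and hence the theorem, is not established.
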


Before presenting the proof, we demonstrate the idea on the self-similarity $A$ taken from Example~\ref{ex:dimenze6}.

\begin{example}\label{ex:dimenze4}
Let $A$, $B$, $C_f$, $C$, $Y$ be as in Example~\ref{ex:dimenze6}.
Note that the matrix $Y$ can be written as
$$
    Y=\begin{pmatrix}
        1 & 1 & 0 & 0 & 0 & 0 \\
        \tau & \tau' & 0 & 0 & 0 & 0 \\
        0 & 0 & 1 & 1 & 0 & 0  \\
        0 & 0 & \tau & \tau' &  0 & 0 \\
        0 & 0 & 0 & 0 & 1 & 1 \\
        0 & 0 & 0 & 0 & \tau & \tau'
      \end{pmatrix}
      \underbrace{
      \begin{pmatrix}
        1 & 0 & 0 & 0 & 0 & 0 \\
        0 & 0 & 0 & 1 & 0 & 0 \\
        0 & 1 & 0 & 0 & 0 & 0  \\
        0 & 0 & 0 & 0 & 1 & 0 \\
        0 & 0 & 1 & 0 & 0 & 0 \\
        0 & 0 & 0 & 0 & 0 & 1
      \end{pmatrix}}_{P}=\left(I_3\otimes \begin{pmatrix}
                                           1 & 1 \\
                                           \tau & \tau'
                                         \end{pmatrix}\right)P\,,
$$
where $\otimes$ stands for the Kronecker product.
Using the same matrix $P$, we can transform
$$
\begin{pmatrix}
A & O \\
O & B
\end{pmatrix} = P^\top \begin{pmatrix}
                      \tau &  &  &  &  &  \\
                       & \tau' &  &  &  &  \\
                       &  & \tau &  &  &  \\
                       &  &  & \tau' &  &  \\
                       &  &  &  & \tau &  \\
                       &  &  &  &  & \tau'
                    \end{pmatrix}P=P^\top \left(I_3\otimes
\begin{pmatrix}
  \tau & 0 \\
  0 & \tau'
\end{pmatrix}\right) P .
$$
Obviously also $C=I_3\otimes C_f$. Substituting these into
$CY=Y\left(\begin{smallmatrix}
  A & O \\
  O & B
\end{smallmatrix}\right)$,
we obtain
$$
(I_3\otimes C_f)\left(I_3\otimes \begin{pmatrix}
                                           1 & 1 \\
                                           \tau & \tau'
                                         \end{pmatrix}\right)P
= \left(I_3\otimes \begin{pmatrix}
                                           1 & 1 \\
                                           \tau & \tau'
                                         \end{pmatrix}\right)P
P^\top \left(I_3\otimes
\begin{pmatrix}
  \tau & 0 \\
  0 & \tau'
\end{pmatrix}\right) P\,,
$$
which gives
$$
(I_3\otimes C_f)\left(I_3\otimes \begin{pmatrix}
                                           1 & 1 \\
                                           \tau & \tau'
                                         \end{pmatrix}\right)
= \left(I_3\otimes \begin{pmatrix}
                                           1 & 1 \\
                                           \tau & \tau'
                                         \end{pmatrix}\right)
\left(I_3\otimes
\begin{pmatrix}
  \tau & 0 \\
  0 & \tau'
\end{pmatrix}\right)\,,
$$
reducing to the obvious
\begin{equation}\label{eq:obviouselement}
  C_f\begin{pmatrix}
   1 & 1 \\
   \tau & \tau'
   \end{pmatrix} =
   \begin{pmatrix}
   1 & 1 \\
   \tau & \tau'
   \end{pmatrix}
\begin{pmatrix}
  \tau & 0 \\
  0 & \tau'
\end{pmatrix}\,.
\end{equation}

The lattice $\L$ in the composition of the elementary cut-and-project schemes into their direct product is defined by the inverse $L=Y^{-1}$ of the matrix $Y$ given using Kronecker 
product with the identity matrix.
One can ask whether the Kronecker product with another non-singular matrix, say $H$, could not produce better results, in particular with respect to the number of elementary schemes needed. Indeed, it turns out that in many cases, suitable choice of the matrix $H$ does reduce the dimension of the resulting cut-and-project scheme.

For the matrix $A$ of Example~\ref{ex:dimenze6}, we will try to compose only two elementary schemes, but instead of $I_2$, we put a general non-singular matrix $H\in\R^{2\times 2}$. We set
$$
Y=\left(H\otimes \begin{pmatrix}
                                           1 & 1 \\
                                           \tau & \tau'
                                         \end{pmatrix}\right)P,
$$ 
as
$$
(I_2\otimes C_f)\left(H\otimes \begin{pmatrix}
                                           1 & 1 \\
                                           \tau & \tau'
                                         \end{pmatrix}\right)
= \left(H\otimes \begin{pmatrix}
                                           1 & 1 \\
                                           \tau & \tau'
                                         \end{pmatrix}\right)
\left(I_2\otimes
\begin{pmatrix}
  \tau & 0 \\
  0 & \tau'
\end{pmatrix}\right)\,.
$$
For $H=I_2$, it was explained in Example~\ref{ex:dimenze6} that the constructed scheme is not generic. Taking a different $H$ may produce a generic scheme.
\end{example}

The following two lemmas will be used in the proof of Theorem~\ref{t:mindimension} for verifying that the constructed scheme is generic. We will consider a matrix $H$ in the form
\begin{equation}\label{eq:H}
H=\begin{pmatrix}
    1 & -t_1 & 0 & \dots & 0 & 0\\
    0 & 1 & -t_2 & \dots & 0 & 0\\
    \vdots & \vdots  & \vdots & \ddots & \vdots  & \vdots \\
    0 & 0 & 0 & \dots & 1 & -t_{K-1} \\
    0 & 0 & 0 & \dots & 0 & 1
  \end{pmatrix}\in\R^{K\times K}\,.
\end{equation}
Recall that $Z\in\C^{d\times d}$ is the Vandermonde matrix whose columns are vectors $\bs{z}_j=(1,\beta_j,\dots,\beta_j^{d-1})^\top$, $j=1,\dots,d$, $\beta_j$ being the roots of the polynomial $f$.

\begin{lem}\label{l:H}
There exist $t_1,\dots,t_{K-1}\in\R$ such that if $\bs{x}\in \C^{Kd}$ of the form $\bs{x}^\top=(x_1,x_2,\dots,x_{Kd})$ satisfies $\prod_{j=1}^dx_j=0$ and $(H\otimes Z)\bs{x}\in\Q^{Kd}$
then $\bs{x}=\bs{0}$.
\end{lem}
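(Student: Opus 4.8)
The plan is to read off the block structure of $H\otimes Z$ and combine it with the Galois-conjugate description of the rows of $Z^{-1}$ from Lemma~\ref{l:obrazy}. First I would partition $\bs{x}^\top=(\bs{x}^{(1)},\dots,\bs{x}^{(K)})$ into $K$ blocks $\bs{x}^{(l)}\in\C^{d}$ and set $\bs{v}^{(l)}:=Z\bs{x}^{(l)}$. Since the matrix $H$ from \eqref{eq:H} is upper bidiagonal with $H_{ii}=1$ and $H_{i,i+1}=-t_i$, the $i$-th block of $(H\otimes Z)\bs{x}$ equals $\bs{v}^{(i)}-t_i\bs{v}^{(i+1)}$ for $i<K$ and $\bs{v}^{(K)}$ for $i=K$. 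Hence the hypothesis $(H\otimes Z)\bs{x}\in\Q^{Kd}$ is equivalent to the $K$ conditions $\bs{v}^{(K)}\in\Q^{d}$ and $\bs{v}^{(i)}-t_i\bs{v}^{(i+1)}\in\Q^{d}$ for $i=1,\dots,K-1$.

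Next I would solve this recursion from the bottom. Writing $\bs{v}^{(i)}=\bs{r}^{(i)}+t_i\bs{v}^{(i+1)}$ with $\bs{r}^{(i)}\in\Q^{d}$ and $\bs{v}^{(K)}=\bs{r}^{(K)}\in\Q^{d}$, back-substitution yields
\[
\bs{v}^{(l)}=\sum_{m=l}^{K}\Big(\prod_{p=l}^{m-1}t_p\Big)\bs{r}^{(m)},\qquad l=1,\dots,K,
\]
where the empty product equals $1$. Applying $Z^{-1}$ gives $\bs{x}^{(l)}=\sum_{m=l}^{K}(\prod_{p=l}^{m-1}t_p)\,Z^{-1}\bs{r}^{(m)}$. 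By Lemma~\ref{l:obrazy}, for each rational $\bs{r}^{(m)}$ the entries of $Z^{-1}\bs{r}^{(m)}$ are the images $\psi_j\big((Z^{-1}\bs{r}^{(m)})_1\big)$ of a single element of $\Q(\beta_1)$ under the isomorphisms $\psi_j:\Q(\beta_1)\to\Q(\beta_j)$, so in particular they all lie in the splitting field $\F$.

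Now I would use the two remaining ingredients. The coordinate hypothesis $\prod_{j=1}^{d}x_j=0$ says that some entry $x_{j_0}=(\bs{x}^{(1)})_{j_0}$ vanishes, which by the displayed formula for $l=1$ gives the $\F$-linear relation $\sum_{m=1}^{K}\big(\prod_{p=1}^{m-1}t_p\big)\,(Z^{-1}\bs{r}^{(m)})_{j_0}=0$ with all coefficients $(Z^{-1}\bs{r}^{(m)})_{j_0}\in\F$. The freedom in the statement is exactly the choice of $t_1,\dots,t_{K-1}$, which depends only on $\F$ and hence can be fixed once for all $\bs{x}$: I would pick them inductively so that the $K$ products $1,\,t_1,\,t_1t_2,\dots,t_1\cdots t_{K-1}$ are linearly independent over $\F$. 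This is possible because $\F$ is a finite extension of $\Q$, hence countable, whereas $\R$ has infinite dimension over $\F$, so at each step the next $t_p$ must avoid only a countable subset of $\R$. With this choice every coefficient $(Z^{-1}\bs{r}^{(m)})_{j_0}$ is forced to be $0$, and since $\psi_{j_0}$ is an isomorphism (equivalently, by Corollary~\ref{c:vektor0}) each $\bs{r}^{(m)}=\bs{0}$. Consequently all $\bs{v}^{(l)}=\bs{0}$, hence all $\bs{x}^{(l)}=\bs{0}$ and $\bs{x}=\bs{0}$.

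The main obstacle is the interaction between the ``generic'' real numbers $t_p$, introduced precisely to destroy rationality, and the arithmetic rigidity coming from the Vandermonde/Galois structure. The crucial observations are that the bidiagonal shape of $H$ collapses the rationality of all blocks into a single $\F$-linear dependence among the $t$-monomials, and that one vanishing coordinate $x_{j_0}$ propagates, through the isomorphism $\psi_{j_0}$, to the vanishing of the entire rational vectors $\bs{r}^{(m)}$ and thereby of every block. Verifying that the $t_p$ can indeed be chosen so that the relevant monomials are $\F$-independent is the one genuinely non-formal point.
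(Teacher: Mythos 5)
Your proposal is correct and follows essentially the same route as the paper's proof: both reduce the rationality hypothesis, via the bidiagonal structure of $H$, to a single linear relation $\sum_{m=1}^{K}\bigl(\prod_{p=1}^{m-1}t_p\bigr)\bigl(Z^{-1}\bs{r}^{(m)}\bigr)_{j_0}=0$ with coefficients in the splitting field $\F$, choose the $t_p$ so that the monomials $1,t_1,t_1t_2,\dots,t_1\cdots t_{K-1}$ are linearly independent over $\F$, and finish with Corollary~\ref{c:vektor0}. The only cosmetic differences are that you obtain this relation by blockwise back-substitution rather than by writing out the first row of $H^{-1}$ as the paper does, and that you spell out the countability argument for the existence of the $t_p$, which the paper leaves as an ``obviously''.
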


\begin{proof}
Note that the condition $\prod_{j=1}^dx_j=0$ says that at least one of the first $d$ components of the vector $\bs{x}$ is zero. We will make the proof for $\bs{x}$ of the form
$\bs{x}^\top=(0,x_2,\dots,x_{Kd})$.
Denote $\bs{r}=(H\otimes Z)\bs{x}\in\Q^{Kd}$. Then
\begin{equation}\label{eq:Hinverse}
\bs{x}=(H^{-1}\otimes Z^{-1})\bs{r}\,.
\end{equation}
We will only use first $d$ rows of the above equation. For simplicity, denote 
$\bs{r}^\top=
  (\bs{r}_1^\top,\bs{r}_2^\top,\dots,\bs{r}_{K}^\top)$, for  $\bs{r}_j\in\Q^{d}\,.$
Calculating from~\eqref{eq:H} the first row of the matrix $H^{-1}$ as
$$
(H^{-1})_{1\bullet}=(1,\,t_1,\,t_1t_2,\,\dots,\,t_1t_2\cdots t_{K-1}),
$$
one then derives from~\eqref{eq:Hinverse}
that
\begin{equation}\label{eq:firstrow}
\begin{pmatrix}
  0 \\
  x_2 \\
  \vdots \\
  x_{d}
\end{pmatrix}=
  Z^{-1}\bs{r}_1 + t_1Z^{-1}\bs{r}_2 +  t_1t_2Z^{-1}\bs{r}_3 + \dots + t_1t_2\cdots t_{K-1}Z^{-1}\bs{r}_K\,.
\end{equation}
For $j=1,\dots,K$, the components of vectors $Z^{-1}\bs{r}_j$ belong to the splitting field $\F=\Q(\beta_1,\dots,\beta_d)$ of the polynomial $f$. Obviously,
one can chose $t_1,\dots,t_{K-1}$ such that the coefficients $1,t_1,t_1t_2,\dots,t_1\cdots t_{K-1}$ are linearly independent over $\F$. Inspecting the first row of~\eqref{eq:firstrow}, we derive that the first component of the vector $Z^{-1}\bs{r}_j$ for $j=1,\dots,k$ vanish. By Corollary~\ref{c:vektor0} we have that
$\bs{r}_1=\cdots =\bs{r}_K=\bs{0}$ and from~\eqref{eq:Hinverse} we get that $\bs{x}=\bs{0}$.
\end{proof}

%

\begin{lem}\label{l:H2}
Denote $\bs{e}_j$, $j=1,\dots,d$, the standard basis in $\R^d$.
Then there exist $\alpha_1,\dots,\alpha_{K}\in\C$ such that for every $\bs{q}\in\Q^{Kd}$, $\bs{q}\neq \bs{0}$, and every vector $\bs{v}\in\C^{Kd}$ of the form $\bs{v}^\top=(\alpha_1\bs{e}_{p_1}^\top,\dots,\alpha_K\bs{e}_{p_K}^\top)$ with $p_1,\dots,p_K\in\{1,\dots,d\}$ it holds that $\bs{q}^\top(H\otimes Z)\bs{v}\neq 0$, where $H$ is from~\eqref{eq:H} with $t_1,\dots,t_{K-1}$ found in Lemma~\ref{l:H}.
\end{lem}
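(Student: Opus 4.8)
The plan is to expand $\bs{q}^\top(H\otimes Z)\bs{v}$ into a $\C$-linear combination of the scalars $\alpha_1,\dots,\alpha_K$ whose coefficients live in a suitable countable field, and then to pick the $\alpha_k$ linearly independent over that field so that the combination can vanish only when $\bs{q}=\bs{0}$.

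First I would split $\bs{q}^\top=(\bs{q}_1^\top,\dots,\bs{q}_K^\top)$ with $\bs{q}_k\in\Q^d$ and exploit the bidiagonal shape of $H$ from~\eqref{eq:H}. Evaluating the Kronecker product block-wise and using $Z\bs{e}_{p_k}=\bs{z}_{p_k}$, the $i$-th block of $(H\otimes Z)\bs{v}$ equals $\alpha_i\bs{z}_{p_i}-t_i\alpha_{i+1}\bs{z}_{p_{i+1}}$ for $i<K$ and $\alpha_K\bs{z}_{p_K}$ for $i=K$. Pairing with $\bs{q}$ and collecting the terms carrying each $\alpha_k$ gives
\begin{equation*}
\bs{q}^\top(H\otimes Z)\bs{v}=\sum_{k=1}^{K}\alpha_k c_k,\qquad c_1=\bs{q}_1^\top\bs{z}_{p_1},\quad c_k=\bs{q}_k^\top\bs{z}_{p_k}-t_{k-1}\bs{q}_{k-1}^\top\bs{z}_{p_k}\ (k\geq 2).
\end{equation*}

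The key point is that each $c_k$ belongs to the field $\mathbb{G}:=\Q(\beta_1,\dots,\beta_d,t_1,\dots,t_{K-1})$, because the $\bs{q}_k$ are rational, the entries of $\bs{z}_{p_k}$ are powers of $\beta_{p_k}$, and $t_{k-1}\in\R$. As a finitely generated extension of $\Q$, the field $\mathbb{G}$ is countable, so $\C$ is infinite-dimensional over $\mathbb{G}$ and I may choose $\alpha_1,\dots,\alpha_K\in\C$ linearly independent over $\mathbb{G}$. With such a choice the equality $\sum_k\alpha_kc_k=0$ forces $c_1=\cdots=c_K=0$. I would then prove by induction that this implies $\bs{q}=\bs{0}$: from $c_1=\bs{q}_1^\top\bs{z}_{p_1}=0$ Lemma~\ref{l:A} (applied after clearing denominators, since $\beta_{p_1}$ has degree $d$) yields $\bs{q}_1=\bs{0}$; assuming $\bs{q}_1=\cdots=\bs{q}_{k-1}=\bs{0}$, the relation $c_k=0$ collapses to $\bs{q}_k^\top\bs{z}_{p_k}=0$, whence $\bs{q}_k=\bs{0}$ again by Lemma~\ref{l:A}. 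Thus $\bs{q}=\bs{0}$, contradicting $\bs{q}\neq\bs{0}$.

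The only genuinely delicate step is the choice of the ground field: the coefficients $c_k$ need not lie in the splitting field $\F$ alone, so one must first adjoin the reals $t_1,\dots,t_{K-1}$ supplied by Lemma~\ref{l:H} and only afterwards select the $\alpha_k$ independent over the enlarged, still countable, field $\mathbb{G}$. Granting this, the block expansion and the induction are routine consequences of the Kronecker structure and Lemma~\ref{l:A}, and the resulting $\alpha_k$ work uniformly for every nonzero $\bs{q}$ and every admissible choice of $p_1,\dots,p_K$.
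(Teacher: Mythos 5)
Your proof is correct and follows essentially the same route as the paper's: the same block expansion of $\bs{q}^\top(H\otimes Z)\bs{v}=\sum_{k}\alpha_k c_k$ with identical coefficients $c_k$, the same field $\mathbb{G}=\Q(\beta_1,\dots,\beta_d,t_1,\dots,t_{K-1})$, the same choice of $\alpha_1,\dots,\alpha_K$ linearly independent over $\mathbb{G}$ (with your countability remark supplying a justification the paper leaves implicit), and the same inductive descent $\bs{q}_1=\bs{0},\dots,\bs{q}_K=\bs{0}$. The only cosmetic difference is that you cite Lemma~\ref{l:A} after clearing denominators, whereas the paper directly invokes the $\Q$-linear independence of $1,\beta_j,\dots,\beta_j^{d-1}$, which is the same fact.
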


\begin{proof}
  Write $\bs{q}^\top=(\bs{q}_1^\top,\dots,\bs{q}_K^\top)$, $\bs{q}_j\in\R^d$. Then
  $$
  \bs{q}^\top(H\otimes Z)\bs{v} = \alpha_1\bs{q}_1^\top Z\bs{e}_{p_1} + \alpha_2(-t_1\bs{q}_1^\top+\bs{q}_2^\top)Z\bs{e}_{p_2} + \cdots + \alpha_K(-t_{K-1}\bs{q}_{K-1}^\top+\bs{q}_K^\top)Z\bs{e}_{p_K}.
  $$
  We know that $\bs{q}_i^\top Z\bs{e}_j\in\Q(\beta_j)$  for every $i\in\{1,\dots,K\}$, $j\in\{1,\dots,d\}$. Then $\bs{q}^\top(H\otimes Z)\bs{v}$ can be written in the form
  $$
  \bs{q}^\top(H\otimes Z)\bs{v}=\alpha_1c_1+\alpha_2c_2+\cdots + \alpha_Kc_K
  $$
  where we have denoted
  $$
  c_1=\bs{q}_1^\top Z\bs{e}_{p_1},\quad\text{ and }\quad c_j=(-t_{j-1}\bs{q}_{j-1}^\top+\bs{q}_j^\top)Z\bs{e}_{p_j},\quad\text{ for }j=2,\dots,K.
  $$
  Note that $c_j\in{\mathbb G}:=\Q(\beta_1,\dots,\beta_d,t_1,\dots,t_{K-1})$. Choose $\alpha_1,\dots,\alpha_K$ so that they are linearly independent over the field ${\mathbb G}$. Then
equality $\bs{q}^\top(H\otimes Z)\bs{v}=0$ implies that $c_1=c_2=\cdots=c_K=0$. Since components of the vector $(Z\bs{e}_j)^\top=\bs{z}_j^\top =(1,\beta_j,\dots,\beta_j^{d-1})^\top$
are linearly independent over $\Q$, equality $\bs{q}_i^\top Z\bs{e}_j=0$ implies $\bs{q}_i=\bs{0}$. Therefore
$$
\begin{array}{lcl}
c_1=\bs{q}_1^\top Z\bs{e}_{p_1}=0 &\implies &\bs{q}_1=\bs{0},\\[2mm]
c_2=(-t_1\bs{q}_1^\top+\bs{q}_2^\top)Z\bs{e}_{p_2} = \bs{q}_2^\top Z\bs{e}_{p_2}=0 &\implies &\bs{q}_2=\bs{0}, \\[2mm]
\text{etc.}
\end{array}
$$
Thus $\bs{q}^\top(H\otimes Z)\bs{v}=0$ implies that $\bs{q}=\bs{0}$ as we wanted to show.
\end{proof}

\begin{proof}[Proof of Theorem~\ref{t:mindimension}]
The case when the polynomial $f$ is linear or quadratic with negative discriminant is solved in Section~\ref{sec:trivial}, therefore we can assume for the rest of the proof that $f$ is not of such form.

Set $K:=M$ if $m<M$ and $K:=M+1$ otherwise. This choice of $K$ ensures that there exist two distinct roots of the polynomial $f$ say $\beta_1\neq \overline{\beta_2}$ such that
\begin{equation}\label{eq:111}
  m_A(\beta_1)\geq 1\quad\text{ and }\quad 0\leq m_A(\beta_2)<K.
\end{equation}
We will show two statements:
\begin{itemize}
\item[(a)] If $A$ is a self-similarity of a generic cut-and-project scheme $(\L\subset\R^s,\R^n)$, then $s\geq Kd$.
\item[(b)] There exists a generic cut-and-project scheme $(\L\subset\R^s,\R^n)$ with self-similarity $A$ such that $s= Kd$.
\end{itemize}

In order to prove (a), consider a generic cut-and-project scheme $(\L\subset\R^s,\R^n)$ with self-similarity $A$. By Definition~\ref{de:self} there exist unique matrices $B\in\R^{(s-n)\times(s-n)}$, $C\in\Z^{s\times s}$ satisfying~\eqref{eq:ABC}. Take a root $\lambda$ of the polynomial $f$ such that $m_A(\lambda)=M$. Then the characteristic polynomial $\chi_A$ of $A$ is divisible over $\C$ by $(X-\lambda)^M$. Since $\chi_C=\chi_A\chi_B\in\Z[X]$, the polynomial $\chi_C$ must be divisible by $f^M$. Therefore for the degree $s$ of $\chi_C$ we can write $s\geq Md$.
If $K=M$, i.e.\ $m<M$, then (a) is established. Suppose that $K=M+1$, i.e.\ $m=M$. This means that $m_A(\lambda)=M$ for every root $\lambda$ of $f$. By Corollary~\ref{c:spektrumAB} there exists a root $\lambda'$ of $f$ such that $m_B(\lambda')\geq 1$. Therefore $m_C(\lambda')=m_A(\lambda')+m_B(\lambda')\geq M+1$. Consequently, $\chi_C$ is divisible by $f^{M+1}$, and its degree $s$ thus satisfies $s\geq Kd$.

Item (b) will be proved by providing a construction for $s=Kd$ of a matrix $C\in\Z^{s\times s}$, $B\in\R^{(s-n)\times(s-n)}$ and $L\in\R^{s\times s}$ such that~\eqref{eq:ABC} holds and such that the lattice $\L$ associated to $L$ gives a generic cut-and-project scheme $(\L\subset\R^s,\R^n)$. Then by Proposition~\ref{p:ABCzpet}, $A$ is a self-similarity of the scheme. The matrix $L$ is given as $L=Y^{-1}$ where $Y\in\R^{s\times s}$ is a suitable matrix satisfying~\eqref{eq:ABC-Y}.
Without loss of generality, let $A$ be given in its real Jordan form, cf. Corollary~\ref{c:BUNO}.

Denote $C_f$ the companion matrix of the polynomial $f$.
Set $D_f$ to be the real Jordan form of $C_f$.
Further denote $Y_f$ the matrix whose columns are vectors $(1,\beta_j,\dots,\beta_j^{d-1})^\top\in\R^d$, or $(1,\Re\beta_j,\dots,\Re\beta_j^{d-1})^\top$,
$(1,\Im\beta_j,\dots,\Im\beta_j^{d-1})^\top$ in cases that $\beta_j$ is non-real.
We have the equality $C_fY_f=Y_fD_f$.

The choice $s=Kd>n=\sum_{\lambda\in\sigma(A)} m_A(\lambda)$ ensures that there exists a square matrix $B$ of order $s-n\geq 1$ such that
\begin{equation}\label{eq:222}
  P^T(I_K\otimes D_f) P =
  \begin{pmatrix}
  A & O \\
  O & B
  \end{pmatrix}
\end{equation}
for a permutation matrix $P\in\{0,1\}^{s\times s}$. Such a matrix $P$ is not given uniquely. We choose $P$ in such a way that at least one eigenvalue of the first block $D_f$ of the matrix $I_K\otimes D_f$ contributes to the spectrum of $A$ and at least one eigenvalue of the first block $D_f$ contributes to the spectrum of $B$. Possibility of such a choice is ensured by~\eqref{eq:111}. Consequently, at least one eigenvalue from each block $D_f$ of the matrix $I_K\otimes D_f$ contributes to the spectrum of $A$.

We will show that the desired matrices $C, Y$ can be found in the form
$$
C=I_K\otimes C_f \quad\text{ and }\quad Y=(H\otimes Y_f)P
$$
where $P$ is the above permutation matrix and $H\in\R^{K\times K}$ is a suitable matrix, whose choice will be described later.
First we verify validity of~\eqref{eq:ABC-Y},
$$
\begin{aligned}
CY &=(I_K\otimes C_f)(H\otimes Y_f)P = (H\otimes Y_f)(I_K\otimes D_f)P=\\
&=(H\otimes Y_f)(I_K\otimes D_f)P = YP^\top(I_K\otimes D_f)P= Y\begin{pmatrix}
A & O \\
O & B
\end{pmatrix}\,.
\end{aligned}
$$

It remains to check that there exists a suitable choice of $H\in\R^{K\times K}$ such that the lattice $\L$ associated to the matrix $L=Y^{-1}$ defines a generic cut-and-project scheme. We choose $H$ in the form~\eqref{eq:H} which satisfies the assumptions of Lemma~\ref{l:H}. Realize that
\begin{equation}\label{eq:HtenzorY}
H\otimes Y_f = \begin{pmatrix}
    Y_f & -t_1Y_f & O & \dots & O & O\\
    O & Y_f & -t_2Y_f & \dots & O & O\\
    \vdots & \vdots  & \vdots & \ddots & \vdots  & \vdots \\
    O & O & O & \dots & Y_f & -t_{K-1}Y_f \\
    O & O & O & \dots & O & Y_f
  \end{pmatrix}\,.
\end{equation}
The permutation matrix $P$ is chosen in such a way that $Y=(H\otimes Y_f)P$ written in the form $Y=(Y_1,Y_2)$, $Y_1\in\R^{s\times n}$, $Y_2\in\R^{s\times(s-n)}$ has the following property.
At least one of the first $d$ columns of the matrix $H\otimes Y_f$ belongs to $Y_1$ and at least one of the first $d$ columns of $H\otimes Y_f$ belongs to $Y_2$. Moreover, at least one column, with index say $p_i\in\{1,\dots,d\}$, of each block of the form
$$
\begin{pmatrix}
    O \\[-1mm]
    \vdots\\
    O \\
    -t_iY_f \\
    Y_f \\
    O \\[-1mm]
    \vdots\\
    O
  \end{pmatrix} \in\R^{Kd\times d}
$$
belongs to $Y_1$.
Non-degeneracy, resp.\ aperiodicity is valid, if no non-zero rational vector can be obtained by complex linear combination from the columns of $Y_2$, resp. from the columns of $Y_1$, see Proposition~\ref{p:generic}. This is ensured by Lemma~\ref{l:H}, realizing that any complex combination of columns of the matrix $H\otimes Y_f$ can be obtained by a complex combination of columns of $H\otimes Z$, where $Z$ is the Vandermonde matrix.

Last, we verify that the cut-and-project scheme is irreducible. For that we will verify validity of Item (iii) of Proposition~\ref{p:generic}. For the vector $\bs{y}$ in Proposition~\ref{p:generic} we take $(H\otimes Z)\bs{v}$ from Lemma~\ref{l:H2}, where $p_1,\dots,p_d$ are defined above by the permutation $P$.
\end{proof}

\begin{pozn}
  Compare the dimension of the cut-and-project schemes constructed in Proposition~\ref{p:notmindimension} and Theorem~\ref{t:mindimension}. In many cases, these dimensions coincide, as is illustrated on the self-similarity in Example~\ref{ex:kubicke}. The dimension of the lattice is 6, and it is minimal, as $M=2$ and $d=3$.
\end{pozn}

Sections~\ref{sec:answerQ2} and~\ref{sec:mindimenionscheme} give a complete answer to Question~\ref{q2} for a mapping $A$ diagonalizable over $\C$. In particular, we have described such mappings $A$, for which a cut-and-project scheme with self-similarity $A$ exists and we explained how to obtain a scheme with lattice dimension $s$ minimal possible.
We therefore have a necessary condition which a mapping $A$ should satisfy if it is a self-similarity of a cut-and-project set, see Proposition~\ref{p:q1->q2}.

\section{Answer to Question~\ref{q1}}

Let us come back to the original Question~\ref{q1}, namely of the existence of a cut-and-project set $\Sigma(\Omega)\subset\R^n$ with a given diagonalizable self-similarity $A\in\R^{n\times n}$.
The aim of this section is the prove the necessary and sufficient condition for a mapping $A$ so that there exists a cut-and-project set $\Sigma(\Omega)$ which has $A$ as a self-similarity. Recall that in our setting, a cut-and-project set is defined in a generic cut-and-project scheme, i.e. it has no translational symmetry.

\begin{thm}\label{thm:npp_for_cps}
Let $A\in\R^{n\times n}$ be a non-singular matrix diagonalizable over $\C$. Then there exists a cut-and-project set $\Sigma(\Omega)$ satisfying $A\Sigma(\Omega)\subset\Sigma(\Omega)$ if and only if the spectrum of $A$ has the following properties $\mathfrak{P}$:
\begin{itemize}
\item[$(\mathfrak{P}1)$] Every eigenvalue of $A$ is an algebraic integer.
\item[$(\mathfrak{P}2)$] Every complex number $\mu$ of modulus $|\mu|>1$, algebraically conjugate to an eigenvalue of $A$, is an eigenvalue of $A$ as well. Moreover,
for all $\nu \al \mu$ we have $m_A(\mu) \geq m_A(\nu)$ and the inequality is strict for at least one $\nu \al \mu$.
\end{itemize}
\end{thm}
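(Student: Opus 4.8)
The plan is to reduce the existence of a cut-and-project \emph{set} with self-similarity $A$ to the existence of a generic \emph{scheme} with self-similarity $A$ whose internal matrix $B$ is non-expanding, and then to translate the non-expanding condition into the arithmetic requirements $(\mathfrak P1)$--$(\mathfrak P2)$.

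First I would make precise how $A$ acts on the window. If $A\Sigma(\Omega)\subset\Sigma(\Omega)$, then Proposition~\ref{p:q1->q2} supplies $B\in\R^{(s-n)\times(s-n)}$ and $C\in\Z^{s\times s}$ satisfying \eqref{eq:ABC}; reading \eqref{eq:ABC} on a lattice vector $L\bs r$ gives $(A\bs x)^\ast=B\bs x^\ast$ for every $\bs x\in\pi_\parallel(\L)$, since $\pi_\perp\big(\begin{smallmatrix}A&O\\O&B\end{smallmatrix}\big)=B\pi_\perp$. Hence $A\Sigma(\Omega)\subset\Sigma(\Omega)$ is equivalent to $B\big(\pi_\perp(\L)\cap\Omega\big)\subset\Omega$, and, passing to closures via \eqref{eq:dk2}, it forces $B\overline{\Omega}\subset\overline{\Omega}$ with $\overline{\Omega}$ bounded and of non-empty interior. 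Conversely, when the scheme is generic and $B\Omega\subset\Omega$ for a legitimate window, the same computation returns $A\Sigma(\Omega)\subset\Sigma(\Omega)$.

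The technical heart is a window lemma I would isolate: for $B$ diagonalizable over $\C$ (our case, by Proposition~\ref{p:minpol}), a bounded $\Omega$ with $\overline{\Omega^\circ}=\overline{\Omega}\neq\emptyset$ and $B\overline{\Omega}\subset\overline{\Omega}$ exists if and only if $\varrho(B)\le1$. For the forward direction I project $\Omega$ onto the real invariant subspace belonging to eigenvalues of modulus $>1$; there $B$ is expanding, so $B^k$ maps an interior ball to balls of radius tending to infinity contained in the bounded set $\Omega$, a contradiction. For the converse I pick an inner product making the diagonalizing coordinates orthonormal, so that $\|B\|\le1$, and take $\Omega$ to be its unit ball. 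The main obstacle sits exactly here, in the boundary case $|\lambda|=1$: diagonalizability of $B$ is indispensable, because a non-trivial Jordan block attached to an eigenvalue on the unit circle would already make any invariant $\Omega$ unbounded.

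The remainder is bookkeeping of multiplicities. For necessity, $(\mathfrak P1)$ is Corollary~\ref{c:duseldek1}. As $C$ is integral, $\chi_C\in\Z[X]$, so all roots of each irreducible factor $f\mid\m{C}$ carry one common multiplicity $e_f$ in $\chi_C$; thus $m_A(\beta)+m_B(\beta)=e_f$ for every root $\beta$ of $f$. If $\mu$ is algebraically conjugate to an eigenvalue of $A$ with $|\mu|>1$, the window lemma yields $m_B(\mu)=0$, whence $m_A(\mu)=e_f\ge1$ (so $\mu\in\sigma(A)$) and $m_A(\mu)=e_f=\max_{\nu\al\mu}m_A(\nu)$; and since $\m{A}=\m{B}$ (Proposition~\ref{p:minpol}), $B$ meets $f$, i.e. some $\nu\al\mu$ has $m_B(\nu)\ge1$, giving $m_A(\nu)<m_A(\mu)$. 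This is precisely $(\mathfrak P2)$. For sufficiency I would simply feed $A$ into the construction of Theorem~\ref{t:mindimension}, assembling distinct factors $f$ by the direct sum of Lemma~\ref{l:sklad} with a product window. Under $(\mathfrak P2)$, any factor $f$ with a root of modulus $>1$ has $m_f<M_f$, so that construction takes $e_f=M_f$ and assigns $m_B(\beta)=M_f-m_A(\beta)=0$ to every root $\beta$ of modulus $>1$ (the maximal-multiplicity root moreover yields a full assignment to the physical side, so the genericity conditions of Lemmas~\ref{l:H} and~\ref{l:H2} are met); the remaining eigenvalues of $B$ have modulus $\le1$. Hence $\varrho(B)\le1$, and taking $\Omega$ to be the adapted unit ball of the window lemma closes the argument through the equivalence of the first paragraph.
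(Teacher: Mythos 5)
Your proposal is correct and follows essentially the same route as the paper: you reduce set-level self-similarity to window invariance under $B$ (the paper's Proposition~\ref{p:vlastnostiCPS}), characterize that by $\varrho(B)\le 1$ (the paper's Claim~\ref{claimB}, which you prove explicitly where the paper only asserts it), derive $(\mathfrak{P}2)$ by the same multiplicity count in $\chi_C$ using $\m{A}=\m{B}$ and Corollary~\ref{c:spektrumAB}, and obtain sufficiency from the minimal-dimension construction of Theorem~\ref{t:mindimension} exactly as in Proposition~\ref{thm:vetaN}. The only cosmetic differences are your direct star-map identity $(A\bs{x})^\ast=B\bs{x}^\ast$ in place of the paper's detour through the auxiliary lattice associated to $LC$, and your use of the exponent $e_f$ of $f$ in $\chi_C$ where the paper works with the maximum $M$ of the multiplicities; both are equivalent bookkeeping.
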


The proof of Theorem~\ref{thm:npp_for_cps} follows. Subsection~\ref{s:thmpomoc} contains several simple auxiliary facts. In Subsection~\ref{s:thmnutna} we demonstrate necessity of $\mathfrak{P}$. Sufficiency of $\mathfrak{P}$ is shown in Subsection~\ref{s:thmpostacujici}. The last subsection explains the consequences of Theorem~\ref{thm:npp_for_cps} and provides a formula for the minimal dimension of the cut-and-project scheme allowing a cut-and-project set with self-similarity $A$.

\subsection{Auxiliary facts}\label{s:thmpomoc}

Suppose one has a generic scheme $\Lambda=(\L\subset\R^s,\R^n)$ with self-similarity $A$ given by a non-singular matrix $A \in \R^{n \times n}$ diagonalizable over $\C$.
If such a setting is established we say that \textit{basic assumptions} are fulfilled.
Denote as usual by $L\in \R^{s\times s}$ a matrix associated to the lattice $\L$. Denote by $B \in \R^{(s-n) \times (s-n)}$ and $C\in\Z^{s\times s}$ the matrices from  Definition~\ref{de:self}. Note that these matrices are defined uniquely.

\begin{prop}
	\label{p:vlastnostiCPS}
	Let the basic assumptions be satisfied and let $\Omega\subset\R^{s-n}$ be a bounded set satisfying $\overline{\Om^\circ} = \overline{\Om}$. Then the following claims hold:
	\begin{itemize}
		\item[(i)] If $B\Om \subset \Om$ then $A\S(\Om) \subset \S(\Om)$.
		\item[(ii)] If $A\S(\Om) \subset \S(\Om)$ then  $B\overline{\Om} \subset \overline{\Om}$.
	\end{itemize}
\end{prop}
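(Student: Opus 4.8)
The plan is to reduce both parts to a single intertwining identity linking $A$ and $B$ through the star map. First I would split the matrix equation~\eqref{eq:ABC}: writing $L=\binom{L_1}{L_2}$ with $L_1\in\R^{n\times s}$ and $L_2\in\R^{(s-n)\times s}$, relation~\eqref{eq:ABC} becomes $AL_1=L_1C$ and $BL_2=L_2C$. Since $\pi_\parallel(L\bs{r})=L_1\bs{r}$ and $\pi_\perp(L\bs{r})=L_2\bs{r}$, the star map $\ast=\pi_\perp\circ\pi_\parallel^{-1}$ acts by $L_1\bs{r}\mapsto L_2\bs{r}$. Hence for every $\bs{x}=L_1\bs{r}\in\pi_\parallel(\L)$ one has $A\bs{x}=L_1C\bs{r}\in\pi_\parallel(\L)$ (because $C\bs{r}\in\Z^s$), and applying the star map gives
$$
(A\bs{x})^\ast=L_2C\bs{r}=BL_2\bs{r}=B\bs{x}^\ast .
$$
Thus $A$ in the physical space corresponds under $\ast$ exactly to $B$ in the internal space; this identity is the whole engine of the proof.

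For Item (i) the argument is then immediate and purely set-theoretic. Assuming $B\Om\subset\Om$, I would take $\bs{x}\in\S(\Om)$, so $\bs{x}\in\pi_\parallel(\L)$ and $\bs{x}^\ast\in\Om$. The identity yields $A\bs{x}\in\pi_\parallel(\L)$ and $(A\bs{x})^\ast=B\bs{x}^\ast\in B\Om\subset\Om$, which by the defining property~\eqref{eq:cut-and-project-definice} of $\S(\Om)$ means $A\bs{x}\in\S(\Om)$, hence $A\S(\Om)\subset\S(\Om)$.

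For Item (ii) I expect the main obstacle: the hypothesis $A\S(\Om)\subset\S(\Om)$ controls $A$ only on the discrete set $\S(\Om)$, while the conclusion concerns the whole closed window $\overline{\Om}$, so I must transfer discrete information to the window by a density-and-limit argument. Fixing $\bs{y}\in\overline{\Om}$, I would invoke~\eqref{eq:dk2}, namely $\overline{\left(\S(\Om)\right)^\ast}=\overline{\Om}$, to obtain a sequence $\bs{x}_k\in\S(\Om)$ with $\bs{x}_k^\ast\to\bs{y}$. The hypothesis gives $A\bs{x}_k\in\S(\Om)$, so $(A\bs{x}_k)^\ast=B\bs{x}_k^\ast\in\left(\S(\Om)\right)^\ast\subset\overline{\Om}$. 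Continuity of the linear map $B$ on $\R^{s-n}$ forces $B\bs{x}_k^\ast\to B\bs{y}$, and since $\overline{\Om}$ is closed, $B\bs{y}\in\overline{\Om}$; as $\bs{y}$ was arbitrary, $B\overline{\Om}\subset\overline{\Om}$.

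The crux, and the only place where genericity is really used, is the closure identity~\eqref{eq:dk2}: it rests on the density of $\pi_\perp(\L)$ in $\R^{s-n}$ (irreducibility) together with the window regularity $\overline{\Om^\circ}=\overline{\Om}$. Without this density the star-images of $\S(\Om)$ need not fill $\overline{\Om}$ and the limit argument in (ii) would collapse; the remaining ingredients are merely continuity of $B$ and closedness of $\overline{\Om}$.
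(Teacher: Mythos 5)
Your proof is correct, and Item (i) matches the paper's own argument: the intertwining identity $(A\bs{x})^\ast=B\bs{x}^\ast$, which you extract from \eqref{eq:ABC} by splitting $L$ into row blocks, is exactly what the paper's set computation \eqref{eq:prevodycapmnozin} encodes, the paper phrasing the conclusion as $A\S(\Om)\subset\S(B\Om)\subset\S(\Om)$. For Item (ii), however, you take a genuinely different and more elementary route. The paper passes to the auxiliary lattice $\widetilde{\L}$ associated to the matrix $LC$, invokes Lemma~\ref{l:BUNO} to see that this rescaled scheme is generic, identifies $A\S_{\L}(\Om)=\S_{\widetilde{\L}}(B\Om)$, and applies \eqref{eq:dk1}--\eqref{eq:dk2} in that new scheme to get the exact closure $\overline{\left(A\S_{\L}(\Om)\right)^\ast}=\overline{B\Om}=B\overline{\Om}$, which it then compares with $\overline{\left(\S_{\L}(\Om)\right)^\ast}=\overline{\Om}$. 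You instead work entirely inside the original scheme: \eqref{eq:dk2} supplies $\bs{x}_k\in\S(\Om)$ with $\bs{x}_k^\ast\to\bs{y}$, the hypothesis plus the intertwining identity force $B\bs{x}_k^\ast=(A\bs{x}_k)^\ast\in\Om$, and continuity of $B$ together with closedness of $\overline{\Om}$ yields $B\bs{y}\in\overline{\Om}$. What your route buys is simplicity and generality: no auxiliary lattice, no appeal to Lemma~\ref{l:BUNO}, no need to check that $B\Om$ is a regular window, and no use of non-singularity of $B$ (which the paper needs, via Proposition~\ref{p:minpol} and non-singularity of $A$, both for the set equality in \eqref{eq:prevodycapmnozin} and for $\overline{B\Om}=B\overline{\Om}$); in particular your Item (ii) would survive even for singular $A$. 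What the paper's route buys is sharper information: the identity $\overline{\left(A\S_{\L}(\Om)\right)^\ast}=B\overline{\Om}$ describes the image $A\S(\Om)$ exactly as a cut-and-project set $\S_{\widetilde{\L}}(B\Om)$ of an equivalent generic scheme, which is stronger than the one-sided inclusion your limit argument delivers and is reusable elsewhere.
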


\begin{proof}
Let us first realize that since $A$ is a non-singular matrix, then so is $B$ (cf.\ Proposition~\ref{p:minpol}). Using the matrix formalism we have
\begin{equation}\label{eq:prevodycapmnozin}
\begin{aligned}
A\S(\Om) &= A\left\{\pi_\parallel(\bs{x}): \bs{x}\in \L,\ \pi_\perp(\bs{x})\in \Om \right\} =\\
    &=  \left\{A(I_n \ O)L\bs{r} : \bs{r}\in \Z^s, \ B(O \ I_{s-n})L\bs{r} \in B\Om \right\} =\\
		&= \left\{ (I_n \ O ) LC\bs{r}: \bs{r}\in \Z^s, \ (O \ I_{s-n})LC\bs{r} \in B\Om \right\}.
\end{aligned}
\end{equation}

In order to prove Item (i), we further continue to see that
$$
A\S(\Om)\subset \left\{ (I_n \ O ) L\bs{q}: \bs{q}\in \Z^s, \ (O \ I_{s-n})L\bs{q} \in B\Om \right\}= \S(B\Om) \subset \S(\Om).
$$

For the proof of Item (ii), recall that the cut-and-project set depends on the given lattice. We will consider the original lattice $\L$ associated to a matrix $L$, and a lattice $\widetilde{\L}$ associated to the matrix $LC$. Since $C$ is also a non-singular matrix, the lattice $\tilde{\L}$ is of full dimension $s$ and according to Lemma~\ref{l:BUNO} the cut-and-project scheme $(\tilde{\L}\subset\R^s,\R^n)$ is generic. We derive from~\eqref{eq:prevodycapmnozin} that
\begin{equation}\label{eq:dk4}
		A\S_\L(\Om) = \S_{\widetilde{\L}} \left(B\Om\right) \subset \pi_\parallel(\widetilde{\L}).
\end{equation}
Recall the mappings $\ast:\pi_{\parallel}(\L)\to\pi_{\perp}(\L)$, $\tilde{\ast}:\pi_{\parallel}(\tilde{\L})\to\pi_{\perp}(\tilde{\L})$. Since
$\widetilde{\L} \subset \L$, the operation $\widetilde{\ast}$ can be seen as a restriction of $\ast$ on $\pi_\parallel(\widetilde{\L})$, both mappings defined by
$\pi_\perp\circ\pi_\parallel^{-1}$. We have
\[
 \left(A\S_\L(\Om) \right)^{\ast} = \left(A\S_\L(\Om) \right)^{\widetilde{\ast}} =  \left(\S_{\widetilde{\L}} \left(B\Om\right) \right)^{\widetilde{\ast}} = \pi_\perp(\widetilde{\L}) \cap B\Om,
 \]
 where we have used~\eqref{eq:dk4} and~\eqref{eq:dk1}. From~\eqref{eq:dk2} we then derive
 		\begin{equation}
		\label{eq:dk5}
		\overline{\left(A \S_\L(\Om)  \right)^\ast} = \overline{B\Om}=B\overline{\Om}.
		\end{equation}
On the other hand the assumption $A\S_\L(\Om) \subset \S_\L(\Om)$ gives
		\[
\overline{\left(A\S_\L(\Om)\right)^\ast} \subset \overline{\left(\S_\L(\Om)\right)^\ast} = \overline{\Om}
\]
where the relation \eqref{eq:dk2} was used. Combining this result together with \eqref{eq:dk5} reads
$B\overline{\Om} \subset \overline{\Om}$, as required.
\end{proof}

The following claim can be easily shown.

\begin{claim}\label{claimB}
Let $B\subset\R^{(s-n)\times(s-n)}$ be a non-singular matrix diagonalizable over $\C$. Then there exists a bounded  $\Om \subset \R^{s-n}$ such that $\overline{\Om^\circ} = \overline{\Om}\neq\emptyset$ and $B\overline{\Om} \subset \overline{\Om}$ if and only if the spectral radius $\varrho(B)$ of the matrix $B$ is smaller than or equal to one.
\end{claim}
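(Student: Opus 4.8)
The plan is to prove the two implications separately, and the heart of the matter is a $B$-invariant norm, whose existence relies essentially on diagonalizability.

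\emph{Sufficiency.} Assume $\varrho(B)\le 1$ and construct a suitable $\Om$. Since $B$ is diagonalizable over $\C$, recall from the preliminaries that there is a non-singular $W\in\R^{(s-n)\times(s-n)}$ with $W^{-1}BW=\bigoplus_k R_k$, where each block $R_k$ is either a scalar $(\lambda)$, $\lambda\in\R$, or a block $\left(\begin{smallmatrix}\Re\mu & \Im\mu\\ -\Im\mu & \Re\mu\end{smallmatrix}\right)$ with $\mu\in\C\setminus\R$. Each such block satisfies $R_k^\top R_k=|\lambda_k|^2 I$, where $\lambda_k$ is the associated eigenvalue, and $\max_k|\lambda_k|=\varrho(B)$; hence in the Euclidean norm $\|\cdot\|_2$ one has $\|(\bigoplus_k R_k)\bs{y}\|_2\le\varrho(B)\|\bs{y}\|_2\le\|\bs{y}\|_2$. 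Defining $\|\bs{x}\|_B:=\|W\bs{x}\|_2$ and using $WB=(\bigoplus_k R_k)W$, I get $\|B\bs{x}\|_B\le\|\bs{x}\|_B$ for all $\bs{x}$. The closed unit ball $\Om:=\{\bs{x}\in\R^{s-n}:\|\bs{x}\|_B\le 1\}$ is then bounded, convex, and satisfies $\overline{\Om^\circ}=\overline{\Om}=\Om\neq\emptyset$, while $\|B\bs{x}\|_B\le\|\bs{x}\|_B\le 1$ gives $B\overline{\Om}=B\Om\subset\Om=\overline{\Om}$, as required.

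\emph{Necessity.} Assume such an $\Om$ exists and deduce $\varrho(B)\le 1$. The condition $\overline{\Om^\circ}=\overline{\Om}\neq\emptyset$ forces $\Om^\circ\neq\emptyset$, so $W:=\overline{\Om}$ is compact and its interior contains a closed ball $\overline{B}(\bs{p},\eps)$. From $B\overline{\Om}\subset\overline{\Om}$ I obtain $B^kW\subset W$ for every $k\in\N$. Thus for any unit vector $\bs{u}$ both $\bs{p}$ and $\bs{p}+\eps\bs{u}$ lie in $W$, and their images $B^k\bs{p}$ and $B^k(\bs{p}+\eps\bs{u})$ stay inside the bounded set $W$. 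Subtracting, $\|B^k(\eps\bs{u})\|$ is bounded uniformly in $k$ and in $\bs{u}$ by the diameter of $W$, whence $\sup_k\|B^k\|<\infty$. Since $\|B^k\|\ge\varrho(B^k)=\varrho(B)^k$, this uniform bound forces $\varrho(B)\le 1$.

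The main obstacle, and the reason for the diagonalizability hypothesis, lies in the sufficiency direction: the construction of a norm in which $B$ is non-expanding fails in general when $\varrho(B)=1$ but $B$ has a non-trivial Jordan block for a unimodular eigenvalue, for then $\|B^k\|\to\infty$. The necessity argument above shows precisely that such a $B$ admits \emph{no} bounded invariant window with non-empty interior, so diagonalizability cannot be dropped. By contrast, once diagonalizability is assumed, the block structure $\bigoplus_k R_k$ makes each block a scalar multiple of an orthogonal map, which is exactly what lets the Euclidean norm (pulled back by $W$) serve as the invariant norm.
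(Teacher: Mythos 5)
Your proof is correct; note that the paper itself gives no argument for Claim~\ref{claimB} at all (it is dismissed with ``can be easily shown''), so your write-up supplies exactly the missing content. Both directions are sound: for sufficiency, the observation that each real quasidiagonal block $R_k$ satisfies $R_k^\top R_k=|\lambda_k|^2 I$, so that $\bigoplus_k R_k$ is non-expanding in the Euclidean norm when $\varrho(B)\le 1$, and the pulled-back norm then makes the closed unit ball a valid window with $\overline{\Om^\circ}=\overline{\Om}$; for necessity, the uniform bound $\sup_k\|B^k\|<\infty$ extracted from an invariant bounded set with non-empty interior, combined with $\|B^k\|\ge\varrho(B)^k$. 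The necessity argument moreover uses neither diagonalizability nor non-singularity, and your closing remark correctly isolates why diagonalizability cannot be dropped in the sufficiency direction. The only flaw is a notational slip in sufficiency: having fixed $W^{-1}BW=\bigoplus_k R_k$, the intertwining relation is $BW=W\bigl(\bigoplus_k R_k\bigr)$, not $WB=\bigl(\bigoplus_k R_k\bigr)W$; consequently the invariant norm should be $\|\bs{x}\|_B:=\|W^{-1}\bs{x}\|_2$ rather than $\|W\bs{x}\|_2$ (or, equivalently, state the conjugation as $WBW^{-1}=\bigoplus_k R_k$). With that one-line repair the argument is complete.
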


\subsection{Necessary condition}\label{s:thmnutna}

As an immediate consequence of Item (ii) of Proposition~\ref{p:vlastnostiCPS} and Claim~\ref{claimB} one derives the following property of the matrix $B$, provided we have a cut-and-project set $\Sigma(\Om)$ with self-similarity $A$.

\begin{coro}
	\label{c:dusl5}
	Let the cut-and-project scheme $\Lambda=(\L\subset\R^s,\R^n)$ and the matrix $A\in\R^{n\times n}$ satisfy basic assumptions. If $A\S(\Om) \subset \S(\Om)$ for a bounded $\Om\subset \R^{s-n}$  with $\overline{\Om^\circ} = \overline{\Om}\neq\emptyset$, then the spectral radius $\varrho(B)$ of the matrix $B$ corresponding to $A$ by Definition~\ref{de:self} is smaller than or equal to 1.
\end{coro}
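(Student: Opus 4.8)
The plan is to chain the two results immediately preceding the corollary, so that the spectral bound falls out without any new computation. First I would record that under the basic assumptions the matrix $B$ inherits the relevant structure of $A$: since $A$ is non-singular and diagonalizable over $\C$, Proposition~\ref{p:minpol} gives $\m{A}=\m{B}$, whence $B$ is non-singular and diagonalizable over $\C$ as well. This is precisely what is needed so that Claim~\ref{claimB} is applicable to $B$.

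Next I would apply Item~(ii) of Proposition~\ref{p:vlastnostiCPS} to the hypothesis $A\S(\Om)\subset\S(\Om)$, which yields $B\overline{\Om}\subset\overline{\Om}$. At this point the acceptance window $\Om$ itself serves as a concrete witness for the existence clause in Claim~\ref{claimB}: it is bounded, it satisfies $\overline{\Om^\circ}=\overline{\Om}\neq\emptyset$ by the standing assumption on windows, and it is mapped into its own closure by $B$. Invoking the forward implication of Claim~\ref{claimB} (existence of such a window forces the spectral radius to be at most one) then gives $\varrho(B)\leq 1$, which is the assertion.

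Since the substantive work has already been carried out in Proposition~\ref{p:vlastnostiCPS} and Claim~\ref{claimB}, there is essentially no obstacle here; the only point demanding a moment's care is verifying that the hypotheses of Claim~\ref{claimB}, namely non-singularity and diagonalizability of $B$, are indeed met, which is exactly the role of the preliminary appeal to Proposition~\ref{p:minpol}. The corollary is thus immediate.
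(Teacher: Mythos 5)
Your proposal is correct and follows the same route as the paper: the corollary is stated there precisely as an immediate consequence of Item~(ii) of Proposition~\ref{p:vlastnostiCPS} (giving $B\overline{\Om}\subset\overline{\Om}$) combined with the forward implication of Claim~\ref{claimB}, with $\Om$ as the witnessing window. Your explicit check via Proposition~\ref{p:minpol} that $B$ is non-singular and diagonalizable over $\C$ is exactly the verification the paper performs (inside the proof of Proposition~\ref{p:vlastnostiCPS}) to make Claim~\ref{claimB} applicable.
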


The following proposition states one of the implications of Theorem~\ref{thm:npp_for_cps}.

\begin{prop}
	\label{thm:vetaT}
	Let $A \in \R^{n \times n}$ be a non-singular matrix diagonalizable over $\C$. Suppose there exists a generic cut-and-project scheme $\Lambda = (\L \subset \R^s, \R^n)$ such that $A$ is a self-similarity of a cut-and-project set $\S(\Om)$ derived from $\Lambda$ using a bounded window $\Om \subset \R^{s-n}$ with $\overline{\Om^\circ} = \overline{\Om}\neq\emptyset$.
	Then the spectrum of $A$ satisfies Properties $\mathfrak{P}$.
\end{prop}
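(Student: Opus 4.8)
The plan is to read off $(\mathfrak{P}1)$ from the theory already developed and to obtain $(\mathfrak{P}2)$ by combining the spectral constraint $\varrho(B)\le 1$ with the integrality of $C$. Under the hypotheses, Proposition~\ref{p:q1->q2} supplies matrices $B\in\R^{(s-n)\times(s-n)}$ and $C\in\Z^{s\times s}$ as in Definition~\ref{de:self}, so that $\big(\begin{smallmatrix}A&O\\O&B\end{smallmatrix}\big)L=LC$. In particular $C$ is similar to the block matrix $\mathrm{diag}(A,B)$, whence $\chi_C=\chi_A\chi_B$ and $m_C(\lambda)=m_A(\lambda)+m_B(\lambda)$ for every $\lambda\in\C$. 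Property $(\mathfrak{P}1)$ is then exactly Corollary~\ref{c:duseldek1}. The decisive extra ingredient, coming from the existence of the window $\Om$, is Corollary~\ref{c:dusl5}, which gives $\varrho(B)\le 1$.

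Next I would record the key arithmetic consequence of $C$ being an integer matrix. Its characteristic polynomial lies in $\Z[X]$ and factors over $\Q$ into powers of distinct monic irreducible polynomials; since we are in characteristic zero each such factor is separable and distinct factors share no roots, so the algebraic multiplicity $m_C(\beta)$ depends only on which irreducible factor $\beta$ is a root of. As the roots of an irreducible polynomial form a single algebraic conjugacy class, $m_C$ is constant on conjugacy classes: $m_C(\mu)=m_C(\nu)$ whenever $\mu\al\nu$. Combined with $m_C=m_A+m_B$ and the identity $\m{A}=\mu_C$ from Proposition~\ref{p:minpol}, this is all the machinery required.

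Finally I would prove $(\mathfrak{P}2)$. Let $\mu$ be conjugate to an eigenvalue of $A$ with $|\mu|>1$. Since any polynomial in $\Q[X]$ vanishing at an eigenvalue of $A$ is divisible by the minimal polynomial of that eigenvalue over $\Q$, the polynomial $\m{A}$ has $\mu$ among its roots; as $\m{A}=\mu_C$, this places $\mu\in\sigma(C)$. Because $\varrho(B)\le 1<|\mu|$, we have $\mu\notin\sigma(B)$, i.e.\ $m_B(\mu)=0$, so that $m_A(\mu)=m_C(\mu)>0$ and $\mu$ is indeed an eigenvalue of $A$. For an arbitrary $\nu\al\mu$, constancy of $m_C$ gives $m_A(\nu)=m_C(\mu)-m_B(\nu)\le m_C(\mu)=m_A(\mu)$, the required inequality. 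For strictness, Corollary~\ref{c:spektrumAB} provides some $\nu\in\sigma(B)$ with $\nu\al\mu$; for this $\nu$ one has $m_B(\nu)\ge 1$, hence $m_A(\nu)=m_C(\mu)-m_B(\nu)<m_A(\mu)$. I expect the only genuinely delicate point to be the reduction from the hypothesis that $\mu$ is conjugate to an eigenvalue of $A$ to the conclusion $\mu\in\sigma(C)$, which hinges on $\m{A}=\mu_C$; once $\varrho(B)\le 1$ is in hand, the multiplicity bookkeeping is routine.
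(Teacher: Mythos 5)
Your proof is correct and follows essentially the same route as the paper's: both rest on Proposition~\ref{p:q1->q2} to produce $B$ and $C$, Corollary~\ref{c:duseldek1} for $(\mathfrak{P}1)$, Corollary~\ref{c:dusl5} (i.e.\ $\varrho(B)\leq 1$) to force $m_B(\mu)=0$ for $|\mu|>1$, the identity $m_C=m_A+m_B$ together with constancy of $m_C$ on algebraic conjugacy classes coming from the integrality of $C$, and Corollary~\ref{c:spektrumAB} for the strict inequality. The only difference is organizational: the paper introduces the maximum $M$ of the multiplicities over the conjugacy class (taken across both $A$ and $B$) and pins down by contradiction the exact power of $f$ dividing $\chi_C$, whereas you invoke the conjugacy-class constancy of $m_C$ directly via Proposition~\ref{p:minpol} -- same mathematical content, slightly leaner bookkeeping.
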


\begin{proof}
By Proposition~\ref{p:q1->q2}, we derive that basic assumptions are satisfied, in particular, $A$ is a self-similarity of the generic cut-and-project scheme $\Lambda$. Thus
validity of $\mathfrak{P}1$ is given directly in Corollary~\ref{c:duseldek1}.

Note that property $\mathfrak{P}2$ needs a proof only in case when an eigenvalue $\lambda$ of $A$ has at least one algebraic conjugate $\mu$ that is in modulus strictly greater than 1.
For further considerations, recall matrices $B \in \R^{(s-n) \times (s-n)}$ and $C \in \Z^{s \times s}$ from Definition~\ref{de:self}.
Suppose that $\lambda\in\sigma(A)$ and $\mu \al \lambda$ and $|\mu| > 1$.
Denote
\begin{equation}\label{eq:oznM}
 M := \max \left\{m_T(\nu) : \nu\al \lambda,\  T \in \{A,B\} \right\} .
\end{equation}
Let $f \in \Z[X]$ be the minimal polynomial of $\lambda$ (and $\mu$) and let $\chi_C$ be the characteristic polynomial of $C$. We know that the minimal polynomial $\m{A}$ of $A$ over $\Q$ must be divisible by $f$. Proposition \ref{p:minpol} says, that $f$ divides $\chi_C$ as well. Denote by $\lambda'$ the algebraic conjugate of $\lambda$ such that $m_A(\lambda') = M$ or $m_B(\lambda') = M$,
i.e.\ $\lambda'$ is the argument of maxima in~\eqref{eq:oznM}.
Using \eqref{eq:ABC} one gets that $m_C(\lambda') \geq M$. Therefore $\chi_C$ must be divisible by $f^M$. Let us show that $\chi_C$ is not divisible by $f^{M+1}$.
Suppose on the contrary that $f^{M+1}$ divides $\chi_C$. Then
	\[
m_A(\mu) + m_B(\mu) = m_C(\mu) \geq M+1.
\]
From the definition of $M$ it follows that $m_A(\mu) \leq M$ and thus $m_B(\mu) \geq 1$ which is a contradiction with Corollary \ref{c:dusl5}. Therefore $m_C(\nu) = M $ for all $\nu \al \lambda$ and equation \eqref{eq:dk7} can be rewritten as
	\begin{equation}
	\label{eq:dk8}
	m_A(\nu) + m_B(\nu) = m_C(\nu)=M  \ \mbox{for all }\nu\al \lambda.
	\end{equation}
Corollary \ref{c:dusl5} states that $m_B(\mu) = 0$ and thus $m_A(\mu) = M$. This in particular means that $\mu\in\sigma(A)$. Moreover, from the definition of $M$ it follows that
\begin{equation}\label{eq:dk7}
 m_A(\nu) \leq M = m_A(\mu)  \ \mbox{for all }\nu \al \lambda.
\end{equation}
If $m_A(\mu) = m_A(\nu)$ for all  $\nu \al \lambda$, then equation \eqref{eq:dk8} implies that $m_B(\nu) = 0$ for all $\nu \al \lambda$ which is a contradiction with Corollary \ref{c:spektrumAB}. Thus the inequality in~\eqref{eq:dk7} is strict for at least one $\nu \al \lambda$.
\end{proof}

\subsection{Sufficient condition}\label{s:thmpostacujici}

It remains to show sufficiency of Properties $\mathfrak{P}$ in Theorem~\ref{thm:vetaT} for the matrix $A\in\R^{n\times n}$ to be a self-similarity of a cut-and-project set.

\begin{prop}
	\label{thm:vetaN}
	Let $A\in \R^{n\times n}$ be a non-singular matrix diagonalizable over $\C$ satisfying Properties $\mathfrak{P}$. Then there exists a generic cut-and-project scheme $\Lambda=(\L\subset\R^s,\R^n)$ and a bounded $\Omega\subset\R^{s-n}$, $\overline{\Omega^\circ}=\overline{\Omega}$ such that $A$ is a self-similarity of the cut-and-project set $\S(\Om)$.
\end{prop}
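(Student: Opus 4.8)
The plan is to split the statement into two ingredients that are already available in the paper: first, the construction of a generic scheme with self-similarity $A$ whose internal matrix $B$ from Definition~\ref{de:self} has spectral radius at most one, and second, the construction of a $B$-invariant window. Once a generic scheme $\Lambda=(\L\subset\R^s,\R^n)$ with self-similarity $A$ and $\varrho(B)\le 1$ is exhibited, Claim~\ref{claimB} (applicable since $A$, and hence $B$ by Proposition~\ref{p:minpol}, is non-singular and diagonalizable) produces a bounded set $\Omega_0$ with $\overline{\Omega_0^\circ}=\overline{\Omega_0}\neq\emptyset$ and $B\overline{\Omega_0}\subset\overline{\Omega_0}$. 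Setting $\Omega:=\overline{\Omega_0}$, which still satisfies $\overline{\Omega^\circ}=\overline{\Omega}\neq\emptyset$ and now obeys $B\Omega\subset\Omega$, Proposition~\ref{p:vlastnostiCPS}(i) yields $A\S(\Omega)\subset\S(\Omega)$. Thus the entire burden of the proof is to produce a scheme with $\varrho(B)\le 1$.

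To build such a scheme I would reduce to a single Galois orbit. Splitting $\sigma(A)$ according to minimal polynomials and composing the resulting schemes by Lemma~\ref{l:sklad}, it suffices to treat a matrix $A$ all of whose eigenvalues are roots of one irreducible $f\in\Z[X]$ of degree $d$: the direct sum is generic by Lemma~\ref{l:directsumgeneric} and has block-diagonal $B$, so $\varrho(B)$ is the maximum of the orbitwise spectral radii. Fix an orbit and set $M:=\max\{m_A(\beta):f(\beta)=0\}$. If $f$ is linear or quadratic with negative discriminant, Property $\mathfrak{P}$ already forces all roots of $f$ onto the unit circle: a single integer root $k$ with $|k|>1$ would violate the strict-inequality clause of $\mathfrak{P}2$ (its only conjugate is $k$ itself), so $k=\pm1$; and a non-real quadratic pair satisfies $m_A(\lambda)=m_A(\overline\lambda)$, so $|\lambda|>1$ again contradicts the strict clause. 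The explicit schemes of Section~\ref{sec:trivial} then apply and give $B$ with $\varrho(B)=1$. Otherwise I invoke the construction in the proof of Theorem~\ref{t:mindimension} with $C=I_K\otimes C_f$, where $K=M$ if $m:=\min\{m_A(\beta):f(\beta)=0\}$ satisfies $m<M$ and $K=M+1$ otherwise. This is a generic scheme with self-similarity $A$ for which $\chi_C=f^K$, so $m_C(\beta)=K$ and $m_B(\beta)=K-m_A(\beta)$ for every root $\beta$ of $f$.

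The key step is to verify that this choice forces $\varrho(B)\le 1$. By Corollary~\ref{c:spektrumAB} every eigenvalue of $B$ is a root of $f$, so it suffices to show $m_B(\mu)=0$ whenever $|\mu|>1$. Suppose such a root $\mu$ exists. It is an algebraic conjugate of the eigenvalues of $A$ in this orbit, so $\mathfrak{P}2$ applies and gives $\mu\in\sigma(A)$ together with $m_A(\mu)\ge m_A(\nu)$ for all $\nu\al\mu$; since $M$ is the maximal multiplicity this forces $m_A(\mu)=M$. The strict-inequality clause of $\mathfrak{P}2$ moreover provides some $\nu\al\mu$ with $m_A(\nu)<M$, hence $m<M$, so the relevant case is $K=M$ and $m_B(\mu)=M-m_A(\mu)=0$. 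When instead all roots of $f$ lie in the closed unit disk we have $K\in\{M,M+1\}$ and $m_B(\beta)=K-m_A(\beta)\ge 0$, which is harmless since those $\beta$ already satisfy $|\beta|\le 1$ (and $K=M+1$ in the case $m=M$ keeps $B$ non-empty). In every case $m_B(\beta)=0$ whenever $|\beta|>1$, so all eigenvalues of $B$ lie in the closed unit disk and $\varrho(B)\le 1$.

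With the scheme constructed, the window and Proposition~\ref{p:vlastnostiCPS}(i) close the argument exactly as in the first paragraph. I expect the main obstacle to be the third step, namely recognizing that the integrality of $C$ pins the multiplicity $m_C$ to a single value along each Galois orbit, and that $\mathfrak{P}2$ is precisely the condition ensuring that the excess multiplicity $M-m_A$ is carried entirely by conjugates of modulus at most one. This is exactly what keeps $\varrho(B)\le 1$ and thereby guarantees a $B$-invariant window.
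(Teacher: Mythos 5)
Your proposal is correct and follows essentially the same route as the paper: reduce to a single Galois orbit via Lemma~\ref{l:sklad}, take the minimal scheme of Theorem~\ref{t:mindimension}, use $\chi_C=f^K$ together with Property $\mathfrak{P}2$ to conclude $m_B(\mu)=0$ for every conjugate $\mu$ with $|\mu|>1$ (hence $\varrho(B)\le 1$), and finish with Claim~\ref{claimB} and Proposition~\ref{p:vlastnostiCPS}(i). Your explicit treatment of the linear and negative-discriminant quadratic cases, and of replacing the window by its closure before applying Proposition~\ref{p:vlastnostiCPS}(i), only makes explicit what the paper leaves implicit.
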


\begin{proof}
First we show the following claim:

\medskip
{\it There exists a generic cut-and-project scheme $\Lambda = (\L\subset \R^s, \R^n)$ with self-similarity $A$ and the matrix $B$ given from Definition~\ref{de:self} has $\varrho(B) \leq 1$.}
\medskip

Since by $\mathfrak{P}$ the eigenvalues of $A$ are algebraic integers, the equivalence relation $\al$ splits $\sigma(A)$ into a finite number, say $k$, of equivalence classes. Let $W_A \in \R^{n\times n}$ be a matrix such that $W_A^{-1}AW_A$ is a block diagonal matrix with blocks $A_1, \dots, A_k$ with each block $A_i$ having elements of its spectrum mutually conjugated for $i = 1, \dots, k$.  If the above claim holds for each $A_i$ then, according to Lemma \ref{l:sklad} and Corollary~\ref{c:BUNO}, it holds for $A$ as well. In particular, if for each $A_i$ we find a
scheme such that the corresponding matrix $B_i$ has spectral radius $\varrho(B_i)\leq 1$, then also to $A$ we can find a cut-and-project scheme with corresponding matrix $B$ satisfying $\varrho(B)\leq 1$.
	
	Without loss of generality assume that all eigenvalues in $\sigma(A)$ are mutually algebraically conjugate. Let the minimal polynomial $f$ of these eigenvalues be of degree $d$. Theorem \ref{t:mindimension} states that there exists a generic cut-and-project scheme $\Lambda = (\L\subset \R^s,\R^n)$ such that $A$ is its self-similarity. By Definition~\ref{de:self} there exist matrices $B\in\R^{(s-n)\times(s-n)}$, $C\in\Z^{s\times s}$ such that
$\left(\begin{smallmatrix}
	A & O \\ O & B
\end{smallmatrix} \right) L = LC$
where $L$ is a matrix associated to $\L$. Corollary~\ref{c:spektrumAB} implies that each $\nu \in \sigma(B)$ is conjugated to some $\lambda \in \sigma(A)$. Therefore all $\nu \in \sigma(B)$ have the same minimal polynomial $f$. If all roots of $f$ are in modulus smaller than or equal to $1$, then $\varrho(B)\leq 1$ obviously. Suppose there exists a root $\mu$ of $f$ with $|\mu|>1$. Then property $\mathfrak{P}2$ says that all roots of $f$ of modulus strictly greater then 1 have the same multiplicity $M$ and there exists a root $\mu'$ of $f$ such that $m_A(\mu') < m_A(\mu) = M$. Thus the cut-and-project scheme $\Lambda$ whose existence is ensured by Theorem \ref{t:mindimension} is of dimension $s = Md$. This implies for the matrix $C$ that $\chi_C = f^M$ and each element $\nu$ of the spectrum of $C$ has multiplicity $m_C(\nu) = M$. Since
	$m_A(\nu) + m_B(\nu) = m_C(\nu) = M$ and $m_A(\lambda) = M$ for all roots $\lambda$ with $|\lambda| >1$, one obtains the desired result, namely $m_B(\lambda) = 0$ for all $\lambda$ with $|\lambda| >1$ and thus $\varrho(B) \leq 1$. Thus the claim is established.

With this in hand, Claim~\ref{claimB} ensures existence of a bounded window $\Om$ satisfying $\overline{\Om^\circ} = \overline{\Om}\neq\emptyset$ and $B\overline{\Om} \subset \overline{\Om}$. Statement (i) of Proposition \ref{p:vlastnostiCPS} then implies that $A\S(\overline{\Om}) \subset \S(\overline{\Om})$.
\end{proof}

\subsection{Consequences of Theorem~\ref{thm:npp_for_cps}}\label{sec:dusledky}

Theorem~\ref{thm:npp_for_cps} has many immediate consequences.
Suppose that $A\in\R^{n\times n}$ is a non-singular diagonalizable matrix. Then its minimal polynomial $\m{A}$ over $\Q$ can be factorized into a product of distinct monic polynomials irreducible over $\Q$. Let $f\in \Z[X]$ be one of the monic polynomials dividing $\m{A}$.
Suppose that the spectrum of $A$ satisfies properties $\mathfrak{P}$ from Theorem \ref{thm:npp_for_cps}. Then the polynomial $f$ satisfies the following:

\begin{itemize}
\item [(i)] All roots of $f$ that are in modulus strictly greater than 1 have the same multiplicity, say $M$, in the spectrum $\sigma(A)$ of $A$. The other roots have their multiplicities lower than or equal to $M$.
	
\item [(ii)]  If there exists a root of $f$ in modulus strictly greater than 1, then there exists a root of $f$ with modulus strictly lower than 1.
	\subitem \textit{Proof: } Suppose that $\mu$ is a root of $f$ with $|\mu|>1$. If all roots of $f$ are in modulus strictly greater than 1, then, according to the previous Item (i), all roots have the same multiplicity in the spectrum of $A$ and this is a contradiction to property $\mathfrak{P}2$. Therefore at least one root $\lambda$ satisfies $|\lambda|\leq 1$. If there is a root $\lambda$ of $f$ of modulus $|\lambda| = 1$, then $\overline{\lambda} = {\lambda}^{-1}$ is also a root of $f$. This implies that $f$ is a reciprocal polynomial, in particular, $\mu^{-1}$ is also a root of $f$ and $|\mu^{-1}|<1$.

\item [(iii)] If none of the roots of $f$ is in modulus strictly greater than 1, then by Kronecker's theorem~\cite{Kronecker} all roots of $f$ are of modulus 1 and necessarily $f$ is a cyclotomic polynomial. Recall that $d$-th cyclotomic polynomial is defined as the minimal polynomial of a primitive $d$-th root of unity. In this case both $A$ and $B$ are orthogonal matrices.

\item [(iv)] If $f$ is of degree 1, then $f(X) = X\pm 1$, in particular $A$ does not have any eigenvalue $k\in \Z$, $|k|>1$.

\item [(v)] If $f$ is of the degree 2 and its discriminant is negative, then $f(X) = X^2 \pm X + 1$ or $f(X) = X^2+1$, particularly any quadratic number $\lambda \in \C\backslash \R$ different from $\frac{\pm 1 \pm \mathrm{i}\sqrt{3}}{2}$ or $\pm \mathrm{i}$ is not contained in $\sigma(A)$.
	\subitem \textit{Proof: } Let $\lambda \in \sigma(A)$ be a root of $X^2 + pX +q$ with $p^2-4q <0$. Then both $\lambda$ and $\overline{\lambda}$ belong to $\sigma (A)$ and both have the same modulus and multiplicity. Requiring property $\mathfrak{P}2$ yields $|\lambda| = |\overline{\lambda}| = 1$, i.e. $q = |\lambda \overline{\lambda}| = 1$. The condition $p^2-4q <0$ implies $p=0$ or $p= \pm 1$. Note that $X^2 \pm X + 1$, $f(X) = X^2+1$ are the only quadratic cyclotomic polynomials.
	
\end{itemize}

Let us put together the previous ideas and concepts to describe the minimal dimension of a cut-and-project scheme which allows one to construct a cut-and-project set with self-similarity $A\in\R^{n\times n}$ satisfying $\mathfrak{P}$.

\begin{thm}\label{thm:mindimensionCAPset}
	Let $A\in \R^{n\times n}$ be non-singular matrix diagonalizable over $\C$ with Properties $\mathfrak{P}$ from Theorem \ref{thm:npp_for_cps}. Denote by
$f_i\in \Z[X]$ the distinct monic polynomials irreducible over $\Q$ in the factorization
\[
\mu_{A, \Q}(X) = f_1(X) \cdots f_k(X)
\]
and denote by $d_i$ the degree of $f_i$.
Set
\[
M_i := \max \left\{m_A(\lambda) : \lambda \mbox{ is a root of }f_i \right\}.
\]
Define $s_i$ as follows:
	\begin{itemize}
		\item if $f_i$ is not a cyclotomic polynomial set $s_i = M_i d_i$,
		\item if $f_i$ is a cyclotomic polynomial set
$$
s_i = \begin{cases}
		M_i d_i &\mbox{if } M_i > \min \left\{m_A(\lambda) : \lambda \mbox{ is a root of }f_i \right\}, \\
		(M_i +1) d_i & \mbox{otherwise}.
		\end{cases}
$$
	\end{itemize}

Then $s = s_1 + \cdots + s_k$ is the minimal dimension of a lattice that allows one to construct a generic cut-and-project scheme $\Lambda=(\L\subset\R^s,\R^n)$ and a cut-and-project set $\S(\Om)$ with self-similarity $A$.
\end{thm}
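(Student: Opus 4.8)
The plan is to reduce the whole problem to a single irreducible factor, read off the scheme dimension from Theorem~\ref{t:mindimension}, and then show that the extra requirement coming from passing from a \emph{scheme} to a \emph{set}, namely $\varrho(B)\le 1$, does not enlarge the minimal dimension once Properties $\mathfrak{P}$ hold. First I would split $\sigma(A)$ into its algebraic conjugacy classes, one per factor $f_i$ of $\m{A}$. By Proposition~\ref{p:Anekonjug}, applied inductively, every generic scheme with self-similarity $A$ is equivalent to a direct sum of schemes whose self-similarities $A_i$ have all eigenvalues conjugate with common minimal polynomial $f_i$; conversely Lemmas~\ref{l:directsumgeneric} and~\ref{l:sklad} let me reassemble such pieces, and the dimension of a direct sum is the sum of the dimensions. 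Hence it suffices to prove that for each $i$ the minimal dimension of a scheme admitting a cut-and-project set with self-similarity $A_i$ equals $s_i$, and then put $s=\sum_i s_i$.

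Fix one factor $f=f_i$ of degree $d$, write $M=M_i$ and $m=\min\{m_A(\beta):f(\beta)=0\}$. For any generic scheme with self-similarity $A_i$, Proposition~\ref{p:minpol} gives $\chi_C=f^{p}$ with lattice dimension $pd$, and $m_A(\nu)+m_B(\nu)=m_C(\nu)=p$ for every root $\nu$ of $f$. Theorem~\ref{t:mindimension} already yields the minimal \emph{scheme} dimension: $Md$ when $m<M$ and $(M+1)d$ when $m=M$. The set requirement is governed by $\varrho(B)\le 1$, which is necessary by Corollary~\ref{c:dusl5} and, via Claim~\ref{claimB} together with Proposition~\ref{p:vlastnostiCPS}(i), sufficient for an admissible window to exist. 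I would then separate two cases. If $f$ is not cyclotomic, Kronecker's theorem~\cite{Kronecker} gives a root $\mu$ with $|\mu|>1$; by consequences (i) and (ii) of Subsection~\ref{sec:dusledky} one has $m_A(\mu)=M$ and a conjugate of strictly smaller multiplicity, so $m<M$ and the scheme minimum is $Md=s_i$. At this minimum, $p=M$ forces $m_B(\nu)=M-m_A(\nu)$, hence $\nu\in\sigma(B)$ implies $m_A(\nu)<M$, which by consequence (i) implies $|\nu|\le 1$; thus $\varrho(B)\le 1$ and a set exists already at dimension $Md$.

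If $f$ is cyclotomic, all its roots have modulus $1$, so every admissible $B$ satisfies $\varrho(B)=1\le 1$ and the window condition is automatic; the minimal set dimension therefore coincides with the minimal scheme dimension of Theorem~\ref{t:mindimension}, i.e.\ $Md$ if $m<M$ and $(M+1)d$ if $m=M$, matching $s_i$. For the lower bound in the equal-multiplicity subcase, $p=M$ would give $m_B\equiv 0$ on this class, contradicting Corollary~\ref{c:spektrumAB}, so $p\ge M+1$. Assembling the factors, the direct sum of the per-factor minimal schemes is generic by Lemma~\ref{l:directsumgeneric}, has self-similarity $A$ by Lemma~\ref{l:sklad}, and its block-diagonal $B=\bigoplus_i B_i$ still satisfies $\varrho(B)\le 1$, so Claim~\ref{claimB} and Proposition~\ref{p:vlastnostiCPS}(i) furnish a window and give the upper bound $s\le\sum_i s_i$. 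The matching lower bound follows since any set forces a generic scheme, which by Proposition~\ref{p:Anekonjug} decomposes into pieces of dimension at least $s_i$ each.

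The step I expect to be the main obstacle is the non-cyclotomic case: verifying that the dimension-minimal scheme of Theorem~\ref{t:mindimension} \emph{already} satisfies $\varrho(B)\le 1$, i.e.\ that under $\mathfrak{P}$ the surplus conjugates necessarily absorbed into $B$ are exactly those of modulus $\le 1$. This is the precise point where Properties $\mathfrak{P}$ (in the form of consequences (i)--(ii)) are used essentially, and it is what makes the set minimum coincide with the scheme minimum rather than strictly exceed it.
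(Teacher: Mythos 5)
Your proposal is correct and takes essentially the same route as the paper: the lower bound via Proposition~\ref{p:Anekonjug} together with Theorem~\ref{t:mindimension}, and the upper bound by verifying $\varrho(B)\le 1$ at the minimal per-class scheme dimension (automatic in the cyclotomic case, forced by Properties $\mathfrak{P}$ in the non-cyclotomic case) and then invoking Claim~\ref{claimB} and Proposition~\ref{p:vlastnostiCPS}(i). The only difference is presentational: the paper's two-line proof delegates the sufficiency argument to the constructive proof of Proposition~\ref{thm:vetaN}, whose case analysis you have simply inlined, correctly identifying that the crux is exactly the point you flag -- that under $\mathfrak{P}$ the eigenvalues absorbed into $B$ at dimension $M_i d_i$ all have modulus at most $1$.
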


\begin{proof}
From Proposition~\ref{p:Anekonjug} and Theorem~\ref{t:mindimension} it follows that the minimal dimension satisfies $s \geq s_1 + \cdots + s_k$. The constructive proof of Proposition~\ref{thm:vetaN} shows that $s = s_1 + \cdots + s_k$ is sufficient.
\end{proof}

%
%
%
%
\section{Comments}\label{sec:comments}

In this paper we have answered Question~\ref{q1} by providing necessary and sufficient conditions on the matrix $A\in\R^{n\times n}$ so as one can find a cut-and-project set without translational symmetry having $A$ as a self-similarity. This allowed us to give a generalization of the results found in~\cite{Lagarias} or~\cite{Aperiodic1} for the special cases where $A$ is a scaling or a rotation. We have described a construction of a suitable cut-and-project scheme which has minimal dimension possible.
Let us compare our results (Theorems~\ref{thm:npp_for_cps} and~\ref{thm:mindimensionCAPset}) to the previous results known for self-similarities $A$ in the form of scalings or rotations

If $A$ is a scaling by a real number $\eta>1$, i.e.\ $A=\eta I_n$, then Theorem~\ref{thm:npp_for_cps} implies that $\eta$ is an algebraic integer and all its algebraic conjugates are in modulus smaller than or equal to $1$. This in particular means that $\eta$ is a Pisot or a Salem number. For the dimension $s$ of any cut-and-project scheme allowing $A=\eta I_n$ as a self-similarity we can derive from Corollary~\ref{c:spektrumAB} that it is divisible by the degree $d$ of the minimal polynomial of $\eta$. Moreover, Theorem~\ref{thm:mindimensionCAPset} states that
the minimal possible dimension is equal to $s=nd$, since $m_A(\eta)=n$. These results correspond to Theorem 4.1 of Lagarias~\cite{Lagarias}. Note that his theorem is valid for a more general class of Delone sets; cut-and-project sets form a subclass.

Suppose now that $A\in\R^{n\times n}$ corresponds to an Euclidean transformation (rotation, reflection, or their combination), i.e.\ $A$ is an orthogonal matrix and its spectrum is contained on the unit circle. Any eigenvalue is a $r$th root of unity for some $r\in\N$. If $r=1,2,3$, or $6$, then it is well known that periodic lattices in $\R^2$ may have rotational symmetry of such order. 
Nevertheless, even a cut-and-project set without any translational symmetry can be invariant under such a rotation.
When considering planar cut-and-project sets with $r$-fold symmetry for $r\notin\{1,2,3,6\}$, then in the spectrum of $A$ we have only $\lambda,\overline{\lambda}=\lambda^{-1}$ where $\lambda$ is a primitive $r$-\text{th} root of unity, the multiplicity $m_A(\lambda)=1$, and thus the dimension $s$ of the scheme is divisible by $d=\phi(r)$, i.e.\ the degree of the $r$-th cyclotomic polynomial. Theorem~\ref{thm:mindimensionCAPset} states that $s=\phi(r)$ is possible. We thus recover the statement of Theorem 3.2 in~\cite{Aperiodic1}.

Solving Question~\ref{q1} we focused on non-singular matrices diagonalizable over $\C$. In this sense, the answer to Question~\ref{q1} is not exhaustive.
One may think of another direction to extend the study of this paper, namely when searching for a construction of a cut-and-project set closed under a set of linear mappings.
Such a construction would generalize the study of Pleasants~\cite{Pleas} who considered a finite group of isometries.

\section{Acknowledgements}

This work was supported by the project CZ.02.1.01/0.0/0.0/16\_019/0000778. We also acknowledge financial support of
the Grant Agency of the Czech Technical University in Prague, grant No.\ SGS17/193/OHK4/3T/14.


\begin{thebibliography}{21}
\providecommand{\natexlab}[1]{#1}
\providecommand{\url}[1]{\texttt{#1}}
\providecommand{\urlprefix}{URL }
\expandafter\ifx\csname urlstyle\endcsname\relax
  \providecommand{\doi}[1]{doi:\discretionary{}{}{}#1}\else
  \providecommand{\doi}[1]{doi:\discretionary{}{}{}\begingroup
  \urlstyle{rm}\url{#1}\endgroup}\fi
\providecommand{\bibinfo}[2]{#2}

\bibitem[{Baake and Grimm(2013)}]{Aperiodic1}
\bibinfo{author}{M.~Baake}, \bibinfo{author}{U.~Grimm},
  \bibinfo{title}{Aperiodic order. {V}ol. 1: A mathematical invitation, With a
  foreword by Roger Penrose}, vol. \bibinfo{volume}{149} of
  \emph{\bibinfo{series}{Encyclopedia of Mathematics and its Applications}},
  \bibinfo{publisher}{Cambridge University Press, Cambridge}, ISBN
  \bibinfo{isbn}{978-0-521-86991-1},
  \doi{\bibinfo{doi}{10.1017/CBO9781139025256}},
  \newline\urlprefix\url{http://dx.doi.org/10.1017/CBO9781139025256},
  \bibinfo{year}{2013}.

\bibitem[{Baake et~al.(1991)Baake, Joseph, and Schlottmann}]{BJS}
\bibinfo{author}{M.~Baake}, \bibinfo{author}{D.~Joseph},
  \bibinfo{author}{M.~Schlottmann}, \bibinfo{title}{The root lattice {$D_4$}
  and planar quasilattices with octagonal and dodecagonal symmetry},
  \bibinfo{journal}{Internat. J. Modern Phys. B}
  \bibinfo{volume}{5}~(\bibinfo{number}{11}) (\bibinfo{year}{1991})
  \bibinfo{pages}{1927--1953}, ISSN \bibinfo{issn}{0217-9792},
  \doi{\bibinfo{doi}{10.1142/S0217979291000754}},
  \newline\urlprefix\url{https://doi.org/10.1142/S0217979291000754}.

\bibitem[{Barache et~al.(1998)Barache, Champagne, and Gazeau}]{Barache}
\bibinfo{author}{D.~Barache}, \bibinfo{author}{B.~Champagne},
  \bibinfo{author}{J.-P. Gazeau}, \bibinfo{title}{Pisot-cyclotomic
  quasilattices and their symmetry semigroups}, in:
  \bibinfo{booktitle}{Quasicrystals and discrete geometry ({T}oronto, {ON},
  1995)}, vol.~\bibinfo{volume}{10} of \emph{\bibinfo{series}{Fields Inst.
  Monogr.}}, \bibinfo{publisher}{Amer. Math. Soc., Providence, RI},
  \bibinfo{pages}{15--66}, \bibinfo{year}{1998}.

\bibitem[{Berman and Moody(1994)}]{BermanMoody}
\bibinfo{author}{S.~Berman}, \bibinfo{author}{R.~V. Moody}, \bibinfo{title}{The
  algebraic theory of quasicrystals with five-fold symmetries},
  \bibinfo{journal}{J. Phys. A} \bibinfo{volume}{27}~(\bibinfo{number}{1})
  (\bibinfo{year}{1994}) \bibinfo{pages}{115--129}, ISSN
  \bibinfo{issn}{0305-4470},
  \newline\urlprefix\url{http://stacks.iop.org/0305-4470/27/115}.

\bibitem[{Bohr(1924)}]{Bohr}
\bibinfo{author}{H.~Bohr}, \bibinfo{title}{Zur {T}heorie der fastperiodischen
  {F}unktionen}, \bibinfo{journal}{Acta math.} \bibinfo{volume}{45}
  (\bibinfo{year}{1924}) \bibinfo{pages}{29--127}.

\bibitem[{Burd\'{i}k et~al.(1998)Burd\'{i}k, Frougny, Gazeau, and
  Krejcar}]{BuFrGaKr}
\bibinfo{author}{{\v{C}}.~Burd\'{i}k}, \bibinfo{author}{C.~Frougny},
  \bibinfo{author}{J.-P. Gazeau}, \bibinfo{author}{R.~Krejcar},
  \bibinfo{title}{Beta-integers as natural counting systems for quasicrystals},
  \bibinfo{journal}{J. Phys. A} \bibinfo{volume}{31}~(\bibinfo{number}{30})
  (\bibinfo{year}{1998}) \bibinfo{pages}{6449--6472}, ISSN
  \bibinfo{issn}{0305-4470}, \doi{\bibinfo{doi}{10.1088/0305-4470/31/30/011}},
  \newline\urlprefix\url{https://doi.org/10.1088/0305-4470/31/30/011}.

\bibitem[{Cotfas(1999)}]{Cotfas-Gmodelsetselfsimilarities}
\bibinfo{author}{N.~Cotfas}, \bibinfo{title}{{$G$}-model sets and their
  self-similarities}, \bibinfo{journal}{J. Phys. A}
  \bibinfo{volume}{32}~(\bibinfo{number}{46}) (\bibinfo{year}{1999})
  \bibinfo{pages}{8079--8093}, ISSN \bibinfo{issn}{0305-4470},
  \doi{\bibinfo{doi}{10.1088/0305-4470/32/46/307}},
  \newline\urlprefix\url{https://doi.org/10.1088/0305-4470/32/46/307}.

\bibitem[{Cotfas(2006)}]{Cotfas-clusters}
\bibinfo{author}{N.~Cotfas}, \bibinfo{title}{Discrete quasiperiodic sets with
  predefined local structure}, \bibinfo{journal}{J. Geom. Phys.}
  \bibinfo{volume}{56}~(\bibinfo{number}{12}) (\bibinfo{year}{2006})
  \bibinfo{pages}{2415--2428}, ISSN \bibinfo{issn}{0393-0440},
  \doi{\bibinfo{doi}{10.1016/j.geomphys.2005.12.008}},
  \newline\urlprefix\url{https://doi.org/10.1016/j.geomphys.2005.12.008}.

\bibitem[{Cullen(1972)}]{Cullen}
\bibinfo{author}{C.~G. Cullen}, \bibinfo{title}{Matrices and linear
  transformations}, \bibinfo{publisher}{Addison-Wesley Publishing Co., Reading,
  Mass.-London-Don Mills, Ont.}, \bibinfo{edition}{second} edn.,
  \bibinfo{year}{1972}.

\bibitem[{Fiedler(2008)}]{Fiedler}
\bibinfo{author}{M.~Fiedler}, \bibinfo{title}{Special matrices and their
  applications in numerical mathematics}, \bibinfo{publisher}{Dover
  Publications, Inc., Mineola, NY}, \bibinfo{edition}{second} edn.,
  \bibinfo{year}{2008}.

\bibitem[{Hiller(1985)}]{Hiller}
\bibinfo{author}{H.~Hiller}, \bibinfo{title}{The crystallographic restriction
  in higher dimensions}, \bibinfo{journal}{Acta Cryst. Sect. A}
  \bibinfo{volume}{41}~(\bibinfo{number}{6}) (\bibinfo{year}{1985})
  \bibinfo{pages}{541--544}, ISSN \bibinfo{issn}{0108-7673},
  \doi{\bibinfo{doi}{10.1107/S0108767385001180}},
  \newline\urlprefix\url{https://doi.org/10.1107/S0108767385001180}.

\bibitem[{Kramer and Neri(1984)}]{KramerNeri}
\bibinfo{author}{P.~Kramer}, \bibinfo{author}{R.~Neri}, \bibinfo{title}{On
  periodic and nonperiodic space fillings of {${\bf E}^m$} obtained by
  projection}, \bibinfo{journal}{Acta Cryst. Sect. A}
  \bibinfo{volume}{40}~(\bibinfo{number}{5}) (\bibinfo{year}{1984})
  \bibinfo{pages}{580--587}, ISSN \bibinfo{issn}{0108-7673}.

\bibitem[{Kronecker(1857)}]{Kronecker}
\bibinfo{author}{L.~Kronecker}, \bibinfo{title}{Zwei {S}\"{a}tze \"{u}ber
  {G}leichungen mit ganzzahligen {C}oefficienten}, \bibinfo{journal}{J. Reine
  Angew. Math.} \bibinfo{volume}{53} (\bibinfo{year}{1857})
  \bibinfo{pages}{173--175}, ISSN \bibinfo{issn}{0075-4102},
  \doi{\bibinfo{doi}{10.1515/crll.1857.53.173}},
  \newline\urlprefix\url{https://doi.org/10.1515/crll.1857.53.173}.

\bibitem[{Lagarias(1999)}]{Lagarias}
\bibinfo{author}{J.~C. Lagarias}, \bibinfo{title}{Geometric models for
  quasicrystals {I}. {D}elone sets of finite type}, \bibinfo{journal}{Discrete
  Comput. Geom.} \bibinfo{volume}{21}~(\bibinfo{number}{2})
  (\bibinfo{year}{1999}) \bibinfo{pages}{161--191}, ISSN
  \bibinfo{issn}{0179-5376}, \doi{\bibinfo{doi}{10.1007/PL00009413}},
  \newline\urlprefix\url{http://dx.doi.org/10.1007/PL00009413}.

\bibitem[{Lunnon and Pleasants(1992)}]{LunnonPleasants}
\bibinfo{author}{W.~F. Lunnon}, \bibinfo{author}{P.~A.~B. Pleasants},
  \bibinfo{title}{Characterization of two-distance sequences},
  \bibinfo{journal}{J. Austral. Math. Soc. Ser. A}
  \bibinfo{volume}{53}~(\bibinfo{number}{2}) (\bibinfo{year}{1992})
  \bibinfo{pages}{198--218}, ISSN \bibinfo{issn}{0263-6115}.


\bibitem[{Mas\'akov\'a et~al.(1998)Mas\'akov\'a, Patera, and
  Pelantov\'a}]{inflcenters}
\bibinfo{author}{Z.~Mas\'akov\'a}, \bibinfo{author}{J.~Patera},
  \bibinfo{author}{E.~Pelantov\'a}, \bibinfo{title}{Inflation centres of the
  cut and project quasicrystals}, \bibinfo{journal}{J. Phys. A}
  \bibinfo{volume}{31}~(\bibinfo{number}{5}) (\bibinfo{year}{1998})
  \bibinfo{pages}{1443--1453}, ISSN \bibinfo{issn}{0305-4470},
  \doi{\bibinfo{doi}{10.1088/0305-4470/31/5/013}},
  \newline\urlprefix\url{http://dx.doi.org/10.1088/0305-4470/31/5/013}.

\bibitem[{Moody(1997)}]{Moody}
\bibinfo{author}{R.~V. Moody}, 
  \bibinfo{title}{Meyer sets and their duals}, in:
  \bibinfo{booktitle}{The mathematics of long-range aperiodic order ({W}aterloo,
              {ON}, 1995)},
  vol.~\bibinfo{volume}{489} of \emph{\bibinfo{series}{NATO Adv. Sci. Inst. Ser. C Math. Phys. Sci. }},
  \bibinfo{publisher}{Kluwer Academic Publishers Group, Dordrecht},
  \bibinfo{pages}{403--441}, \bibinfo{year}{1997}.


\bibitem[{Moody and Patera(1993)}]{icosians}
\bibinfo{author}{R.~V. Moody}, \bibinfo{author}{J.~Patera},
  \bibinfo{title}{Quasicrystals and icosians}, \bibinfo{journal}{J. Phys. A}
  \bibinfo{volume}{26}~(\bibinfo{number}{12}) (\bibinfo{year}{1993})
  \bibinfo{pages}{2829--2853}, ISSN \bibinfo{issn}{0305-4470},
  \newline\urlprefix\url{http://stacks.iop.org/0305-4470/26/2829}.

\bibitem[{Niizeki(1989)}]{Niizeki}
\bibinfo{author}{K.~Niizeki}, \bibinfo{title}{Self-similarity of quasilattices
  in two dimensions. {I}.\ {T}he {$n$}-gonal quasilattice},
  \bibinfo{journal}{J. Phys. A} \bibinfo{volume}{22}~(\bibinfo{number}{2})
  (\bibinfo{year}{1989}) \bibinfo{pages}{193--204}, ISSN
  \bibinfo{issn}{0305-4470},
  \newline\urlprefix\url{http://stacks.iop.org/0305-4470/22/193}.

\bibitem[{Pleasants(2000)}]{Pleas}
\bibinfo{author}{P.~A.~B. Pleasants}, \bibinfo{title}{Designer quasicrystals:
  cut-and-project sets with pre-assigned properties}, in:
  \bibinfo{booktitle}{Directions in mathematical quasicrystals},
  vol.~\bibinfo{volume}{13} of \emph{\bibinfo{series}{CRM Monogr. Ser.}},
  \bibinfo{publisher}{Amer. Math. Soc., Providence, RI},
  \bibinfo{pages}{95--141}, \bibinfo{year}{2000}.

\bibitem[{Shechtman et~al.(1984)Shechtman, Blech, Gratias, and
  Cahn}]{Shechtman}
\bibinfo{author}{D.~Shechtman}, \bibinfo{author}{I.~A. Blech},
  \bibinfo{author}{D.~Gratias}, \bibinfo{author}{J.~W. Cahn},
  \bibinfo{title}{Metallic Phase with Long-Range Orientational Order and No
  Translational Symmetry}, \bibinfo{journal}{Phys. Rev. Lett.}
  \bibinfo{volume}{53}~(\bibinfo{number}{20}) (\bibinfo{year}{1984})
  \bibinfo{pages}{1951--1954}, ISSN \bibinfo{issn}{0031-9007},
  \doi{\bibinfo{doi}{10.1103/PhysRevLett.53.1951}},
  \newline\urlprefix\url{https://doi.org/10.1103/PhysRevLett.53.1951}.

\end{thebibliography}

\end{document}